\documentclass[12pt]{article}
\newif\ifHideComments

\HideCommentsfalse 

\usepackage[utf8]{inputenc}
\usepackage[T1]{fontenc}
\usepackage[english]{babel}
\usepackage{microtype}
\usepackage{geometry}
\usepackage{setspace}
\usepackage{graphicx}
\usepackage{caption}
\usepackage{subcaption}
\usepackage{booktabs}
\usepackage[flushleft]{threeparttable}
\usepackage{tabularx}
\usepackage{ragged2e}
\usepackage{multirow}
\usepackage{rotating}
\usepackage{float}
\usepackage{thmtools}
\usepackage{amsthm}

\usepackage{amsmath,amssymb,amsthm}
\usepackage{dsfont}
\usepackage{bm}
\usepackage{optidef}
\usepackage[backref=page]{hyperref}
\usepackage[open,openlevel=2,atend,numbered,level=4]{bookmark}
\usepackage[nameinlink]{cleveref}
\usepackage[authoryear]{natbib}
\usepackage{enumitem}
\usepackage[dvipsnames]{xcolor}
\usepackage{tikz}
\usepackage{titling}
\usepackage{dirtytalk}
\usepackage{thm-restate}

\definecolor{deepblue}{RGB}{0,0,110}
\definecolor{darkgreen}{rgb}{0.0,0.5,0.0}

\renewcommand*{\backrefalt}[4]{%
    \ifcase #1
        No citation in the text.
    \else
        [#2]
    \fi}
\hypersetup{
    pdftitle={Allocating Students to Schools: Theory, Methods, and Empirical Insights},
    pdfauthor={Yeon-Koo Che, Julien Grenet, YingHua He},
    colorlinks=true,
    linkcolor={deepblue},
    citecolor={deepblue},
    urlcolor={deepblue},
    hypertexnames=false,
    pdfpagemode=UseNone,
    pdfdisplaydoctitle=true
}

\setlist[enumerate]{leftmargin=1.5cm,rightmargin=0.5cm,noitemsep, topsep=2pt}

\appto\TPTdoTablenotes{\labelsep0.0em\footnotesize}

\allowdisplaybreaks

\makeatletter
\newtheoremstyle{bluehead}%
  {}{}                
  {\itshape}          
  {}                  
  {\color{deepblue}\bfseries} 
  {.}                 
  { }                 
  {%
    \thmname{#1}\thmnumber{ #2}\thmnote{ \textbf{(#3)}}%
  }                   
\makeatother

\theoremstyle{bluehead}

\newtheorem{theorem}{Theorem}

\newtheorem{lemma}{Lemma}
\newtheorem{proposition}{Proposition}
\newtheorem{corollary}{Corollary}
\crefname{equation}{}{}
\crefname{theorem}{Theorem}{Theorems}
\crefname{figure}{Figure}{Learning theory & Prior set/Figures}
\crefname{table}{Table}{Tables}
\crefname{section}{Section}{Sections}

\def\half{\frac{1}{2}}

\newcommand{\E}{\mathbb{E}}

\ifHideComments
    
    \newcommand{\jg}[1]{} 
    \newcommand{\yk}[1]{}
    \newcommand{\ml}[1]{}

\else
    
    \newcommand{\jg}[1]{\textcolor{red}{[JG: #1]}} 
    \newcommand{\yk}[1]{\textcolor{blue}{[YK: #1]}}
    \newcommand{\ml}[1]{\textcolor{red}{[ML: #1]}}

\fi

\thanksmarkseries{arabic}

\makeatletter
\renewcommand*{\@fnsymbol}[1]{\ifcase#1\or *\else\@arabic{#1}\fi}
\makeatother

 \DeclareMathAlphabet{\pazocal}{OMS}{zplm}{m}{n}

\definecolor{clemson-orange}{RGB}{234,106,32}
\definecolor{chicago-maroon}{RGB}{128,0,0}
\definecolor{northwestern-purple}{RGB}{82,0,99}
\definecolor{cornell-red}{RGB}{179,27,27}
\definecolor{sauder-green}{RGB}{171,180,0}
\definecolor{gray}{RGB}{192,192,192}
\definecolor{lawngreen}{RGB}{0,250,154}
 
\usepackage{fullpage}

\usepackage[nameinlink]{cleveref}
\usepackage{graphicx}

\providecommand{\U}[1]{\protect\rule{.1in}{.1in}} \textheight 8.2in
\def\t{\theta}
\def\E{\mathbb{E}}

\newcommand{\Prb}{\Pr}
\newcommand{\tstate}{\t}
\newcommand{\pitrue}{\pi_{\text{true}}}
\theoremstyle{definition}

\newtheorem*{algorithm*}{Greedy Algorithm}

\definecolor{darkgreen}{rgb}{0.0, 0.5, 0.0}
\definecolor{brightpink}{rgb}{1.0, 0.0, 0.5}
\definecolor{brightgreen}{rgb}{0.4, 1.0, 0.0}
\definecolor{shockingpink}{rgb}{0.99, 0.06, 0.75}
\definecolor{persiangreen}{rgb}{0.0, 0.65, 0.58}

\def\bq{\begin{equation}}
\def\eq{\end{equation}}
\def\ba{\begin{eqnarray}}
\def\ea{\end{eqnarray}}
\def\bas{\begin{eqnarray*}}
\def\eas{\end{eqnarray*}}

\def\E{\mathbb{E}}

\def\t{\theta}

\def\U{\Upsilon}

\makeatletter
\def\mathcenterto#1#2{\mathclap{\phantom{#1}\mathclap{#2}}\phantom{#1}}
\let\old@widetilde\widetilde
\def\widetildeto#1#2{\mathcenterto{#2}{\old@widetilde{\mathcenterto{#1}{#2\,}}}}
\let\old@widehat\widehat
\def\widehatto#1#2{\mathcenterto{#2}{\old@widehat{\mathcenterto{#1}{#2\,}}}}
\makeatother

\definecolor{clemson-orange}{RGB}{234,106,32}
\definecolor{chicago-maroon}{RGB}{128,0,0}
\definecolor{northwestern-purple}{RGB}{82,0,99}
\definecolor{cornell-red}{RGB}{179,27,27}
\definecolor{sauder-green}{RGB}{171,180,0}
\definecolor{gray}{RGB}{192,192,192}
\definecolor{lawngreen}{RGB}{0,250,154}

\def\bp{\bm{\pi}}

\allowdisplaybreaks

\title{Learning Against Nature: Minimax Regret and the Price of Robustness}

\author{
    Yeon-Koo Che\thanks{Department of Economics, Columbia University, USA. Email: \href{mailto:yeonkooche@gmail.com}{\texttt{yeonkooche@gmail.com}}.}\qquad
    Longjian Li\thanks{Department of Economics, New York University, USA. Email: \href{mailto:ll4830@nyu.edu}{\texttt{ll4830@nyu.edu}}.}\qquad
    Tianling Luo\thanks{Department of Economics, Columbia University, USA. Email: \href{mailto:tl3078@columbia.edu}{\texttt{tl3078@columbia.edu}}.}
}

\date{\today}

\begin{document}

\setstretch{1.2} 

\thispagestyle{empty}

\maketitle

\begin{abstract}\singlespacing%
\noindent  
We study how a decision-maker (DM) learns from data of unknown quality to form robust, ``general-purpose'' posterior beliefs. We develop a framework for robust learning and belief formation under a minimax-regret criterion, cast as a zero-sum game: the DM chooses posterior beliefs to minimize ex-ante regret, while an adversarial Nature selects the data-generating process (DGP). We show that, in large samples of $n$ signal draws, Nature optimally induces ambiguity by choosing a process whose precision converges to the uninformative signals at the rate $1/\sqrt{n}$. As a result, learning against the adversarial DGP is nontrivial as well as {\it incomplete}: the DM's ex-ante regret remains strictly positive even with an infinite amount of data. However, {\it when the true DGP is fixed and informative (even if only slightly)}, our DM with a robust updating rule eventually learns the state with enough data.  Still, learning occurs at a sub-exponential rate---quantifying the asymptotic price of robustness---and it exhibits ``under-inference'' bias.  Our framework provides a {\it decision-theoretic  dual} to the {\it local alternatives method} in asymptotic statistics, deriving the characteristic $1/\sqrt{n}$-scaling endogenously from the signal ambiguity.
\bigskip

\noindent 
\textbf{JEL Classification Numbers}: $D81$, $D83$, $D84$, $D90$, $C44$\\[0.3ex]
\textbf{Keywords:} Data with unknown precision, Robust learning, Regret minimization.
\end{abstract}




\section{Introduction}  

Learning is fundamentally a process of interpreting  data to infer an unknown state of interest. The standard economic approach is Bayesian, assuming a decision-maker (DM) who utilizes an unambiguous model of the data-generating process (DGP). In practice, however, decision-makers often face ambiguity regarding the true DGP, which complicates how they draw inferences from observed data. This paper proposes a model of how a DM may learn robustly---i.e., form posterior beliefs---when the process generating her data is uncertain.

Specifically, we consider a decision-maker (DM) facing an unknown binary state $\theta \in \{0, 1\}$. To learn about the state, she observes a sample of $n$ conditionally i.i.d. signals. The core friction is that the DM faces ambiguity regarding the data-generating process (DGP)—she knows the signals carry information but does not know their precise quality.  The DM resolves this ambiguity using a {\it minimax regret} criterion: she selects a posterior belief $a \in [0,1]$ to minimize the maximum expected regret—--the loss resulting from the divergence between her belief and the ``Oracle'' belief she would have held had she known the true DGP.\footnote{\label{fn:maxmin}  We favor minimax regret over a maxmin utility criterion to avoid the triviality of the latter. Under a maxmin criterion, an adversarial Nature would select a completely uninformative DGP, leading the DM to simply ignore the data. Minimax regret, by contrast, captures the desire to guard against the ignorance of the DGP; it penalizes the DM for ignoring data that could have been significantly informative, thus justifying a meaningful engagement with the signal.}

Previous literature typically examines the worst-case belief for a particular decision problem after observing a specific signal realization. However, under standard linear preferences, these ``interim'' or decision-specific approaches yield beliefs that depend sensitively on the specific payoff structure and are often extreme—assigning all probability to a single state except for knife-edge signal realizations. We depart from this tradition by considering an \textit{ex-ante} rule that the DM uses consistently to interpret any signal realization across a spectrum of potential decision problems. This approach is motivated both by {\it cognitive economy}:  it may be cognitively burdensome to form a new interpretive model for every task one encounters, and by a {\it consistency requirement}: while Nature is adversarial, it is a single entity that cannot strategically vary the underlying DGP across different decision contexts and signal realizations.

Following the logic of \cite{savage1971elicitation} and \cite{schervish1989general}, we argue that the regret for such a DM can be characterized by an expected Bregman divergence. Qualitatively, this divergence represents the distance between the DM's belief and the Oracle's belief. This formulation induces a strictly convex ex-ante regret function that is general enough to encompass standard statistical distances such as Mean Squared Error and Kullback-Leibler divergence. Crucially, this convexity ensures that the optimal robust rule is characterized by stable, non-degenerate beliefs. By aggregating over a continuum of potential tasks, the DM arrives at a rule that provides a stable interpretation of data, avoiding the all-or-nothing pathologies inherent in decision-dependent models.

We utilize the minimax theorem to recast this problem as a zero-sum game between the DM and an adversarial Nature. To build intuition, we provide a precise characterization of the worst-case DGP for the specialized case of binary signals and MSE regret. In the large-sample limit, these DGPs converge in log-odds ratio to (mixed) Gaussian limit experiments. Nature's optimal strategy is to randomize between a totally uninformative signal and a weakly informative signal that approaches the uninformative boundary at the rate of $1/\sqrt{n}$.

Under this ``strategic scaling,'' the uninformative signal converges (again in log odds ratio) to a standard normal distribution $\mathcal{N}(0, 1)$. The informative signal converges to $\mathcal{N}(2c, 1)$ if $\theta=1$ and $\mathcal{N}(-2c, 1)$ if $\theta=0$ for an optimally chosen constant $c > 0$. Nature chooses $c$ and the randomization weight to maximize the DM's ambiguity, ensuring the learning problem remains non-trivial even as $n \to \infty$.  The DM then forms her posterior belief over the resulting mixture-normal distribution.

We next show that this $1/\sqrt{n}$ rate is not an artifact of binary signals or MSE. Our General Learning Result proves that for any arbitrary signal space and any Bregman divergence, Nature must collapse to the uninformative experiment at exactly the $1/\sqrt{n}$ rate to sustain a non-trivial equilibrium. This provides a rigorous decision-theoretic interpretation for the ``local alternatives'' framework in statistics.

As an endogenous outcome of this game, guaranteed learning against the adversarial Nature is incomplete. Even with infinite data, the DM’s ex-ante regret remains strictly positive against the worst-case DGPs.  At the same time, if the true DGP is fixed and informative (even if only slightly), then our DM eventually learns the state with enough data.  However, her error vanishes only at a sub-exponential rate, specifically $\Theta(e^{-C\sqrt{n}})$. This gap---relative to the exponential rate of an Oracle---quantifies the asymptotic price of robustness. Behaviorally, our DM may misinterpret data in either direction for a finite $n$.  However, for sufficiently large $n$, our DM always underinfers from the data. Because she must guard against a worst-case scenario in which Nature has degraded the signal's precision to nearly uninformative levels, she regards the observed evidence as systematically weaker than its true quality. Essentially, the robust DM remains skeptical that the signal is truly informative, leading to an ``asymptotic caution'' that slows the learning process significantly compared to the Bayesian benchmark.


Beyond characterizing robust beliefs, our findings offer some insight for asymptotic statistics. The $1/\sqrt{n}$ scaling---often employed as a technical device in asymptotic statistics---emerges here as a fundamental equilibrium property. In standard models with fixed signal precision, the Law of Large Numbers eventually renders global uncertainty trivial. To maintain a non-degenerate analysis in large samples, researchers typically ``zoom in''  on local alternatives that are microscopic distances apart.  

In our framework, the non-vanishing uncertainty is not an assumption, but a consequence of strategic ambiguity. Nature degrades the signal precision at the $1/\sqrt{n}$ rate specifically to counteract the growing sample size, ensuring that the ``fog'' of the limit experiment never lifts. This provides a decision-theoretic dual to the local alternatives method: rather than scaling the state space to match the data, we show that an adversarial Nature scales the precision of the data to match the sample size.

\section{Model}\label{sec:model}
\subsection{Primitives}
We consider a single-agent learning problem regarding a binary state of the world, $\theta \in \{0, 1\}$. The decision-maker (DM) holds a general prior $\mu = \Pr(\theta=1) \in (0,1)$. To learn about the state, the DM observes data generated by an \emph{experiment} (or Data-Generating Process (DGP)), denoted by $\bp = (\pi_1, \pi_0)$. Here, $\pi_\theta \in \Delta(\mathcal{S})$ specifies the probability distribution over a finite set $\mathcal{S}$ of signal realizations $s$  conditional on the state $\theta$.

The DM observes a vector of signal counts $K = (K_s)_{s \in \mathcal{S}}$ generated from $n$ i.i.d. trials of the experiment. The central friction of the model is that the DM faces \emph{ambiguity} about the true data-generating process. She knows only that $\bp$ belongs to a set of feasible experiments $\Pi \subseteq \Delta(\mathcal{S})^{\{0,1\}}$. We assume $\Pi$ is compact and contains a unique uninformative experiment $\pi^0$ (where $\pi_1 = \pi_0$). Because the exact properties of the signal are unknown, the DM cannot simply apply Bayes' rule to form a standard posterior.

\subsection{The Robust Learning Problem}

The DM's objective is to form a {\bf belief-formation rule} that maps each signal count vector $K$ she may observe to the posterior belief---an ``action''---$a(K) \in [0,1]$, representing the probability that $\theta=1$. The DM evaluates the quality of this belief-formation rule by comparing it against a benchmark rule: an ``Oracle'' who knows the true experiment $\bp$ and forms the correct Bayesian posterior $q_{\text{oracle}}(K; \bp)$ for each $K$. The DM then chooses the rule that minimizes her regret relative to this benchmark in the worst-case DGP.

Formally, let $R(q_{\text{oracle}}, a)$ denote the ex-ante regret (defined in the sequel) of choosing beliefs $a$ when the true rational beliefs are $q_{\text{oracle}}$. The DM solves the following minimax problem:\footnote{Recall \Cref{fn:maxmin} for the explanation on why we have adopted minimax regret criteria over maxmin utility criteria. }
\begin{equation*}
    \min_{a(\cdot)} \max_{\sigma_n \in \Delta(\Pi)} \mathbb{E}_{\sigma_n, K} \left[ R\left(q_{\text{oracle}}(K; \bp), a(K)\right) \right]. \eqno{[P]}
\end{equation*}
Here, $\sigma_n$ represents a distribution over experiments chosen by an adversarial Nature.

 It is worth noting that the DM chooses the beliefs---or precisely, a belief-formation rule---before she observes the realized signal count vector $K$.  This ex-ante perspective reflects a DM who seeks a consistent interpretation of the data-generating process across all possible sample realizations. By committing to a rule $a(K)$ before $K$ is observed, the DM ensures her inferences are robust to a fixed, albeit unknown, DGP. In contrast, an ex-post approach---where a worst-case DGP is selected separately for each realized $K$---would allow Nature to adapt the signal quality to the observed data, a scenario that would render the sample size $n$ irrelevant to the minimax regret.\footnote{If the DM were to minimize regret ex-post for each realized $K$, the worst-case DGP would be tailored to the specific signal count, resulting in a constant minimax regret that does not fall as $n$ grows. Our ex-ante formulation captures the more natural setting where the quality of the data source is uncertain but stable.}  

 In the sequel, we analyze $[P]$ by framing it as a zero-sum game between the DM and Nature.  By the well-known minimax theorem (for example, see \cite{Osborne1994} Proposition 22.2), the equilibrium of this game---called a saddle point---corresponds to the optimal solution to $[P]$.

\subsection{Payoff and Regret}

To operationalize the problem $[P]$, we must define the payoff for holding a belief.  To this end, we simply require a payoff function $S(a, \theta)$ that rationalizes the Bayesian posterior as the optimal choice. Specifically, if the DM held a hypothetical true belief $p$, her optimal action $a$ must be $a=p$.   

As characterized by \cite{savage1971elicitation}, a function $S$ satisfies this property---known as a \textbf{strictly proper scoring rule}\footnote{The context of \cite{savage1971elicitation} was to devise a payment function that elicits truthful beliefs from a agent---hence the name.}---if and only if there exists a  strictly convex function $G: [0,1] \to \mathbb{R}$ such that
\[ \mathbb{E}_p[S(a, \theta)] = G(a) + (p-a)G'(a).\]

\cite{schervish1989general} microfounds this formulation and interprets the generating function $G$.  He shows that any proper scoring rule is equivalent to the expected utility derived from a \emph{portfolio of future binary decision problems} (e.g., whether to invest, whether to treat a patient) that depend on the state $\theta$. The DM does not know which specific decision she will face, but holds a measure over the set of possible cost thresholds.  Each threshold represents a specific trade-off between the costs of Type I errors (false positives) and Type II errors (false negatives) for a given task. The ``weights'' the DM places on these future problems are captured by the curvature of $G$. Specifically, the second derivative $G''(p)$ represents the density of decision problems where the indifference threshold is near $p$. (See  \Cref{app:schervish_microfoundation} for the formal weighting representation.)

We define the \emph{regret} as the difference between the expected utility of the Oracle (who acts optimally using $q_{\text{oracle}}$) and the expected utility of the DM (who acts using belief $a$). Using Savage's representation, this difference simplifies to the \emph{Bregman Divergence} generated by $G$:
\begin{align*}
    R(q_{\text{oracle}}, a) &= \mathbb{E}_{q_{\text{oracle}}}[S(q_{\text{oracle}},\theta)] - \mathbb{E}_{q_{\text{oracle}}}[S(a, \theta)] \\
   &= G(q_{\text{oracle}})-G(a)-(q_{\text{oracle}}-a)G'(a)\\
   &=: B_G(q_{\text{oracle}} \parallel a).
\end{align*}

\subsection{Examples of Bregman Divergence}\label{subsec:example}

The DM's sensitivity to error is defined by the choice of $G$. We consider two leading examples:

\paragraph{Quadratic Loss (Mean-Squared Error).}
Let $G(p) = p^2$. The curvature is constant, $G''(p) = 2$. This implies the DM weighs errors equally across the entire probability spectrum, anticipating a uniform distribution of future decision thresholds. The resulting regret is the standard Mean-Squared Error:
\[ B_G(p \parallel a) = (p-a)^2. \]

\paragraph{Logarithmic Loss (KL Divergence).}
Let $G(p) = p \log p + (1-p) \log(1-p)$ (negative entropy). The curvature is $G''(p) = \frac{1}{p(1-p)}$, which explodes as $p$ approaches $0$ or $1$. This indicates the DM is extremely averse to errors when holding confident beliefs (near the boundaries). The resulting regret is the Kullback-Leibler divergence:
\[ B_G(p \parallel a) = D_{KL}(p \parallel a). \]
The minimax problem $[P]$ becomes:
\begin{equation*}
\min_{a(\cdot)} \max_{\sigma_n \in \Delta(\Pi)} \mathbb{E}_{\sigma_n, K} \left[ B_G\left(q_{\text{oracle}}(K; \bp) \parallel a(K)\right) \right]. \eqno{[P]}
\end{equation*}

While we allow Nature to use mixed strategies to create ambiguity, we can restrict the Decision Maker to pure strategies without loss of generality. Since the generating function $G$ is strictly convex, the resulting Bregman divergence $B_G(p \parallel a)$ is strictly convex with respect to the action $a$. Consequently, the ex-ante regret—which is a convex combination of these divergences—is strictly convex in $a$. Therefore, the optimal strategy $a^*$ is unique and deterministic.

\section{Special Case:  MSE with Binary Signals}
We begin with the special case of \emph{binary signals} and  \emph{MSE regret} (corresponding to the quadratic loss utility). This specification allows for precise closed-form characterization, which yields clear intuition and behavioral implications. We later  extend our results to a more general setting with arbitrary signal spaces and any strictly proper scoring rule.

We consider the DM holding a uniform prior $\mu = 1/2$ over the payoff-relevant state $\theta \in \{0,1\}$. She observes a sequence of $n$ i.i.d. binary signals  $s \in \mathcal{S}=\{h,l\}$, generated by a DGP indexed by its precision $\pi$. We restrict attention to symmetric DGPs where $\pi := \Pr(s=h|\theta=1) = \Pr(s=l|\theta=0)$, which is without loss of generality under the uniform prior. The DM is ambiguous about the DGP, knowing only that the precision $\pi$ lies in $\Pi = [1/2, 1]$. The lower bound $\pi \ge 1/2$ implies that signals are weakly informative and have intrinsic meaning (i.e., signal $h$ is evidence for state $1$), which rules out perverse cases where signals are systematically misleading.

Upon observing the signals, the DM chooses an action $a \in [0,1]$ to maximize the expected quadratic utility $u(a, \theta) = -(a-\theta)^2$. In our general framework (Section 2.3), this utility corresponds to the scoring rule generated by $G(p)=p^2$, resulting in the standard MSE regret. Because the signals are binary and exchangeable, the count of $h$ signals $K \equiv \sum_i \mathbf{1}_{\{s_i = h\}}$ is a sufficient statistic for the sample. We denote by $k$ the realization of the random variable $K$.
Consequently, we can represent the DM's strategy as a vector $\bm{a}_n=(a_0, \dots, a_n)$, where $a_k$ denotes the action taken (or belief formed) after observing $k$ high signals.

When Nature selects precision $\pi$ and the DM chooses $\bm a_n$, the ex-ante regret is the weighted sum of the mean squared errors across all possible realizations of $k$:
\begin{equation*}
R(\bm{a}_n,\pi)= \sum_{k=0}^n \Pr(k|\pi)(\Pr(\theta=1|\pi,k)-a_k)^2,
\end{equation*}
where $\Pr(k|\pi) = \frac{1}{2}\binom{n}{k} [\pi^k(1-\pi)^{n-k} + (1-\pi)^k\pi^{n-k}]$ is the marginal probability of observing $k$ high signals given the DGP $\pi$, and $\Pr(\theta=1|\pi,k)$ is the Oracle's posterior when $\pi$ is known.\footnote{We are abusing notation here by writing $R(\bm{a},\pi)$ as the ex-ante regret when Nature chooses $\pi$ and DM plays $\bm a$. This is different from the $R\left(q_{\text{oracle}}(K; \bp), a(K)\right)$ in the last section, which is the interim regret that depends on the realization of $K$.}

Since Nature may randomize over DGPs via a mixed strategy $\sigma_n$, the minimax problem is:
\begin{equation*}
\min_{\bm a_n} \max_{\sigma_n \in \Delta(\Pi)} \mathbb{E}_{\sigma_n} \left[\sum_{k=0}^n \Pr(k|\pi)(\Pr(\theta=1|\pi,k)-a_k^2 \right].
\end{equation*}

\subsection{Learning from Finite Sample}
\subsubsection{Learning from a Single Signal Draw}

We start by analyzing the case where the DM receives only one signal draw. 
\begin{proposition}\label{prop1}
When $n=1$, in the equilibrium Nature chooses $\pi$ that randomizes between $\frac{1}{2}$ and $1$ with equal probability. The DM chooses $(a_0,a_1)=(\frac{1}{4},\frac{3}{4})$.
\end{proposition}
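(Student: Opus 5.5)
The plan is to exhibit an explicit saddle point of the zero-sum game and verify the two best-response conditions directly. Uniqueness will then follow from strict convexity in the DM's action and from Nature's indifference structure. With $n=1$ and the uniform prior, $\Pr(k=1\mid\pi)=\Pr(k=0\mid\pi)=\tfrac12$ for every $\pi$. The Oracle's posteriors are $\Pr(\theta=1\mid\pi,1)=\pi$ and $\Pr(\theta=1\mid\pi,0)=1-\pi$. Hence the ex-ante regret has the simple form
\[
R(\bm a_1,\pi)=\tfrac12(\pi-a_1)^2+\tfrac12(1-\pi-a_0)^2 .
\]
The key feature is that the marginal distribution of $k$ does not depend on $\pi$. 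Therefore, against any mixed strategy $\sigma_1$, the DM's best response is the pointwise conditional mean: $a_1=\mathbb{E}_{\sigma_1}[\pi]$ and $a_0=1-\mathbb{E}_{\sigma_1}[\pi]$.

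The first step is the DM's best response. If Nature mixes equally between $\pi=\tfrac12$ and $\pi=1$, then $\mathbb{E}_{\sigma_1}[\pi]=\tfrac34$, so the unique best response is $(a_0,a_1)=(\tfrac14,\tfrac34)$. The second step is Nature's best response. Plugging $(\tfrac14,\tfrac34)$ into the regret gives
\[
R=\tfrac12(\pi-\tfrac34)^2+\tfrac12(\tfrac34-\pi)^2=(\pi-\tfrac34)^2 .
\]
This is a convex parabola on $[\tfrac12,1]$, maximized exactly at the two endpoints, each with value $\tfrac1{16}$. So any mixture supported on $\{\tfrac12,1\}$, in particular the equal mixture, is a best response. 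Together these two steps establish a saddle point with value $\tfrac1{16}$. By the minimax theorem invoked in \Cref{sec:model}, this solves $[P]$.

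The final step, and the only mildly delicate one, is uniqueness. For the DM, take any $\bm a_1$. Then
\[
\max_\pi R(\bm a_1,\pi)\ \ge\ \tfrac12R(\bm a_1,\tfrac12)+\tfrac12R(\bm a_1,1).
\]
The right-hand side is strictly convex in $\bm a_1$ and uniquely minimized at $(\tfrac14,\tfrac34)$, with value $\tfrac1{16}$. Hence $(\tfrac14,\tfrac34)$ is the unique minimax strategy. For Nature, any equilibrium $\sigma_1$ must be a best response to the DM's unique equilibrium action, so it is supported on $\{\tfrac12,1\}$. Also, $(\tfrac14,\tfrac34)$ must be a best response to $\sigma_1$, which by the conditional-mean characterization forces $\mathbb{E}_{\sigma_1}[\pi]=\tfrac34$. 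With support $\{\tfrac12,1\}$, this pins the weights at $\tfrac12$ each. I expect no real obstacle; the main care is in making this interchangeability argument for Nature's uniqueness explicit.
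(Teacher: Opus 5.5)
Your proposal is correct and follows the paper's guess-and-verify argument. The DM's best response to the equal mixture on $\{\tfrac12,1\}$ is $(\tfrac14,\tfrac34)$, and against that action Nature's regret $(\pi-\tfrac34)^2$ attains its maximum $\tfrac1{16}$ exactly at both endpoints; the paper phrases this step as convexity of $R(\bm a,\pi)$ in $\pi$ plus the indifference $R(\bm a^*,1)=R(\bm a^*,\tfrac12)$. Your uniqueness step goes beyond what the paper writes out, and it is valid: you use the lower bound $\max_\pi R\ge\tfrac12R(\cdot,\tfrac12)+\tfrac12R(\cdot,1)$ for the DM, and for Nature the support restriction together with $\mathbb{E}_{\sigma_1}[\pi]=\tfrac34$.
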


\begin{proof}
See appendix \Cref{pf:prop1}.
\end{proof}

\begin{figure}[htbp]
    \centering
    \includegraphics[width=0.6\linewidth]{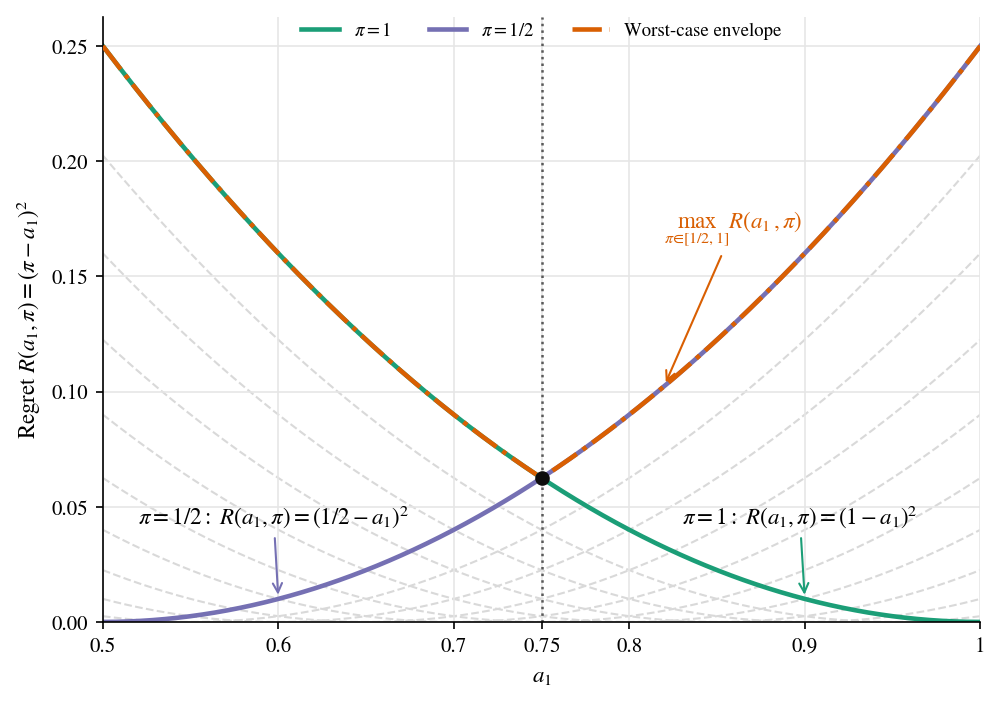}
    \caption{Illustration for $n=1$ case} 
    \label{fig1}
\end{figure}

\Cref{fig1} illustrates this solution. The horizontal axis is $a_1$ (by symmetry, $a_0 = 1-a_1$), and the vertical axis is regret.  For each signal precision $\pi \in [\tfrac12,1]$, the regret function $(\pi-a_1)^2$ traces out a curve over $a_1$; each curve corresponds to a different value of $\pi$. The pointwise upper envelope of these curves is attained by the two extreme cases, $(1-a_1)^2$ and $(\tfrac12-a_1)^2$, corresponding to $\pi=1$ and $\pi=\tfrac12$, respectively. For any $\pi \in (\tfrac12,1)$, the curve $(\pi-a_1)^2$ lies strictly below this upper envelope for all $a_1$, so every such $\pi$ is strictly dominated by either $\pi=\tfrac12$ or $\pi=1$.  

Clearly, Nature can't choose a pure strategy in equilibrium.  If Nature chooses a pure strategy, it must be either $\pi=1/2$ or $\pi=1$, given the above observation.  Suppose Nature chooses a fully revealing DGP ($\pi=1$).  The DM  then responds with a belief $a_1=1$, eliminating regret. However, Nature then deviates to a fully uninformative DGP ($\pi=1/2$), again maximizing the regret; in response, the DM switches to a belief $a_1=1/2$, which in turn triggers Nature to deviate to $\pi=1$.  In equilibrium, Nature randomizes over two DGP's with equal probability, and the DM chooses a Bayesian posterior $a_1=\tfrac34$ against these adversarial DGPs. Consistent with the minimax theorem, this saddle point coincides with a minimax solution:  $a_1=\tfrac34$ minimizes the upper envelope depicted in  \Cref{fig1}.

\subsubsection{Learning from Multiple Signal Draws}

We now extend the analysis to the case where the DM can observe $n>1$ signal draws. As the sample size increases, Nature's optimal strategy changes: rather than randomizing between the extremes ($1/2$ and $1$), Nature begins to place probability mass on a specific ``interior'' informative signal quality.
\begin{theorem}\label{thm1}
We have following statement:
\begin{enumerate}
\item 
When $n=2$, in the equilibrium Nature chooses $\sigma$ that randomizes between $\frac{1}{2}$ and $1$ with probability $2-\sqrt{2}$ and $\sqrt{2}-1$, respectively. 
\item 
When $n=3$, in the equilibrium Nature chooses $\sigma$ that randomizes between $\frac{1}{2}$ and $\pi^*$ for some $\pi^*\in(\frac{1}{2},1)$. 
\end{enumerate}
\end{theorem}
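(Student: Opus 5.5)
The plan is to verify a saddle point directly. I will exploit the symmetry of the problem (uniform prior, symmetric DGPs), which lets me restrict the DM to symmetric rules $a_{n-k}=1-a_k$. Existence of a saddle point is guaranteed by the minimax theorem invoked after $[P]$: $\Pi=[1/2,1]$ is compact, the regret is continuous in $\pi$ and strictly convex in $\bm a_n$, and so the DM's equilibrium rule is unique. It therefore suffices, for each $n$, to exhibit $(\sigma,\bm a_n)$ with two properties. First, $\bm a_n$ is the Bayesian best response to $\sigma$, i.e.\ $a_k=\mathbb{E}_\sigma[\Pr(k|\pi)\Pr(\theta=1|\pi,k)]/\mathbb{E}_\sigma[\Pr(k|\pi)]$. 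Second, every $\pi$ in the support of $\sigma$ maximizes $R(\bm a_n,\cdot)$ over $[1/2,1]$.

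\emph{Case $n=2$.} At $k=1$ the Oracle posterior is $1/2$ for every $\pi$, so $a_1=1/2$ and $a_0=1-a_2$. Write $D(\pi)=\pi^2+(1-\pi)^2$ and $a=a_2$. Since $\Pr(k=2|\pi)=D/2$ and $\Pr(\theta=1|\pi,2)=\pi^2/D$, the regret collapses to
\[
r(\pi,a)=\frac{(\pi^2-aD(\pi))^2}{D(\pi)},\qquad r(\tfrac12,a)=\tfrac12\left(\tfrac12-a\right)^2,\quad r(1,a)=(1-a)^2 .
\]
The first step is to equalize the two endpoint regrets. This gives $\sqrt2(1-a)=a-\tfrac12$, i.e.\ $a^*=(\sqrt2+\tfrac12)/(1+\sqrt2)$. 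The second step is to find the weight $w$ on $\pi=1$ that makes $a^*$ Nature's Bayesian posterior at $k=2$, namely $a^*=\bigl[w+(1-w)\tfrac14\bigr]/\bigl[w+(1-w)\tfrac12\bigr]$. Solving this should give $w=\sqrt2-1$, and hence weight $2-\sqrt2$ on $\pi=\tfrac12$.

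The remaining step for $n=2$ is to show that $r(\pi,a^*)\le r(1,a^*)$ for all $\pi\in(1/2,1)$. Here I would clear the denominator and study the polynomial $(\pi^2-a^*D)^2-r(1,a^*)D$ on $[1/2,1]$. It vanishes at both endpoints, so I expect to factor out those roots and sign the remaining quadratic factor. This mirrors the upper-envelope argument of \Cref{prop1}.

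\emph{Case $n=3$.} The posteriors are $\Pr(\theta=1|\pi,2)=\pi$ and $\Pr(\theta=1|\pi,3)=\pi^3/(\pi^3+(1-\pi)^3)$, and by symmetry the DM's rule is described by $(a_2,a_3)$. I will proceed in three steps.
\begin{enumerate}
\item Pure strategies are ruled out exactly as in \Cref{prop1}. Against a pure $\pi$ the DM attains zero regret, and Nature then deviates. So the support has at least two points.
\item $\pi=1/2$ must be in the support. If it were not, the DM's best response would satisfy $a_2,a_3>1/2$ strictly, and the uninformative DGP would then yield strictly higher regret than any point in the support. I will show this by comparing with the minimized value.
\item $\pi=1$ is not in the support. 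I will compute the best candidate equilibrium restricted to $\{1/2,1\}$ (again by equalizing and Bayes-consistency) and show that $\partial_\pi R(\bm a_3,\pi)<0$ at $\pi=1$. This means some interior $\pi$ does strictly better for Nature, so the $\{1/2,1\}$ profile is not a saddle point, and more generally $1$ cannot be a maximizer.
\end{enumerate}

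The main obstacle is showing that the support consists of exactly one interior point $\pi^*$ together with $1/2$. For this I would fix any DM rule with $1/2<a_2<a_3<1$ and write $R(\bm a_3,\pi)$ as a rational function whose denominators cancel: each term has the form $(\text{numerator}-a_k\cdot\text{mass})^2/\text{mass}$, as in the $n=2$ case. I would then bound the number of interior critical points, by Descartes' rule or a sign analysis of the derivative, to show that $R$ has at most one interior local maximum on $(1/2,1)$. Combined with steps 1--3, this forces the support to be $\{1/2,\pi^*\}$. If the critical-point count proves unwieldy, the fallback is to solve the two-point system numerically and certify global optimality of the pair $\{1/2,\pi^*\}$ by an explicit polynomial inequality.
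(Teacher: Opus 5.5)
Your $n=2$ argument is essentially the paper's: Bayes-consistency of $\bm a$ with the mixture, indifference between $\pi=\tfrac12$ and $\pi=1$, and a check that interior precisions are dominated. The only difference is the last step. The paper factors $\partial_\pi R$ to show that $R(\bm a,\cdot)$ is decreasing-then-increasing for every $a_0\in(0,\tfrac12)$. Your factoring at the specific $a^*$ also works. With $t=2\pi-1$ and $b=a^*-\tfrac12=1-\tfrac{\sqrt2}{2}$, one gets $4\bigl[(\pi^2-a^*D)^2-r(1,a^*)D\bigr]=t(t-1)\bigl(b^2t^2+(b^2-2b)t+2b\bigr)$. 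The quadratic factor is positive on $[0,1]$, so interior $\pi$ are strictly dominated, which also gives uniqueness of Nature's mixture.

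For $n=3$ your route genuinely differs. The paper is constructive. It reduces (FOC), (Indif) and (Local Opt) to $G_1=G_2=0$ and obtains a solution via the intermediate value theorem and a global implicit-function lemma. It then certifies global optimality by a Sturm count on the degree-7 numerator of $R'$ at the numerically computed $(a_2,a_3)\approx(0.60,0.86)$, using $R'(\tfrac12)<0$ and $R'(1)<0$. You instead take existence from the minimax theorem and use necessary conditions to pin down the support of \emph{every} equilibrium. This avoids solving the nonlinear system and delivers $\pi^*\in(\tfrac12,1)$ automatically. The price is that your critical-point bound must hold uniformly over $\tfrac12<a_2<a_3<1$, or over a region you can prove contains the equilibrium rule. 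That is strictly more than the paper's single check, and if it fails somewhere you are back to the constructive route, as your fallback anticipates.

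Two of your steps need repair.

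\emph{Step 3.} Refuting the $\{\tfrac12,1\}$ profile does not exclude $\pi=1$ from the support of an arbitrary equilibrium, e.g.\ one supported on $\{\tfrac12,\pi',1\}$. Do it directly. For any rule, $\partial_\pi R(\bm a,\pi)\big|_{\pi=1}=3\bigl[(1-a_3)^2-(1-a_2)^2\bigr]$; the term $3\pi(1-\pi)(\pi-a_2)^2$ contributes $-3(1-a_2)^2$ even though $\Pr(k=2\mid\pi=1)=0$. This is negative whenever $a_2<a_3<1$. Any Bayesian best response to a non-degenerate $\sigma$ satisfies $a_2<a_3<1$, because $\Pr(k=3\mid\pi)/\Pr(k=2\mid\pi)=\frac{1}{3\pi(1-\pi)}-1$ is increasing on $[\tfrac12,1)$ and $\Pr(\theta=1\mid\pi,3)\ge\pi=\Pr(\theta=1\mid\pi,2)$.

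\emph{Step 2.} Your justification (that $\pi=\tfrac12$ would beat every support point) is only asserted, but the step is redundant. Once $1$ is excluded and the support has at least two points, it contains an interior point, so $a_2>\tfrac12$ and your bound applies. Having at most one interior local maximum then forces the support to be $\{\tfrac12,\pi^*\}$.
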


\begin{proof}
    See appendix \Cref{pf:thm1}.
\end{proof}

We show that when \(n=2\), Nature continues to randomize between the fully revealing DGP (\(\pi=1\)) and the uninformative DGP (\(\pi=1/2\)), but assigns relatively more weight to the uninformative DGP. When \(n=3\), Nature instead places positive probability on an interior informative precision \(\pi^*\in(1/2,1)\) together with the uninformative DGP.

For \(n=2\), Nature primarily distorts information by putting more weight on the uninformative DGP; while for \(n=3\) Nature distorts \emph{informativeness} itself by introducing an interior precision \(\pi^*\). This raises a natural question: as the sample size grows, how does Nature manipulate the information environment? There are three conceptually distinct channels:
\begin{enumerate}
\item Place more mass on the uninformative endpoint \(\pi=1/2\), making data frequently uninformative;
\item Reduce the informativeness of the informative component by moving its support inward; or
\item Do both simultaneously---shifting more weight to \(\pi=1/2\) while also moving the interior support point toward \(1/2\).
\end{enumerate}
We will show that channel (2) is the dominant force as \(n\) increases. Intuitively, larger samples expand the set of sufficient statistics and create a richer set of contingencies, allowing the DM to hedge her response---for example, reacting strongly to extreme counts while tempering her response at interior counts. If Nature were to mix only between \(\pi=1/2\) and \(\pi=1\), the \(\pi=1\) component would become too easy to exploit: extreme realizations would be perfectly revealing, enabling the DM to achieve very low loss in those cases. Introducing an interior precision blunts this advantage by making even extreme outcomes less decisive.

A closed-form characterization for \(n>3\) is computationally intractable using our approach. Instead, we provide numerical evidence for the conjectured equilibrium by imposing the necessary equilibrium conditions implied by its structure. We then use an asymptotic analysis to formalize and prove the underlying mechanism.

Specifically, we conjecture that Nature randomizes between the uninformative experiment $\pi=\tfrac12$ and an informative experiment with precision $\pi_n^*$, assigning probability $w \in [0,1]$ to the latter. Under this conjecture, an equilibrium is pinned down by three conditions. (1) Given Nature’s postulated mixed strategy, the DM’s action must be a best response; equivalently, it satisfies the corresponding first-order condition (i.e., Bayes’ rule evaluated at the worst-case prior). (2) Nature must be indifferent between the two experiments $\pi=\half$ and $\pi=\pi_n^*$ when facing the DM’s best response for it to randomize. (3) The informative precision $\pi_n^*$ must locally maximize Nature’s objective (the regret) among nearby feasible precisions.

\begin{align*}
   &  a_k^*  =\frac{2^{n}w{\pi_n^*}^k(1-\pi_n^*)^{n-k}+1-w}{2^{n}w({\pi_n^*}^k(1-\pi_n^*)^{n-k}+(1-\pi_n^*)^k{\pi_n^*}^{n-k})+2(1-w)}. \tag{FOC}\label{eq:foc}\\
   &  R(\bm{a}^*,\pi_n^*) =R(\bm{a}^*,\frac{1}{2}) \Leftrightarrow  \sum_{k=0}^n\Pr(k|\pi_n^*)(\Pr(\t=1|\pi_n^*,k)-a_k^*)^2 =\frac{1}{2^{n}}\sum_{k=0}^n{n \choose k}(\frac{1}{2}-a_k^*)^2.\tag{Indif}\label{eq:indifference}\\
&\frac{\partial}{\partial \pi} R(\bm{a}^*, \pi) \bigg|_{\pi=\pi_n^*} = \sum_{k=0}^{n} \Bigg[ \frac{\partial \Pr(k|\pi)}{\partial \pi} (\Pr(\theta=1|\pi,k) - a_k^*)^2 \\
&\qquad\qquad\qquad + 2 \Pr(k|\pi) (\Pr(\theta=1|\pi,k) - a_k^*) \frac{\partial \Pr(\theta=1|\pi,k)}{\partial \pi}\Bigg]_{\pi=\pi_n^*} = 0.\tag{Local Opt}\label{eq:localoptimality}
\end{align*}

For different $n$, we can solve the solution $((a_k)_{0\le k\le n}, \pi,w)$ to the system of equations given by these three conditions by simulation. 

\begin{figure}[htbp]
    \centering
    \includegraphics[width=1\linewidth]{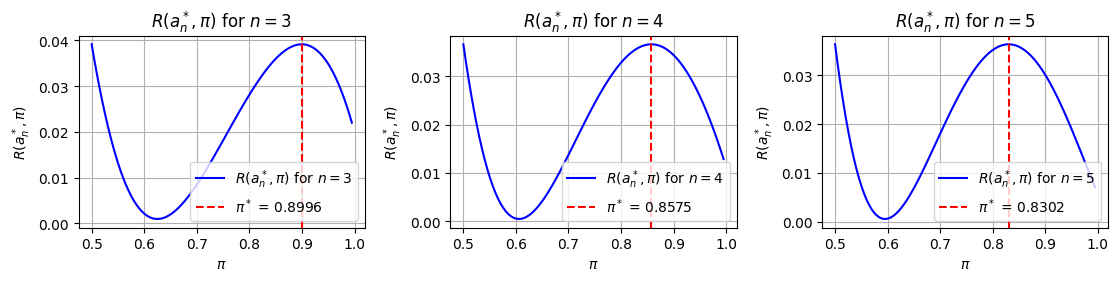}
    \caption{Simulation for $3\le n\le 5$}
    \label{figc}
\end{figure}

  \Cref{figc} displays the results for $3 \le n \le 5$. The horizontal axis represents $\pi$, and the vertical axis represents regret, evaluated at the DM’s equilibrium strategy $\bm a^*_n$. In each panel, Nature is indifferent between two optimal experiments: an uninformative one ($\pi=1/2$) and an interior informative one ($\pi=\pi_n^*\in(1/2,1)$). A clear pattern emerges: as $n$ increases, the optimal interior informative precision $\pi_n^*$ for Nature decreases toward $1/2$. Intuitively, Nature strategically degrades the signal quality to prevent the DM from learning the state too easily. In the next section, we analyze the limit of this game to determine the precise rate at which this degradation occurs. 

 \begin{figure}[h]
    \centering
    \includegraphics[width=1.0\linewidth]{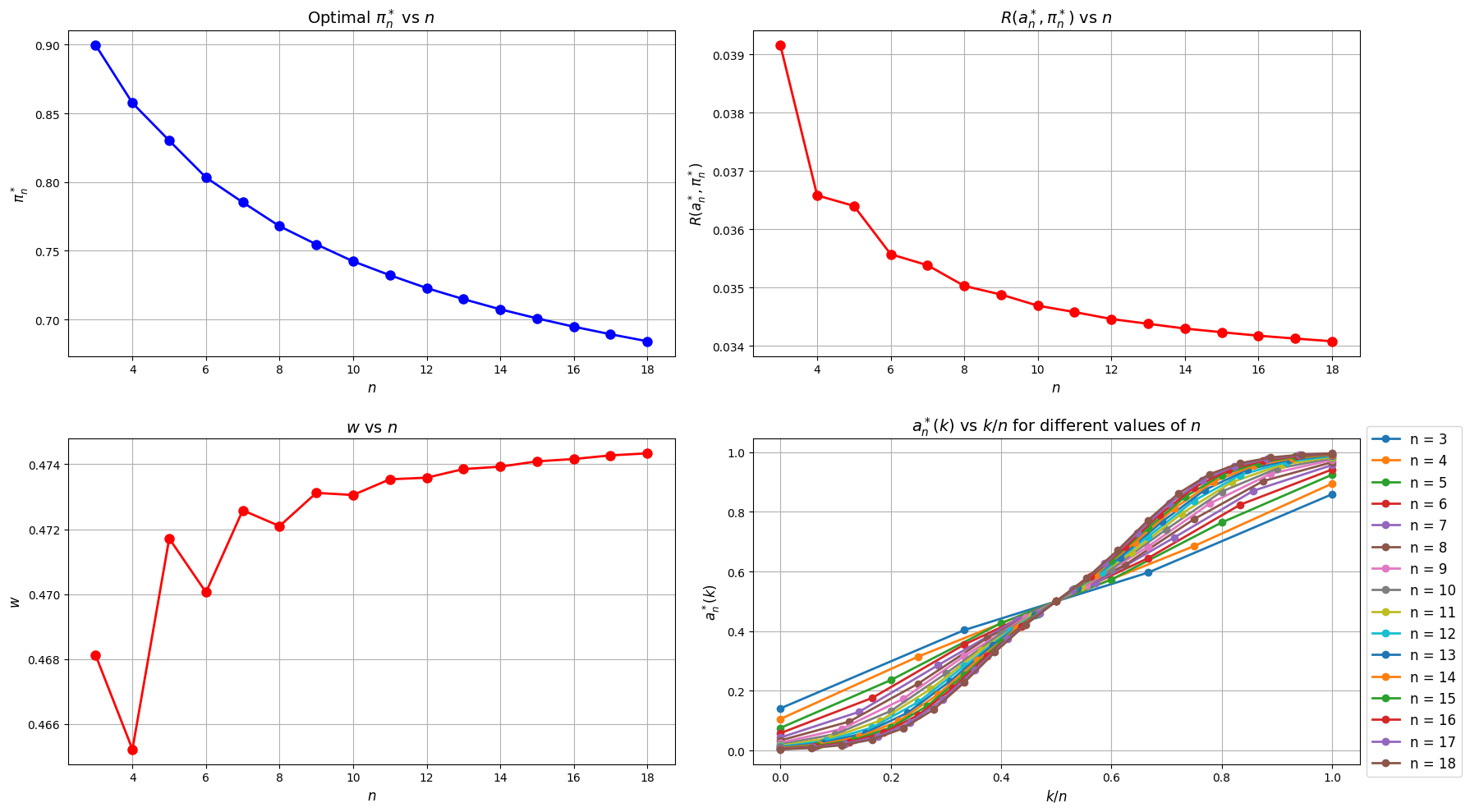}
    \caption{Trend of $\pi_n^*, R(\bm a^*_n, \pi_n^*), w, (a^*_n(k))_{0\le k\le n}$ for different $n$}
    \label{figcc}
\end{figure}   

 The upper left panel of  \Cref{figcc} shows a more direct pattern that the optimal informative experiment $\pi_n^*$ decreases with $n$ for $3\le n\le 18$. The upper right panel shows that the equilibrium regret $R(\bm a_n^*, \pi_n^*)$ also decreases with $n$. This indicates that the data size helps reduce the worst case regret. The lower left panel shows the weight Nature assigns to the optimal informative experiment $\pi_n^*$ fluctuates for small $n$ but exhibits an overall upward trend as $n$ increases. Finally, the lower right panel plots each equilibrium action rule (DM's posterior) $a^*_n(k)$ for signal fraction $k/n$ across different $n$'s, represented by different curves.

\subsection{Limit Learning from Large Sample}

In the small $n$ analysis and the simulations, we observe that as the sample size $n$ grows, Nature's optimal signal precision $\pi_n^*$ shifts toward the uninformative boundary $1/2$. In this section, we formally analyze the asymptotic behavior of the MSE game.

In principle, one could study the equilibrium of the finite game directly for each \(n\). In practice, however, this approach quickly becomes infeasible: the closed-form characterizations become computationally difficult once \(n\) grows large. More importantly, the simulated equilibrium suggests a specific pattern: Nature’s optimal precision \(\pi_n^*\) drifts toward the uninformative boundary \(1/2\), so the economically relevant question is how close to \(1/2\) Nature chooses \(\pi\) as \(n\) grows.

The key insight, to be established later, is that adversarial Nature adopts equilibrium strategy in which the informative DGP moves closely to the uninformative at $1/\sqrt{n}$ rate. With this ``scaling'' of  $1/\sqrt{n}$,\footnote{We do emphasize the ``scaling'' here does not mean normalization, but a local reparameterization that zooms in on a shrinking $1/\sqrt{n}$-neighborhood of $\pi=1/2$.} the finite-sample log-likelihood ratio generated by \(K\) admits a Gaussian approximation, so the binomial experiment converges to a simple normal-shift experiment. The resulting \emph{limit game} captures the leading-order incentives of both players in the large-\(n\) game and delivers a tractable characterization of Nature’s equilibrium play and the DM’s response in the limit.  

We first define a \emph{Finite Game} and a continuous \emph{Limit Game}. We then characterize the unique equilibrium of the Limit Game, which describes the robust learner's behavior in the presence of massive data. Finally, we establish a formal convergence result, linking the finite-sample game to its limit game counterpart, showing that equilibria of the finite games are well-approximated by the equilibrium of the limit game for large \(n\).

\subsubsection{The Finite and Limit Games}

\paragraph{\textcolor{deepblue}{{The  Finite  Game  $\Gamma_n$.}}}
For each $n$, in the finite game Nature's strategy is to choose precision $\pi \in [1/2, 1]$. We reparameterize this using the local parameter $c_n = \sqrt{n}(\pi - 1/2)$, which lies in the interval $[0, \sqrt{n}/2]$. Thus, Nature's strategy is $\gamma_n\in \Sigma_n\equiv\Delta[0, \sqrt{n}/2]$. The DM chooses an action profile (her posterior beliefs) $\bm{a}_n = (a_0, \dots, a_n)$, which is a vector in $A_n\equiv [0,1]^{n+1}$. The payoff of Nature is the expected ex-ante MSE Regret 
\[
R_n(\bm a_n, \gamma_n)\equiv\E_{c_n\sim\gamma_n}\left[\sum_{k=0}^n\Pr(k|c_n)(\Pr(\theta=1|k,c_n)-a_k)^2\right],
\]
while the DM gets the negative of this payoff as in a zero-sum game.

And the finite game can be formally defined by the tuple $\Gamma_n = \langle \{DM, \text{Nature}\}, \{A_n, \Sigma_n\}, R_n \rangle$.

\paragraph{\textcolor{deepblue}{The Limit Game $\tilde{\Gamma}$.}}
We define the Limit Game $\tilde{\Gamma}=\langle \{DM, \text{Nature}\},\{A,\Sigma\},\tilde{R}\rangle$ directly in terms of the Gaussian limit experiment. The Nature's strategy is a probability measure $\gamma \in \Sigma = \Delta([0, \infty))$ over a local precision parameter $c \ge 0$. This choice of precision affects the signal structure the DM receives. The DM observes a signal realization $z \in \mathbb{R}$, whose mean depends on the state $\theta$ and precision $c$:
    \[ Z \mid \theta=1, c \sim \mathcal{N}(2c, 1) \quad \text{and} \quad Z \mid \theta=0, c \sim \mathcal{N}(-2c, 1). \]
In response, the DM chooses action $\bm a: \mathbb{R} \to [0,1]$, which is a monotone function mapping the signal realization $z$ to a posterior belief. Nature gets the limit regret 
\begin{align*}
    \tilde{R}(\bm{a},\gamma)&=\frac{1}{2} \E_{c\sim\gamma} \int_{-\infty}^{\infty} \phi(z+2c)(\Pr(\theta=1|z,c)-\bm a(z))^2 dz\\
&+\frac{1}{2} \E_{c\sim\gamma} \int_{-\infty}^{\infty} \phi(z-2c)(\Pr(\theta=1|z,c)-\bm a(z))^2 dz
\end{align*}
as payoff, while the DM gets the negative of it.

\subsubsection{Characterizing the Limit Equilibrium}

Next, we derive explicit expressions for the beliefs and strategies that solve the Limit Game.

\begin{theorem}\label{thm:limit_eq}
The Limit Game $\tilde{\Gamma}$ has a unique saddle point $((c^*, w^*), \bm a^*)$ characterized as follows:
\begin{enumerate}
    \item \textbf{Nature's Strategy:} Nature randomizes between exactly two points: $c=0$ (uninformative) and a unique informative precision $c^* > 0$, with probabilities $1-w^*$ and $w^*$ respectively.
    
    \item \textbf{DM's Strategy:} The DM employs the Bayesian posterior $\bm a^*(z)$ generated by Nature's worst-case strategy, explicitly:
    \[ \bm a^*(z) = \frac{1-w^*+w^* e^{2c^*z-2{c^*}^2}}{2(1-w^*)+w^*(e^{2c^*z-2{c^*}^2}+e^{-2c^*z-2{c^*}^2})}. \]
\end{enumerate}
\end{theorem}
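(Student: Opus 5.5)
The plan is to prove the theorem in four steps: existence of a saddle point via a minimax theorem on a compactified strategy space, uniqueness of the DM's side via strict convexity, structure of Nature's support via analyticity and a sign-change count, and finally uniqueness of the pair $(c^*,w^*)$.

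\textbf{Existence.} Nature's strategy space $\Delta([0,\infty))$ is not compact, so I would first compactify it to $[0,\infty]$. The point $c=\infty$ is the fully revealing experiment, whose regret against $\bm a$ is $\frac12\int\phi(z)\,[(1-\bm a(z))^2+\bm a(z)^2]\,dz$ in the limit. The DM's set of monotone maps $\bm a:\mathbb{R}\to[0,1]$ is convex and compact under pointwise convergence (Helly's selection theorem). The payoff $\tilde R$ is convex, indeed strictly convex, in $\bm a$ and linear in $\gamma$. I would check continuity of $\tilde R$ by dominated convergence, using the bound that the integrand is at most $\phi(z\pm 2c)$. Sion's minimax theorem then gives a saddle point on the compactified game. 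I would then show that no mass sits at $c=\infty$: for the DM's equilibrium $\bm a^*$, the regret $r(c)\equiv\tilde R(\bm a^*,\delta_c)$ tends to $r(\infty)$ as $c\to\infty$, and I would argue that $r(\infty)$ is strictly below $\sup_c r(c)$. The reason is that the fully revealing branch is exploited by the DM's extreme tail responses, in the same way that extreme counts are exploited in the finite-$n$ discussion.

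\textbf{The DM's side.} For any mixed strategy $\gamma$ of Nature, the unique minimizer of $\tilde R(\cdot,\gamma)$ is pointwise the Bayes posterior under the mixture $\gamma$. This follows from the first-order condition, since the MSE objective splits into a $z$-by-$z$ problem. The posterior is automatically monotone, because each likelihood ratio $\phi(z-2c)/\phi(z+2c)=e^{4cz}$ is increasing in $z$. Strict convexity in $\bm a$ makes the DM's equilibrium strategy unique across all saddle points. Once Nature's strategy is shown to be $(1-w^*)\delta_0+w^*\delta_{c^*}$, Bayes' rule gives the displayed formula directly, using $\phi(z\mp2c)/\phi(z)=e^{\pm2cz-2c^2}$.

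\textbf{Nature's side.} Here I need to show that $\operatorname{supp}\gamma^*=\{0,c^*\}$. Fix $\bm a^*$. Since Nature is indifferent on its support, $\operatorname{supp}\gamma^*\subseteq\arg\max_c r(c)$. Writing the oracle posterior as $q_c(z)=(1+e^{-4cz})^{-1}$, the function $r(c)$ is a Gaussian-smoothed integral that extends analytically to a strip around $[0,\infty)$. Because $r(\infty)<\max r$, the maximizers lie in a compact set, so by analyticity they form a finite set.

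To show $0$ is in the support, I would argue by contradiction. If all of Nature's mass were on $c>0$, the DM's posterior would be informative everywhere. Then $r(0)=\int\phi(z)\,(\tfrac12-\bm a^*(z))^2\,dz$ should exceed the regret at the support points, since the oracle at $c=0$ ignores the data while the DM does not. This would make deviating to $c=0$ profitable.

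To show there is exactly one interior point, I plan to count sign changes of $r(c)-r(0)$. Differencing against $c=0$ and using the symmetry $z\mapsto -z$, I would write this quantity as an integral transform of a function of $z$ against the totally positive kernel $\cosh(2cz)e^{-2c^2}$ in $c$. The variation-diminishing property (Karlin) then bounds the number of sign changes in $c$ by the number in $z$. I would bound the latter using the explicit two-point form of $\bm a^*$. This step is the main obstacle: the integrand mixes $q_c$ with $\bm a^*$ nonlinearly, so it may not be cleanly of kernel form. If it is not, my fallback is to show directly that $r$ is unimodal on $(0,\infty)$ by analyzing $r'(c)$ in the variable $c^2$.

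\textbf{Uniqueness of $(c^*,w^*)$.} Finally, uniqueness of the pair follows by combining three facts: $\bm a^*$ is unique, the Bayes map from two-point priors to posteriors is injective in $(c,w)$, and any two saddle points share the same $\bm a^*$.
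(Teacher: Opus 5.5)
Your proposal has a genuine gap at its central step: showing that against $\bm a^*$, Nature's regret $r(c)=\tilde R(\bm a^*,c)$ is maximized exactly on $\{0,c^*\}$. As you have ordered the argument, that step is circular.

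\textbf{The circularity and the missing shape lemma.} Your $(\bm a^*,\gamma^*)$ comes abstractly from Sion, so at that point $\bm a^*$ is the Bayes posterior under an unknown $\gamma^*$. Bounding sign changes ``using the explicit two-point form of $\bm a^*$'' therefore presupposes $\operatorname{supp}\gamma^*=\{0,c^*\}$, which is what you are trying to prove. Your fallback is unspecified. ``Unimodal on $(0,\infty)$'' is also the wrong target: for any symmetric increasing $\bm a$, $r'(0)=-2\int z\phi(z)\bigl(\bm a(z)-\tfrac12\bigr)dz<0$, so $r$ first falls, then rises to an interior peak, then falls. Your argument that $0\in\operatorname{supp}\gamma^*$ is likewise only intuition.

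The missing idea is a global shape lemma for $c\mapsto\tilde R(\bm a^*,c)$. The paper writes $\tilde R(b)=\int\phi(z+2b)[A(z)^2-B(z,b)^2]\,dz$, with $A=\bm a^*$ and $B=(1+e^{-4bz})^{-1}$, and proves that $Q(b)=e^{2b^2}\tilde R'(b)$ is strictly concave. The only $A$-dependent term of $Q''$ is $\int\partial_z\bigl(2z^2e^{-z^2/2-2bz}\bigr)A(z)^2\,dz$, which is negative by monotonicity of $A$ alone. The $A$-free terms are signed via an explicit antiderivative. Hence $\tilde R'$ has at most two zeros. Combined with $\tilde R'(0)<0$, the first-order condition at $c^*$, and indifference $\tilde R(0)=\tilde R(c^*)$, this gives $\arg\max=\{0,c^*\}$.

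The paper avoids your ordering problem by constructing first. It solves the indifference and first-order equations (IVT in $w$, a global implicit function theorem, then IVT in $c$ with numerically signed endpoints $J(0.2)>0>J(1)$), and only then verifies global optimality.

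Because the concavity bound needs only monotonicity and the symmetry $\bm a^*(-z)=1-\bm a^*(z)$, it would also repair your route. It gives $\operatorname{supp}\gamma^*\subseteq\{0,c_{\max}\}$. Both points must then be charged, because no pure strategy is an equilibrium for Nature. Against $\bm a\equiv\tfrac12$, Nature gains by moving to $c>0$. Against $\bm a=q_c$, the regret at $c$ is $0<r(0)$.

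\textbf{Errors in the existence step.} For fixed monotone $\bm a$, under $\theta=1$ the signal concentrates near $2c\to\infty$, where $q_c\to1$. So $\tilde R(\bm a,c)\to\tfrac12(1-\bm a(+\infty))^2+\tfrac12\bm a(-\infty)^2$, not $\tfrac12\int\phi(z)[(1-\bm a(z))^2+\bm a(z)^2]\,dz$.

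This endpoint value is not lower semicontinuous under pointwise convergence on $\mathbb R$: take $\bm a_m=\mathbf 1\{z>m\}\to0$. So Sion's hypotheses fail on your compactification. In addition, the bound $\phi(z\pm2c)$ gives no envelope that is uniform as $c\to\infty$.

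This part is repairable. The Bayes posterior under any $\gamma$ charging $(0,\infty)$ has $\bm a(+\infty)=1$ and $\bm a(-\infty)=0$, so $r(\infty)=0$.

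Your closing uniqueness argument is sound once $\arg\max_c\tilde R(\bm a^*,c)=\{0,c^*\}$ is known: unique $\bm a^*$ by strict convexity, exchangeability of saddle points, and injectivity of the two-point Bayes map. But that argmax statement is the heart of the theorem, and your proposal does not supply it.
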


The constants can be calculated by solving the indifference condition ($\tilde{R}(\bm a^*,c^*) = \tilde{R}(\bm a^*,0)$) and the optimality condition ($\frac{\partial}{\partial c} \tilde{R}(\bm a^*,c^*) = 0$). Numerically, we find $c^* \approx 0.799$ and $w^* \approx 0.476$.

 This proposition shows Nature’s worst-case strategy collapses to a \emph{two-point} mixture, even though it is allowed to randomize over a continuum of precision levels. The two-point mixing \(\{0,c^*\}\) reflects how Nature maximizes confusion under  massive-data. Assigning all weight to very large \(c\) is counterproductive for Nature because it separates the two Gaussian signal distributions, making the state almost surely identifiable and driving regret to zero. Assigning all weight to \(c=0\) makes signals pure noise, in which case both the oracle and the DM simply stays near the prior and Nature’s ability to induce regrets is limited.  

 In equilibrium, Nature combines these forces by placing some mass on \(c=0\) (so any realized \(z\) may be noise) and the remaining mass on a specific interior \(c^*>0\) (so the same \(z\) may also be informative). This structure makes evidence appear potentially decisive while preserving doubt about its reliability. The equilibrium pair \((w^*,c^*)\) balances two levers: \(w^*\) controls how often informative evidence arrives, while \(c^*\) controls how strong it is when it arrives. The saddle point selects \((w^*,c^*)\) so that signals are strong enough to pull the DM away from \(1/2\), but not so strong that the DM can exploit them safely.

\paragraph{Proof Sketch.}
Let $\bm a^*(c^*,w^*)$ be DM's best response to Nature's strategy randomizing between two points: $c=0$ and $c^* > 0$, with probabilities $1-w^*$ and $w^*$ respectively in the limit game.  

The necessary condition for Nature's strategy to randomize between $0$ and $c^*$, with probabilities $1-w^*$ and $w^*$ is that the pair $(c^*,w^*)$ solves the following conditions:
\begin{enumerate}
\item 
Nature is indifferent between $c=c^*$ and $c=0$.
\begin{equation}\label{indiff}
\tilde{R}(\bm a^*_n(c^*,w^*), c^*)-\tilde{R}(\bm a^*_n(c^*,w^*), 0)=0.   
\end{equation}
\item 
$c^*$ satisfies the first-order condition 
for nature's optimization problem.
\begin{equation}\label{optim}
  \frac{\partial}{\partial c}\tilde{R}(\bm a^*_n(c^*,w^*), c^*)=0.
\end{equation}
\end{enumerate}
We prove that there exists a unique $(c^*,w^*)$ that solves this system of equations. Moreover, this solution, together with DM's best response, forms a saddle point. To prove $((c^*, w^*), \bm a^*)$ is indeed a saddle point, we also need to show $(c^*,w^*)$ is globally optimal. We show that by characterizing the regret function only has two peaks, combining with \Cref{indiff} and boundary behavior of the regret function indicate the global optimality. The complete proof is provided in appendix \Cref{pf:thm:limit_eq}.

\subsubsection{Convergence of Finite Games}
We now link the sequence of finite games $\Gamma_n$ to the limit game $\tilde{\Gamma}$. We prove the sequence of Nash equilibria of the finite games converges to the \emph{unique} Nash equilibrium of the limit game.

Before presenting the result, note that these strategies are not defined on a common space. To discuss rigorously the convergence of $\Gamma_n$ to $\tilde{\Gamma}$, we first embed the finite- and limit-game strategy spaces into a single compact metric space.

\bigskip
\paragraph{Embedding Strategies.} For Nature's strategy, we embed the finite game's strategy space $\Sigma_n$ into the limit game's strategy space $\Sigma$. In the finite game, Nature chooses a measure $\gamma_n\in \Sigma_n\equiv\Delta[0, \sqrt{n}/2]$. As $n \to \infty$, the domain expands to $[0, \infty)$. We view these as measures on the common compact domain $[0, C]$.\footnote{\label{fn:precision}We can restrict $c$ to a sufficiently large compact set $[0, C]$ without loss of generality, as infinite precision yields zero regret and is strictly dominated for Nature.} Let $\Sigma = \Delta([0, C])$ be the space of probability measures, which is compact under the Prokhorov metric (weak convergence).


For DM's strategy, define a strategy space $\mathcal{A}$ as the set of \emph{monotone non-decreasing functions} $f: [0,1] \to [0,1]$. We embed the finite game's strategy space $A_n$ and the limit game's strategy space $A$ into this strategy space $\mathcal{A}$. For \emph{limit strategies}, given $\bm a:\mathbb R\to[0,1]$, define its embedding $\tilde{\bm a}\in\mathcal A$ by $\tilde{\bm a}(y)=\bm a(\Phi^{-1}(y))$ for $y\in[0,1]$, where $\Phi$ is the standard normal CDF. For \emph{finite strategies}, given $\bm a_n=(a_0,\ldots,a_n)\in A_n$, map each signal count $k$ to its asymptotic quantile under the uninformative experiment $c=0$ (equivalently $\pi=1/2$) via $y_{n,k}=\Phi\left(\frac{k - n/2}{\sqrt{n}/2}\right)$, and define $\tilde{\bm a}_n\in\mathcal A$ as the piecewise-linear interpolation of the points $\{(y_{n,k},a_k)\}_{k=0}^n$.
By Helly's Selection Theorem, $\mathcal{A}$ is compact under pointwise convergence. 

\bigskip

Given this embedding, the next theorem shows the Nash equilibria in the finite games converge to the unique Nash equilibrium in the limit game as $n\to \infty$.

\begin{theorem}[Convergence of Equilibria]\label{thm2}
Let $(\tilde{\bm{a}}_n^*, \gamma_n^*)$ be a sequence of Nash equilibria for the finite games $\Gamma_n$. Let $(\tilde{\bm{a}}^*, \gamma^*)$ be the unique Nash Equilibrium of the limit game $\tilde{\Gamma}$.
Then, as $n \to \infty$:
\begin{enumerate}
    \item \textbf{Nature:} $\gamma_n^* \xrightarrow{d} \gamma^*$ (weak convergence of measures).
    \item \textbf{DM:} $\tilde{\bm a}_n^*(z) \to \tilde{\bm{a}}^*(z)$ for all continuity points $z \in [0,1]$ (pointwise convergence).
\end{enumerate}
\end{theorem}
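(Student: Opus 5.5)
The plan is the standard ``convergence of games implies convergence of saddle points'' argument: uniform convergence of payoffs on the common compact embedding space, compactness to extract limit points, and uniqueness of the limit saddle point from \Cref{thm:limit_eq} to identify every limit point.

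\emph{Step 1 (compactness).} The sequence $(\tilde{\bm a}_n^*,\gamma_n^*)$ lies in the compact space $\mathcal A\times\Sigma$. Before using this, we show that equilibrium $\gamma_n^*$ puts no mass on large $c$, as in \Cref{fn:precision}. The argument: for $c_n\ge C$ the oracle posterior is within $O(e^{-\kappa C^2})$ of $\{0,1\}$ except on an event of small probability. Since the equilibrium $\bm a_n^*$ is the Bayes posterior under $\gamma_n^*$, it is monotone in $k$. The equilibrium value is bounded below by the value of a fixed two-point strategy $\{0,c^*\}$, so a uniform positive lower bound follows. Hence mass at such $c$ yields strictly less regret than the equilibrium value, uniformly in $n$. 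By Prokhorov and Helly, every subsequence then has a further subsequence along which $\gamma_n^*\to\bar\gamma$ weakly and $\tilde{\bm a}_n^*\to\bar{\bm a}$ at continuity points.

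\emph{Step 2 (payoff convergence).} We show that $R_n(\bm a_n,\gamma_n)\to\tilde R(\bar{\bm a},\bar\gamma)$ whenever $\tilde{\bm a}_n\to\bar{\bm a}$ pointwise and $\gamma_n\to\bar\gamma$ weakly. This is the heart of the proof. For fixed $c$ we use the local expansion $\pi=\tfrac12+c/\sqrt n$:
\[
\log\frac{\Pr(k\mid\theta=1)}{\Pr(k\mid\theta=0)}=(2k-n)\log\frac{\pi}{1-\pi}=2c\,z_{n,k}\bigl(1+o(1)\bigr),
\]
where $z_{n,k}=(k-n/2)/(\sqrt n/2)$. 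By the local CLT (equivalently, LAN for the binomial), $z_{n,K}\mid\theta$ converges to $\mathcal N(\pm 2c,1)$. These expansions hold uniformly over $c\in[0,C]$. It follows that the oracle posterior $\Pr(\theta=1\mid k,c)$ converges uniformly to the Gaussian posterior as a function of the quantile $y_{n,k}$.

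The regret integrand is bounded by $1$. Monotonicity of $\tilde{\bm a}_n$ means pointwise convergence holds off a countable set. That set is null under the absolutely continuous limit laws, so bounded convergence plus Skorokhod/Slutsky gives
\[
\sup_{c\le C}\bigl|R_n(\bm a_n,\delta_c)-\tilde R(\bar{\bm a},\delta_c)\bigr|\to 0.
\]
Continuity of $c\mapsto\tilde R(\bar{\bm a},\delta_c)$ together with weak convergence then handles the mixing over $\gamma_n$.

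\emph{Step 3 (limit points are saddle points).} Fix any $\gamma\in\Sigma$. Discretize it to $\gamma^{(n)}\in\Sigma_n$, which is possible for $n$ large since $C\le\sqrt n/2$. By Step 2,
\[
\tilde R(\bar{\bm a},\gamma)=\lim_n R_n(\bm a_n^*,\gamma^{(n)})\le\lim_n R_n(\bm a_n^*,\gamma_n^*)=\tilde R(\bar{\bm a},\bar\gamma).
\]
For the DM, fix any monotone $\bm a$ and project it to $\bm a_n$ by setting $a_k=\bm a(z_{n,k})$, so that $\tilde{\bm a}_n\to\tilde{\bm a}$ at continuity points. Then
\[
\tilde R(\bm a,\bar\gamma)=\lim_n R_n(\bm a_n,\gamma_n^*)\ge\lim_n R_n(\bm a_n^*,\gamma_n^*)=\tilde R(\bar{\bm a},\bar\gamma).
\]
The restriction of the DM to monotone rules is harmless: Bayes posteriors against any $\gamma$ are monotone in $z$ (monotone likelihood ratio), so best responses stay in $\mathcal A$. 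Thus $(\bar{\bm a},\bar\gamma)$ is a saddle point of $\tilde\Gamma$. By the uniqueness in \Cref{thm:limit_eq} it equals $(\tilde{\bm a}^*,\gamma^*)$. Since every subsequence has a further subsequence with this same limit, the full sequence converges.

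The main obstacle is Step 2, specifically uniformity. We need joint convergence when both the DM's rule and Nature's $c$ vary with $n$, and we must control the tails, where $k$ far from $n/2$ maps to $y$ near $0$ or $1$ and the interpolated $\tilde{\bm a}_n$ may behave badly. The first approach to try is a uniform local CLT bound in total variation between $\mathrm{Bin}(n,\tfrac12\pm c/\sqrt n)$, expressed in the $z$-scale, and $\mathcal N(\pm2c,1)$ over $c\le C$. Since all integrands are bounded by $1$, a total-variation bound reduces everything to pointwise convergence of $\tilde{\bm a}_n$ and of the posteriors, which is exactly what monotonicity supplies.
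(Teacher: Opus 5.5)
\paragraph{Comparison with the paper's route.}
Your architecture is sound, and it differs from the paper's. The paper proves $R_n(\tilde{\bm a},\gamma)\to\tilde R(\tilde{\bm a},\gamma)$ only for a \emph{fixed} pair, using the local CLT and Riemann sums. It then upgrades this to uniform convergence on $\mathcal A$, and to $\sup_{\gamma\in\Sigma}|V_n(\gamma)-V_\infty(\gamma)|\to 0$, by invoking ``pointwise convergence of convex (concave) functions on a compact convex set to a continuous limit is uniform.'' It finishes with Berge's maximum theorem, uniqueness of $\gamma^*$, and uniqueness of the DM's best response.

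You instead prove joint convergence $R_n(\bm a_n,\gamma_n)\to\tilde R(\bar{\bm a},\bar\gamma)$ along converging sequences, using monotone sandwiching, Skorokhod and bounded convergence. You then run the standard ``limits of equilibria are equilibria'' argument with explicit recovery sequences, and identify the limit through the uniqueness in \Cref{thm:limit_eq}. This avoids the convexity-to-uniformity step, whose standard form is finite-dimensional but which the paper applies on $\mathcal A$ and $\Sigma$. Joint convergence along sequences is exactly what monotonicity delivers, and convergence of the DM's strategies then follows from the subsequence argument.

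Two small corrections. First, the binomial log-likelihood ratio is $4c\,z_{n,k}(1+o(1))$, not $2c\,z_{n,k}$; this matches the limit posterior $1/(1+e^{-4cz})$. Second, drop the total-variation idea at the end: the total variation distance between a lattice law and a Gaussian is always $1$. Your Skorokhod argument is the right tool there.

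\paragraph{The gap: tightness in Step 1.}
Steps 2--3 need tightness of $\{\gamma_n^*\}$; otherwise mass escapes to $c=\infty$ and $\bar\gamma\notin\Delta([0,\infty))$. Your ``Hence'' does not follow. At large $c$ the regret is essentially $\E_c[(\theta-\bm a_n^*(K))^2]$. Neither oracle accuracy nor $V_n\ge v_0>0$ says that $\bm a_n^*$ is close to $0$ or $1$ at the counts typical of large $c$. For example, if $\gamma_n^*$ put most mass near $0$ and a little at huge $c$, the Bayes rule would stay near $\tfrac12$ far into the tails, and the regret there would be near $\tfrac14$. Monotonicity alone does not exclude this. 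The paper's lemma does not fill the hole either: it claims $\tilde R(\bm a,c)\to 0$ as $c\to\infty$ for \emph{every} rule, which fails for $\bm a\equiv\tfrac12$.

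\paragraph{A fix using your own ingredients.}
Suppose mass $\epsilon>0$ escapes along a subsequence, and let $\bar\gamma$ be the sub-probability limit. Run your DM-side deviation argument with deviations that equal $0/1$ beyond $\pm T$, so escaping components cost them nothing. This shows $\bar{\bm a}$ is the Bayes rule for the normalization of $\bar\gamma$. Three cases follow.
\begin{enumerate}
\item If $\bar\gamma$ charges $(0,\infty)$, then $\bar{\bm a}(T)\approx 1$ for large $T$. Monotonicity turns convergence at the single point $T$ into $\bm a_n^*\ge 1-\eta$ on all of $[T,\infty)$, and symmetrically in the lower tail. Escaping support points of $\gamma_n^*$ therefore have regret tending to $0$.
\item If $\bar\gamma$ charges only $\{0\}$, then $\bar{\bm a}\equiv\tfrac12$. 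Support points $c_n\to 0$ exist, and by Step 2 their regret tends to $\tilde R(\tfrac12,0)=0$.
\item If $\bar\gamma=0$, a step rule already drives $V_n\to 0$.
\end{enumerate}
Each case contradicts $V_n\ge v_0$. In the first two this goes through Nature's indifference: every support point has regret exactly $V_n$, by continuity of $c\mapsto R_n(\bm a_n^*,c)$. With tightness established this way, the rest of your argument goes through.
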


\begin{proof}
    See appendix \Cref{pf:thm2}.
\end{proof}

  \Cref{thm2} provides a formal justification for using the limit game as an asymptotic description of the finite-sample problem. This result also answers the earlier question. Since \(\gamma_n^* \xrightarrow{d} \gamma^*\) and \(\gamma^*\) assigns constant positive probabilities \((1-w^*,\,w^*)\) to exactly two points \(\{0,c^*\}\), Nature’s large-\(n\) manipulation does not operate primarily by shifting more and more mass to the uninformative endpoint. Instead, Nature sustains regret mainly by selecting a \emph{specific interior informative strength} for the informative component (the analogue of \(c^*\)) and mixing it with an uninformative component.

\subsubsection{Implications: Incomplete Learning Guarantees}

The existence of this non-trivial equilibrium has profound implications for long-run learning. In particular, the equilibrium regret is bounded away from zero, so learning against the worst case DGP is incomplete. Note that this regret is the equilibrium guarantee against the worst-case DGP.  \Cref{sec:price_of_robustness} by contrast considers a fixed informative DGP that is not chosen by an adversarial Nature, in which case the learning implications differ. Another implication is that the robust belief does not converge to the true state even with infinite data.  

\begin{corollary}[Failure of Law of Large Numbers against adversarial nature]\label{co2}
With adversarial Nature, in the limit as $n\to\infty$:
\begin{enumerate}
    \item \textbf{Non-Vanishing Regret:} The ex-ante regret remains strictly positive.
    \item \textbf{Failure of Asymptotic Learning:} The DM's robust posterior $\bm a^*(Z)$ does not converge to the true state $\theta$. Instead, it converges to a non-degenerate random variable strictly bounded away from 0 and 1.
\end{enumerate}
\end{corollary}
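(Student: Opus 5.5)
The plan is to read both parts of \Cref{co2} off \Cref{thm:limit_eq} and \Cref{thm2}.

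\emph{Part 1 (non-vanishing regret).} The equilibrium regret of $\Gamma_n$ is the value $V_n$ of the finite game, and we want $\lim_{n\to\infty} V_n=\tilde V$, the value of $\tilde\Gamma$. For this it suffices that $R_n(\bm a_n,\gamma_n)\to\tilde R(\bm a,\gamma)$ whenever $\tilde{\bm a}_n\to\tilde{\bm a}$ pointwise and $\gamma_n\to\gamma$ weakly on $[0,C]$. This joint-continuity step is already needed in the proof of \Cref{thm2}, and we would reuse it. Combining it with the convergence of equilibria in \Cref{thm2} gives $V_n\to\tilde V$.

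It remains to show $\tilde V>0$, and here we avoid computing $(c^*,w^*)$. By \Cref{thm:limit_eq}, $\gamma^*$ puts mass $w^*\in(0,1)$ on $c^*>0$ and mass $1-w^*$ on $0$. The oracle posterior is $1/2$ when $c=0$ and $e^{2c^*z}/(e^{2c^*z}+e^{-2c^*z})$ when $c=c^*$. For every $z\neq 0$ these differ, so no single action $\bm a(z)$ matches both. The regret is therefore at least
\[
\int \min_a\Big[(1-w^*)\phi(z)(1/2-a)^2 + w^*\bar\phi_{c^*}(z)\big(q_{c^*}(z)-a\big)^2\Big]\,dz>0,
\]
where $\bar\phi_{c^*}$ is the marginal density of $z$ at $c^*$ and $q_{c^*}$ the corresponding posterior. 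The integrand is strictly positive on a set of positive measure, so $\tilde V>0$.

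\emph{Part 2 (failure of asymptotic learning).} Under the worst-case Nature, with probability $1-w^*$ the data are uninformative and $Z\sim\mathcal N(0,1)$ regardless of $\theta$. With probability $w^*$ we have $Z\sim\mathcal N(\pm 2c^*,1)$. So $Z$ is a nondegenerate mixture of normals with full support, and its law does not depend on $n$ in the limit.

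The explicit formula for $\bm a^*$ in \Cref{thm:limit_eq} shows that $\bm a^*$ is continuous and strictly increasing. As $z\to\pm\infty$ the terms $e^{\pm 2c^*z}$ dominate, so $\bm a^*(z)\to 1$ or $0$. For every finite $z$, however, $\bm a^*(z)\in(0,1)$, so $\bm a^*(Z)$ is a nondegenerate random variable. Its distribution puts mass near $\theta$ but also positive mass on regions far from $\theta$, for example near $1/2$, and this holds in both states. Hence $\bm a^*(Z)\not\to\theta$.

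For the finite-$n$ posteriors, we would use \Cref{thm2} together with the fact that the standardized count $(K-n/2)/(\sqrt n/2)$ converges in law to $Z$ along the embedding. This gives $\tilde{\bm a}^*_n(y_{n,K})\Rightarrow \bm a^*(Z)$, with pointwise convergence at continuity points. Since $\bm a^*$ is continuous, the continuous mapping theorem applies. The limit law is not a point mass at $\theta$, so the posteriors do not converge to the state in probability.

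The claim that the limit is ``bounded away from 0 and 1'' should be read as ``not concentrating at 0 or 1'' with positive probability. Its literal support reaches $[0,1]$ because $Z$ is unbounded, so we would state it that way.

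\emph{Main obstacle.} The delicate step is the regret continuity $R_n\to\tilde R$, which needs uniform convergence over $c\in[0,C]$ of the binomial likelihoods and oracle posteriors to their Gaussian analogues. We would do this through a local CLT together with dominated convergence, using the boundedness of the integrands by $1$.
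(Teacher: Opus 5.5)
Your proposal is correct and follows essentially the paper's route: positivity of the limit-game regret comes from the oracle posteriors $1/2$ and $q_{c^*}(z)$ disagreeing for every $z\neq 0$ while $w^*\in(0,1)$, this positivity is carried to large $n$ through the finite-to-limit convergence results, and Part 2 is read off the explicit formula for $\bm a^*$. Your version is somewhat tighter than the paper's: you pass to the limit of the equilibrium value $V_n\to\tilde V$ rather than of the regret at a fixed strategy pair, and you actually show the limit law is non-degenerate. Your caveat is also right, since the paper's argument only shows $\bm a^*(z)\in(0,1)$ for each finite $z$, so the limit is a.s.\ interior but not literally bounded away from $0$ and $1$.
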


\begin{proof}
    See appendix \Cref{pf:co2}.
\end{proof}

Nature’s equilibrium strategy in the large-$n$ limit is to randomize its precision between a completely uninformative signal and one that provides just enough information to keep the DM's regret from vanishing. Specifically, Nature chooses a process whose precision converges to the uninformative level at the rate $c^*/\sqrt{n}$. This strategic blurring of the signal ensures that even though the sample size grows, the total information available to the DM remains bounded, preventing full learning and maintaining a strictly positive level of ex-ante regret.

While the $1/\sqrt{n}$ scaling is reminiscent of the ``local alternatives'' framework in asymptotic statistics (e.g., \cite{hirano2009asymptotics} for example), the underlying mechanism here is fundamentally different. In that tradition, signal precision is typically assumed to be fixed and informative, which would—by the Law of Large Numbers—eventually resolve all uncertainty regarding fixed, ``global'' alternatives. To study non-trivial decision trade-offs, researchers must therefore zoom in on states that are microscopic distances apart ($1/\sqrt{n}$). In our framework, the states $\theta=0$ and $\theta=1$ are global and fixed. The non-degenerate limit experiment arises not from the proximity of the states, but from the strategic ambiguity of the signal quality. In this sense, we provide a {\it decision-theoretic dual} to the local alternatives method: rather than scaling the parameter space to match the data, we show that an adversarial Nature scales the precision space to match the sample size.

\subsubsection{Proof Sketch for \Cref{thm2}}

The proof of \Cref{thm2} (found in \Cref{pf:thm2}) establishes that any sequence of Nash equilibria $(\bm{\tilde{a}}_{n}^{*}, \gamma_{n}^{*})$ of the finite games $\Gamma_{n}$ converges to the unique Nash equilibrium $(\bm{\tilde{a}}^{*}, \gamma^{*})$ of the limit game $\tilde{\Gamma}$. The argument proceeds in four concrete steps:

\medskip
\noindent \textbf{Step 1: Embedding and Existence.} Because the finite and limit games are defined on different strategy spaces, we first embed them into a common space. Nature's strategies $\gamma_n$ are viewed as measures on a common compact domain $[0, C]$, where $C$ is large enough that higher precisions are strictly dominated. For the DM, we define the strategy space $\mathcal{A}$ as the set of monotone non-decreasing functions mapping $[0,1]$ to $[0,1]$. By Helly’s Selection Theorem, $\mathcal{A}$ is compact under pointwise convergence. Existence of a saddle point for each $n$ then follows from Sion's Minimax Theorem, as the strategy spaces are compact and convex, and the regret function $R_n(\bm{\tilde{a}}, \gamma)$ is linear in $\gamma$ and strictly convex in $\bm{\tilde{a}}$.

\medskip
\noindent \textbf{Step 2: Pointwise Convergence of the Regret Function.} We prove that for any fixed strategy pair $(\bm{\tilde{a}}, \gamma) \in \mathcal{A} \times \Sigma$, the finite-sample regret converges to the limit regret:
\begin{equation*}
    R_n(\bm{\tilde{a}}, \gamma) \longrightarrow \tilde{R}(\bm{\tilde{a}}, \gamma) \text{ as } n \to \infty.
\end{equation*}
As detailed in \Cref{prop:limit regret}, this utilizes a Local Central Limit Theorem to show that binomial counts converge to Gaussian densities and that the discrete log-likelihood ratios converge to the linear shift $4cz$.

\medskip
\noindent \textbf{Step 3: Two-Stage Uniform Convergence.} To link the players' optimization problems across $n$'s, we define Nature's value functions in the finite and limit games:
\begin{equation*}
    V_n(\gamma) := \inf_{\bm{\tilde{a}}_n \in \mathcal{A}} R_n(\bm{\tilde{a}}_n, \gamma), \quad V_{\infty}(\gamma) := \inf_{\bm{\tilde{a}} \in \mathcal{A}} \tilde{R}(\bm{\tilde{a}}, \gamma).
\end{equation*}
We establish convergence in two stages:
\begin{enumerate}
    \item \textbf{Uniformity over $\mathcal{A}$:} For any sequence $\gamma_n \to \gamma$, we show that $R_n(\cdot, \gamma_n)$ converges uniformly to $\tilde{R}(\cdot, \gamma)$ on $\mathcal{A}$. This follows from the fact that on a compact convex set, pointwise convergence of convex functions to a continuous limit implies uniform convergence. This ensures that the infima converge: $V_n(\gamma_n) \to V_{\infty}(\gamma)$.
    \item \textbf{Uniformity over $\Sigma$:} We then show that the sequence of value functions converges uniformly across all of Nature's strategies:
    \begin{equation*}
        \sup_{\gamma \in \Sigma} |V_n(\gamma) - V_{\infty}(\gamma)| \longrightarrow 0.
    \end{equation*}
    This follows because each $V_n$ is a concave function on the compact set $\Sigma$, and their pointwise convergence to the continuous $V_{\infty}$ implies uniform convergence.
\end{enumerate}

\medskip
\noindent \textbf{Step 4: Strategy Convergence via Berge's Maximum Theorem.} Finally, we apply Berge's Maximum Theorem to Nature's maximization problem, $\max_{\gamma \in \Sigma} V_n(\gamma)$. Since $V_n \to V_{\infty}$ uniformly and the limit maximizer $\gamma^*$ is unique (as proven in \Cref{thm:limit_eq}), any sequence of finite maximizers $\gamma_n^*$ must converge weakly to $\gamma^*$. Because the DM's best response to any $\gamma$ is unique, the uniform convergence of the regret functions ensures that the associated minimizers also converge pointwise: $\bm{\tilde{a}}_n^* \to \bm{\tilde{a}}^*$.

\subsection{Robust Law of Large Numbers and the Price of Robustness} \label{sec:price_of_robustness}

In the previous sections, we derived the DM's robust strategy $\bm a_n^*$ when Nature is adversarial. Here, we analyze the performance of this strategy when the data is actually generated by a fixed, non-adversarial DGP.

Suppose the true DGP is fixed with precision $\pi_{\text{true}} \in (1/2, 1]$. The DM, unaware of this, employs the minimax regret strategy $\bm a_n^*$ designed for the worst-case scenario. We show the DM's actions are consistent and her error converges to zero: the DM eventually learns the true state with enough data. However, her error vanishes at a sub-exponential rate, which is strictly slower than the exponential rate achieved by an ``Oracle'' who knows the true DGP. This difference quantifies the asymptotic cost of robustness.

We define three measures.   The Oracle's expected Mean Squared Loss conditional on $\t$ is\footnote{Since the prior $\mu=\frac{1}{2}$ is fixed, the unconditional Mean Squared Loss is a fixed convex combination of the two conditional Mean Squared Losses, so conditioning on $\theta$ is without loss for the asymptotic rate (and by symmetry the two conditional errors are equal).}
\begin{equation*}
L_n^{\text{oracle}}\equiv\E_{K|\pitrue, \tstate}\left[ (\Prb(\tstate=1|\pitrue, K) - \tstate)^2 \right].
\end{equation*}
The DM's Mean Squared Loss (under the robust belief) conditional on $\t$ is
$$
L_n \equiv \E_{K|\pitrue, \tstate}\left[ (\bm a_n^*(K) - \tstate)^2 \right].
$$
The DM's expected regret under this misspecified model, is defined as
\begin{equation*}
    R_n^{\text{mis}} \equiv \E_{K|\pitrue}\left[ \left( \Prb(\tstate=1|\pitrue, K) - \bm a_n^*(K) \right)^2 \right].
\end{equation*}

We use standard asymptotic notation: for nonnegative sequences $f(n)$ and $g(n)$, $f(n)=O(g(n))$ means there exist constants $M>0$ and $n_0$ such that $f(n)\le M g(n)$ for all $n\ge n_0$; $f(n)=\Theta(g(n))$ means there exist constants $0<c\le M$ and $n_0$ such that $c g(n)\le f(n)\le M g(n)$ for all $n\ge n_0$.

\begin{theorem}\label{thm:mis}
$L_n^{\text{oracle}}$ converges to zero at an exponential rate, while $L_n$ and $R_n^{\text{mis}}$ converge to zero at a sub-exponential rate. Specifically, there exists a constant $\Xi:=4c^*(2\pi_{\text{true}}-1)$ such that
\begin{equation*}
L_n^{\text{oracle}} = O({e^{-nD_{KL}(\frac{1}{2} || \pi_\text{true})}})   
\end{equation*}
$$
L_n = \Theta \left(e^{-\Xi\sqrt{n}}\right).
$$
$$
R_n^{\text{mis}} = \Theta \left(e^{-\Xi\sqrt{n}}\right).
$$
\end{theorem}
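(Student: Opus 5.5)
The plan is to treat the three quantities separately. The oracle bound is a standard Chernoff/Bhattacharyya argument. The two robust quantities are driven by the explicit tail of the DM's posterior, which is linear in the count in log-odds. By symmetry I condition on $\theta=1$ throughout.

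\emph{Oracle.} Write $\ell=\log\frac{\pi_{\text{true}}}{1-\pi_{\text{true}}}$. The oracle's error is $1-\Pr(\theta=1|\pi_{\text{true}},K)=\bigl(1+e^{\ell(2K-n)}\bigr)^{-1}\le \min\{1,e^{-\ell(2K-n)}\}$, so
\[
L_n^{\text{oracle}}\le \E\bigl[\min\{1,e^{-\ell(2K-n)}\}\bigr]\le \E\bigl[e^{-\ell(2K-n)/2}\bigr]=\bigl(2\sqrt{\pi_{\text{true}}(1-\pi_{\text{true}})}\bigr)^{n}.
\]
The middle step uses $\min\{1,x\}\le\sqrt{x}$. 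The right-hand side equals $e^{-nD_{KL}(\frac12\|\pi_{\text{true}})}$, since $-\log\bigl(2\sqrt{\pi(1-\pi)}\bigr)=D_{KL}(\tfrac12\|\pi)$. For $\pi_{\text{true}}=1$ the error is identically zero.

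\emph{Robust error.} Set $z_k=(k-n/2)/(\sqrt n/2)$. The DM's equilibrium action is the Bayes posterior under Nature's equilibrium mixture $\gamma_n^*$, so
\[
1-a_n^*(k)=\frac{\E_{\gamma_n^*}[\Pr(k|\theta=0,c)]}{\E_{\gamma_n^*}[\Pr(k|\theta=1,c)]+\E_{\gamma_n^*}[\Pr(k|\theta=0,c)]}.
\]
Using the Gaussian approximation of the binomial log-likelihood ratio from \Cref{prop:limit regret}, together with the limit formula in \Cref{thm:limit_eq}, for large $z$ I expect
\[
1-a^*(z)\approx \tfrac{1-w^*}{w^*}\,e^{-2c^*z+2{c^*}^2}.
\]
Under $\pi_{\text{true}}$ the count satisfies $z_K=\sqrt n(2\pi_{\text{true}}-1)+O_p(1)$. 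Squaring the error then gives $e^{-4c^*(2\pi_{\text{true}}-1)\sqrt n}$ times a factor of order one.

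To make this precise I split $\E[(1-a_n^*(K))^2]$ into two parts.
\begin{enumerate}
\item The moderate region $|z_K-\sqrt n(2\pi_{\text{true}}-1)|\le n^{1/4}$. Here the exponent is linear in $z$, so the expectation of $e^{-4c^*z_K}$ is controlled by the moment generating function of the centered binomial. That gives $e^{-\Xi\sqrt n}$ times a bounded constant, for both the upper and the lower bound.
\item The complement. This region has probability $e^{-\Omega(\sqrt n)}$ with a large constant, by a Hoeffding bound at scale $n^{3/4}$. Since $(1-a)^2\le1$, it is negligible against $e^{-\Xi\sqrt n}$.
\end{enumerate}
Because the lower-bound side comes from the same moderate region, this yields $L_n=\Theta(e^{-\Xi\sqrt n})$.

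\emph{Misspecified regret.} Write $R_n^{\text{mis}}=\E[((1-a_n^*)-(1-q_{\text{oracle}}))^2]$. The oracle term is exponentially small, so $R_n^{\text{mis}}=L_n\,(1+o(1))$ by the triangle inequality, and it inherits the same $\Theta$ rate.

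\emph{Main obstacle.} The step I expect to be hardest is the finite-$n$ tail control of $a_n^*$. \Cref{thm2} only gives convergence at fixed quantiles, while here $z_K\to\infty$. The approach is to work directly with the Bayes-ratio formula above.
\begin{itemize}
\item For the lower bound on $1-a_n^*$: the numerator contains the $c=0$ component with mass $\gamma_n^*(\{c\approx 0\})\to 1-w^*>0$. The denominator is bounded by the contribution of the mass near $c^*$ plus that of larger $c$.
\item For the upper bound: use the mass of $\gamma_n^*$ in a neighbourhood of $c^*$.
\end{itemize}
The delicate point is that any mass of $\gamma_n^*$ above $c^*+\varepsilon$, or any drift $c_n^*\to c^*$, would alter the exponent constant at scale $\sqrt n$. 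A rate $|c_n^*-c^*|=o(1/\sqrt n)$, or exact two-point support for large $n$, would be needed to pin the constant $\Xi$ exactly rather than up to $e^{o(\sqrt n)}$ factors.

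First I would try to show that the support of $\gamma_n^*$ lies in $\{0\}\cup[c^*-\varepsilon_n,c^*+\varepsilon_n]$, using the two-peak structure of the regret function established in the proof of \Cref{thm:limit_eq}. Failing that, I would read $\bm a_n^*$ as the rule obtained by substituting the limit strategy, with $z=z_K$, for which the computation above is exact.
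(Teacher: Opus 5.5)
\paragraph{The shared gap in $L_n$.} For $L_n$ you follow the paper's route, and your fallback is exactly what the paper does. The paper replaces $1-\bm a_n^*(K)$ by the limit-game posterior of \Cref{thm:limit_eq} evaluated at $Z_K=(2K-n)/\sqrt n$. It then uses $(1-\bm a^*(z))^2\sim\bigl(\tfrac{1-w^*}{w^*}e^{2{c^*}^2}e^{-2c^*z}\bigr)^2$ and passes to the limit, obtaining $e^{\Xi\sqrt n}L_n\to\bigl(\tfrac{1-w^*}{w^*}e^{2{c^*}^2}\bigr)^2e^{32{c^*}^2\pi_{\text{true}}(1-\pi_{\text{true}})}$.

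The paper never controls the finite-game equilibrium at $z\asymp\sqrt n$. So the obstacle you identify is real, and the paper leaves it open too.

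Your first plan cannot be carried out with what the paper supplies. \Cref{thm2} gives only weak convergence of $\gamma_n^*$, and the two-peak argument concerns $\tilde R$, not $R_n$. A $\Theta(e^{-\Xi\sqrt n})$ conclusion for the true finite-game rule would need three things:
\begin{enumerate}
\item $|c_n^*-c^*|=O(n^{-1/2})$. This already suffices for $\Theta$; $o(n^{-1/2})$ is not needed.
\item A non-vanishing part of the near-zero mass lying within $O(n^{-1/2})$ of $0$.
\item Mass above $c^*+\varepsilon$ that is exponentially small at scale $\sqrt n$. Mass $\delta$ at $c'>c^*$ adds roughly $\delta e^{2c'z-2c'^2}$ to the denominator, which overtakes $w^*e^{2c^*z}$ once $z\gg\log(1/\delta)/(c'-c^*)$.
\end{enumerate}
Make the fallback reading an explicit hypothesis, since it is the reading under which the argument is complete, rather than a last resort.

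\paragraph{An error in your window argument.} Even under the fallback reading, one step fails. The complement of $\{|z_K-\sqrt n(2\pi_{\text{true}}-1)|\le n^{1/4}\}$ is $\{|K-n\pi_{\text{true}}|>n^{3/4}/2\}$. Hoeffding bounds its probability only by $2e^{-\sqrt n/2}$, and even the sharp moderate-deviation exponent is just $\sqrt n/\bigl(8\pi_{\text{true}}(1-\pi_{\text{true}})\bigr)$. At $\pi_{\text{true}}=3/4$ that constant is $2/3$, while $\Xi\approx 1.6$. So bounding $(1-a)^2\le 1$ on the lower deviations does not make them negligible.

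The fix is to drop the window for the upper bound:
\begin{enumerate}
\item For $z\ge 0$, $1-\bm a^*(z)\le \frac{1}{w^*}e^{2{c^*}^2}e^{-2c^*z}$.
\item For $z<0$, $(1-\bm a^*(z))^2\le 1\le e^{-4c^*z}$.
\end{enumerate}
Hence $(1-\bm a^*(z))^2\le Ce^{-4c^*z}$ for all $z$. The exact binomial moment generating function then gives $\E[e^{-4c^*Z_K}]=e^{-\Xi\sqrt n}\,e^{32{c^*}^2\pi_{\text{true}}(1-\pi_{\text{true}})+o(1)}$.

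For the lower bound, restrict to $|K-n\pi_{\text{true}}|\le\sqrt n$. This event has probability bounded away from zero, and on it $(1-\bm a^*(Z_K))^2\ge c'e^{-\Xi\sqrt n}$. This also makes rigorous the paper's dominated-convergence step, which has no dominating function, and it reproduces the paper's constant.

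\paragraph{Your other two parts.} These are correct and take different, cleaner routes than the paper.
\begin{enumerate}
\item For $L_n^{\text{oracle}}$, the paper splits at $k=\lfloor n/2\rfloor$. It applies a Chernoff bound on $\{K\le n/2\}$, and Robbins' Stirling bounds on $\binom{n}{\lfloor n/2\rfloor}$ with a geometric series above. Your Bhattacharyya step $\min\{1,x\}\le\sqrt x$ gives $e^{-nD_{KL}(\frac12\|\pi_{\text{true}})}$ in one line.
\item For $R_n^{\text{mis}}$, the paper shows $\rho_n(K)\to 0$ in probability and then equates $\lim\E[X_nY_n]$ with $\lim\E[X_n]$. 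Both limits are zero, so this does not transfer the $\Theta$ rate. Your Minkowski bound $\bigl|\sqrt{R_n^{\text{mis}}}-\sqrt{L_n}\bigr|\le\sqrt{L_n^{\text{oracle}}}$, together with $L_n^{\text{oracle}}/L_n\to 0$, does transfer it, and it needs no ratio analysis.
\end{enumerate}
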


\begin{proof}
    See   \Cref{pf:thm:mis}.
\end{proof}

The result quantifies the ``price'' of robustness. It shows that under a fixed informative DGP $\pi_{\text{true}}$, the Oracle's mean squared loss converges to zero as $n\to\infty$, at a rate at least as fast as the exponential rate of $e^{-nD_{KL}(\frac{1}{2} || \pi_\text{true})}$. The DM's mean-squared loss and expected regret also vanish, but only at a sub-exponential rate of $e^{-\Xi\sqrt{n}}$.  This means that, despite the misspecification, the DM attains full learning with enough data---a robust version of Law of Large Numbers---albeit at a slow rate.  
 
 Note that the DM's rate of the constant $\Xi=4c^*(2\pi_{\text{true}}-1)$ depends both on the constant $c^*$---the interior support point in worst-case precision in the limit---and on the true precision $\pi_{\text{true}}$. In particular, a more informative true DGP implies a larger $\Xi$ and hence faster learning for the DM. 
 
 Because the robust strategy guards against local alternatives (i.e., $\pi \approx \tfrac12 + c/\sqrt{n}$), it is overly conservative: it downweights the signal when the truth is far from uninformative ($\pi_{\text{true}} \gg \tfrac12$) and the signal is in fact much stronger. Although the DM eventually learns the truth—consistent with the law of large numbers—learning is substantially slower than under commitment to a standard Bayesian model.
\smallskip

\subsubsection{Proof Sketch}  To prove this theorem, first we show the DM's ex-ante mean squared loss $L_n$ converges to zero at a sub-exponential rate. Then we can rewrite the regret term:
$$
\E_{K|\pitrue, \tstate=1}\left[ \left( q_{n,\text{oracle}}^{\text{true}}(K) - \bm a_n^*(K) \right)^2 \right] = \E_{k|\pitrue, \tstate=1}\left[ (1 - \bm a_n^*(K))^2 \left( 1 - \frac{1-q_{n,\text{oracle}}^{\text{true}}(K)}{1-\bm a_n^*(K)} \right)^2 \right].
$$
where $q_{n,\text{oracle}}^{\text{true}}(K) = \Prb(\tstate=1|\pitrue, K)$ is the Oracle's posterior.

This decomposition reveals that the regret depends on the ratio of the Oracle's error to the DM's error, and the DM's ex-ante mean squared loss $L_n$. We show the ratio $\frac{1-q_{n,\text{oracle}}^{\text{true}}(K)}{1-\bm a_n^*(K)}$ converges to zero, meaning that whenever the DM still makes a nontrivial error, the oracle's error is already negligible. Thus convergence rate of the misspecified regret is governed by the DM's mean squared loss, i.e., $R_n^{\text{mis}}$ decays at the same rate as $L_n$.

To prove the Oracle's convergence rate, we split the error sum into two parts:
\[
L_n^{\text{oracle}} = \underbrace{\sum_{k=0}^{\lfloor n/2 \rfloor} P(k)(1-q_{n,\text{oracle}}^{\text{true}}(k))^2}_{\text{Part A: Misleading Samples}} + \underbrace{\sum_{k=\lfloor n/2 \rfloor+1}^{n} P(k)(1-q_{n,\text{oracle}}^{\text{true}}(k))^2}_{\text{Part B: Correctly Classified Samples}}
\]

We use large deviation theory to bound the error from misleading samples, and use Robbins' sharpened Stirling bounds to bound the error from correctly classified samples.

Intuitively, the misspecified regret is the squared difference between these two learners' beliefs, and the mean squared loss is the squared difference between the DM and the absolute truth. The Oracle learns the truth so much faster than the DM that, from the DM's slow perspective, the Oracle and the truth become indistinguishable.

\subsection{Over-Inference and Under-Inference}

Our results also have behavioral implications for when a robust DM over- or under-infers from the data. We find that the sample size $n$ is the key determinant: the misspecified DM over-infers only when the dataset is small, whereas with large samples the DM systematically under-infers.

\begin{proposition}\label{prop:inference}
We have the following statements:
\begin{enumerate}
    \item  There exist $n$ and true precision levels $\pi_{\text{true}}$ for which the DM over-infers. That is, 
    \[|\bm a_n^*(k) - \half| > |q_{n,\text{oracle}}^{\text{true}}(k) - \half|.\]
    \item  As $n \to \infty$, the DM under-infers with probability $1$. Namely
    \[ \lim_{n\to\infty} \Pr\left( |\bm a_n^*(K)-\half| < |q_{n,\text{oracle}}^{\text{true}}(K)-\half| \right) = 1. \]
\end{enumerate}
\end{proposition}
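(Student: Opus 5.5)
For part 1 we only need one example, and the explicit $n=1$ equilibrium from \Cref{prop1} supplies it. There the DM's rule is $(a_0,a_1)=(\tfrac14,\tfrac34)$. Under a true precision $\pi_{\text{true}}\in(\tfrac12,\tfrac34)$, the Oracle's posterior after a high signal is $q^{\text{true}}_{1,\text{oracle}}(1)=\pi_{\text{true}}$. It follows that
\[|a_1-\tfrac12|=\tfrac14>\pi_{\text{true}}-\tfrac12=|q^{\text{true}}_{1,\text{oracle}}(1)-\tfrac12|,\]
and the symmetric statement holds for $k=0$. So for any $\pi_{\text{true}}\in(\tfrac12,\tfrac34)$ the DM over-infers at every realization, which proves part 1.

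Part 2 is asymptotic, and I would work in log-odds. Write $\lambda^{\text{DM}}_n(k)=\log\frac{a^*_n(k)}{1-a^*_n(k)}$ and $\lambda^{\text{or}}_n(k)=(2k-n)\log\frac{\pi_{\text{true}}}{1-\pi_{\text{true}}}$. Both posteriors are increasing in $k$, symmetric about $k=n/2$, and equal to $\tfrac12$ at $k=n/2$. Under-inference is therefore equivalent to $|\lambda^{\text{DM}}_n(K)|<|\lambda^{\text{or}}_n(K)|$ whenever $K\neq n/2$, and $\Pr(K=n/2)\to0$.

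Under the true DGP, $|2K-n|/n\to 2\pi_{\text{true}}-1>0$ almost surely by the LLN, so $|\lambda^{\text{or}}_n(K)|$ grows linearly, at rate $n(2\pi_{\text{true}}-1)\log\frac{\pi_{\text{true}}}{1-\pi_{\text{true}}}$. The main step is to show that $|\lambda^{\text{DM}}_n(K)|$ grows only like $\sqrt n$ on the event where $|2K-n|\approx n(2\pi_{\text{true}}-1)$. This is the same estimate that underlies the $\Theta(e^{-\Xi\sqrt n})$ rate for $L_n$ in \Cref{thm:mis}, and I would reuse its proof. That proof relies on the worst-case support near $c_n^*\to c^*$ with $\pi_n^*=\tfrac12+c_n^*/\sqrt n$.

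Concretely, $a^*_n$ is the Bayes posterior under a prior mixing $\tfrac12$ and the informative component, which (to leading order) is $\pi_n^*$. Its log-odds can be bounded above by the fully-informative limit:
\[\lambda^{\text{DM}}_n(k)\le (2k-n)\log\frac{\pi_n^*}{1-\pi_n^*}+O(1)\approx \frac{4c^*(2k-n)}{\sqrt n}.\]
This holds because mixing with the uninformative component only pulls the posterior toward $\tfrac12$. If Nature's finite-$n$ support contains other points, I would use the bound via the largest support point $\bar c_n$, which stays bounded because $\gamma_n^*$ lives on $[0,C]$. Either way, at $|2k-n|\approx n(2\pi_{\text{true}}-1)$ we get $|\lambda^{\text{DM}}_n|\le 4C(2\pi_{\text{true}}-1)\sqrt n+O(1)$, which is $o(n)$ and hence eventually smaller than $|\lambda^{\text{or}}_n|$. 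Intersecting with the LLN event, whose probability tends to $1$, gives the claim.

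I expect the main obstacle to be the uniform control of $a^*_n$ in the finite game. \Cref{thm2} gives only weak convergence of $\gamma_n^*$, which does not immediately say where the support of $\gamma_n^*$ lies. The key point to check is that every support point satisfies $\pi\le\tfrac12+C/\sqrt n$, that is, that the embedding restriction to $[0,C]$ is without loss for the finite games, as asserted in footnote~\ref{fn:precision}. Given that, the log-odds bound holds uniformly in the mixture and the argument is elementary.
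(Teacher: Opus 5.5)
Your part 1 is the paper's argument. For part 2 your route is correct, given the support control you flag, but it differs from the paper's.

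\textbf{How the paper argues.} The paper deduces under-inference from $\rho_n(K)=\bigl(1-q^{\text{true}}_{n,\text{oracle}}(K)\bigr)/\bigl(1-\bm a_n^*(K)\bigr)\xrightarrow{p}0$, taken from the proof of \Cref{thm:mis}, together with $\bm a_n^*(K),\,q^{\text{true}}_{n,\text{oracle}}(K)\xrightarrow{p}1$. That ratio bound relies on $1-\bm a_n^*(k)\ge K_D e^{-2c^*z_k}$ at $z_k\asymp\sqrt n$. In effect it transplants the tail of the limit-game posterior onto the finite-$n$ rule, far outside the range where \Cref{thm2} says anything.

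\textbf{What your route needs instead.} You use no asymptotic description of $\bm a_n^*$. You need only two facts: $\bm a_n^*$ is the Bayes posterior under $\gamma_n^*$ (the quadratic-loss first-order condition), and $\operatorname{supp}\gamma_n^*\subseteq[\tfrac12,\bar\pi_n]$ with $\bar\pi_n\to\tfrac12$. This is a strictly weaker input than the paper's. The paper's lower bound on $1-\bm a_n^*$ already presupposes it: as you observe, a small but not exponentially small mass near $\pi_{\text{true}}$ would make $\bm a_n^*$ track the Oracle near $k\approx n\pi_{\text{true}}$. What the paper's route buys in return is the quantitative link to the $e^{-\Xi\sqrt n}$ rates, which is more than the sign comparison needed here.

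\textbf{Where the support control comes from.} You need not rely on the footnote. It is the upper-bound half of \Cref{thm:general_main}, specialized to $\mathcal S=\{h,l\}$ and $G(p)=p^2$. There $\|\pi_1-\pi_0\|_1=2(2\pi-1)$, so the theorem gives $\pi-\tfrac12=O(1/\sqrt n)$ on the support. The paper only sketches that step, so this is the one load-bearing input in either route.

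\textbf{A simplification.} Once $\bar\pi_n<\pi_{\text{true}}$, your likelihood-ratio bound holds exactly, with no $O(1)$ term, and it finishes the proof without the LLN. For every $k>n/2$,
\[
\lambda^{\text{DM}}_n(k)\le(2k-n)\log\frac{\bar\pi_n}{1-\bar\pi_n}<(2k-n)\log\frac{\pi_{\text{true}}}{1-\pi_{\text{true}}}=\lambda^{\text{or}}_n(k),
\]
and symmetrically for $k<n/2$. So the DM under-infers at every realization except $K=n/2$, whose probability vanishes. The $\sqrt n$-versus-$n$ comparison is therefore unnecessary. For a given $\pi_{\text{true}}$, only $\bar\pi_n<\pi_{\text{true}}$ eventually is used, not the $1/\sqrt n$ rate.
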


\begin{proof}
    See appendix \Cref{pf:prop:inference}.
\end{proof}

When $n$ is small, the DM has only a few observations to rely on. In that case, treating an informative signal as uninformative effectively discards the main source of information, which is particularly costly. As a result, robustness concerns about the possibility of an uninformative signal can make the DM react too strongly to the data, generating over-inference. 

As $n$ grows, however, the DM can extract reliable information via the law of large numbers. In large-sample setting, it becomes more costly to mistake uninformative noise for an informative signal. The same robust strategy then makes the DM overly conservative, leading to systematic under-inference relative to the Oracle.

\section{General Case: Bregman Divergence and Finite Signals}

The analysis in the preceding section assumes binary signals and a quadratic loss function (MSE). A natural question is whether the $1/\sqrt{n}$ rate of ambiguity—and the resulting incomplete learning—is an artifact of this specific specification. 

In this section, we prove that this rate is a fundamental property of robust learning that holds for \emph{arbitrary signal spaces} and \emph{any strictly proper scoring rule}.

\subsection{General Setup}

\paragraph{Scoring Rules and Regret.}
Let $G:[0,1]\to\mathbb{R}$ be any strictly convex, continuously differentiable function. As established in  \Cref{sec:model}, the regret associated with a strictly proper scoring rule generated by $G$ is the Bregman divergence:
\[ B_G(p \| q) = G(p) - G(q) - (p-q)G'(q). \]
On the compact set $[0,1]^2$, $B_G$ is continuous and bounded.

\paragraph{Experiments.}
Let $\mathcal{S}$ be a finite set of signals. An \emph{experiment} is denoted by $\bp=(\pi_1,\pi_0)\in \Delta(\mathcal{S})^{\{0,1\}}$, where $\pi_\theta\in\Delta(\mathcal{S})$ is the signal distribution conditional on state $\theta$. Let $\Pi\subseteq\Delta(\mathcal{S})\times\Delta(\mathcal{S})$ be the set of feasible experiments. We assume there exists a unique uninformative signal distribution $\pi^0\in\Pi$, so the completely uninformative experiment is $\bp^0=(\pi^0,\pi^0)$. To ensure the problem is well-posed (i.e., high signals indicate state 1), we also assume that for any $(\pi_1, \pi_0)\in \Pi$:
\[ D_{KL}(\pi_1\|\pi_1')\le D_{KL}(\pi_1\|\pi_0') \quad \text{and} \quad D_{KL}(\pi_0\|\pi_0')\le D_{KL}(\pi_0\|\pi_1') \]
for any other $(\pi_1',\pi_0')\in \Pi$. This means under the true state, the signal distribution associated with that state is closer (in KL divergence) than the distribution associated with the opposite state.

\paragraph{The Game.}
With $n$ samples, Nature chooses a probability measure $\sigma_n$ on $\Pi$ with support $\Pi_n\subset \Pi$. Given a realized experiment $\bp=(\pi_1,\pi_0)$ and state $\theta$, the data $K=(K_s)_{s\in\mathcal{S}}$ follows a Multinomial distribution with likelihood $l(\pi_\theta;K) = \prod_{s \in \mathcal{S}} \pi_\theta(s)^{K_s}$.

The Oracle, knowing $\bp$, forms the posterior:
\[
q_{\text{oracle}}(K;\bp)\;=\;\frac{\mu\,l(\pi_1;K)}{\mu\,l(\pi_1;K)+(1-\mu)\,l(\pi_0;K)},
\]
with $\mu$ being the prior belief on state $\theta=1$.
The DM chooses a rule $a(K)$ to minimize ex-ante expected regret against $\sigma_n$. The minimax problem is the $[P]$ defined in  \Cref{subsec:example}:
\begin{equation*}
\min_{a(\cdot)} \max_{\sigma_n \in \Delta(\Pi)} \mathbb{E}_{\sigma_n, K} \left[ B_G\left(q_{\text{oracle}}(K; \bp) \parallel a(K)\right) \right]. \eqno{[P]}
\end{equation*}

\subsection{Main Result}

\begin{theorem}\label{thm:general_main}
In \textit{any} equilibrium, the support of Nature's strategy $\Pi_n$ must collapse to the uninformative experiment $\pi^0$ at the CLT rate:
        \begin{align*}
        \sup_{\bp \in \Pi_n} ||\pi_1 - \pi_0||_1 &= \Theta(1/\sqrt{n}).
        \end{align*}
The ex-ante equilibrium regret remains strictly positive and does not vanish as $n\to\infty$.
\end{theorem}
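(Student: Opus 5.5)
Write $\delta(\bp):=\|\pi_1-\pi_0\|_1$ and let $V_n$ denote the equilibrium (minimax) value of $[P]$ with $n$ draws. The plan is to sandwich $V_n$ between positive constants and then translate those bounds into bounds on how far Nature's support can sit from $\bp^0$. The three steps are: (i) a uniform positive lower bound on $V_n$, obtained from a local experiment at distance $c/\sqrt n$; (ii) an upper bound on the regret that any experiment with $\sqrt n\,\delta(\bp)\to\infty$ can generate against the equilibrium rule; (iii) a separate argument showing the support cannot shrink faster than $1/\sqrt n$.

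\textbf{Lower bound on the value.} Fix a direction $v$ tangent to $\Pi$ at $\bp^0$, so that $\bp_c:=\bp^0+(c/\sqrt n)v\in\Pi$ for small $c$. Consider Nature's strategy that mixes $\bp^0$ and $\bp_c$ with weights $1-w$ and $w$. By the local asymptotic normality argument of \Cref{prop:limit regret}, now applied to the multinomial log-likelihood ratio, the law of $K$ converges to a Gaussian shift experiment of the same form as the one in the limit game $\tilde\Gamma$. The DM's Bayes risk against this mixture therefore converges to a limit Bayes risk. That limit is strictly positive: observing $z$ never reveals whether the experiment was $\bp^0$ or $\bp_c$, so the Oracle posteriors $\mu$ and $q(z;c)\neq\mu$ are both realized with positive probability, and strict convexity of $G$ gives a positive Bregman gap. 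Hence $\liminf_n V_n>0$, which is the last sentence of the theorem.

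\textbf{Upper bound on the support.} Let $a_n^*$ be the DM's equilibrium rule. Every $\bp$ in the support $\Pi_n$ must attain the value, that is, $\E_{K|\bp}[B_G(q_{\text{oracle}}\|a_n^*)]=V_n\ge v>0$. Suppose instead that $\sqrt n\,\delta(\bp_n)\to\infty$ along some sequence $\bp_n\in\Pi_n$. Hellinger or Chernoff bounds then give that the Oracle posterior under $\bp_n$ converges to $\theta$ in probability, at total-variation error of order $e^{-c\,n\delta^2}$. The key point is that the DM can exploit such an experiment: by the minimax theorem (Sion), $a_n^*$ is a best response to $\sigma_n$. I would compare it with a deviation that takes $a_n^*$ and replaces it by a near-$\{0,1\}$ belief on the region where the log-likelihood ratio statistic exceeds a level of order $\sqrt n\,\delta$. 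The KL well-posedness assumption ensures this statistic is monotone in the correct direction. This deviation leaves the regret under nearby experiments essentially unchanged, but it drives the regret under $\bp_n$ to zero. That contradicts Nature's indifference, because Nature could then reassign the mass on $\bp_n$ to increase regret.

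Made rigorous, this says that the regret function $\bp\mapsto R(a_n^*,\bp)$ tends to $0$ uniformly on $\{\delta\ge M_n/\sqrt n\}$ for every $M_n\to\infty$. This is the \textbf{main obstacle}. The proof needs a uniform bound, holding over arbitrary $\sigma_n$, on the Bayes rule's regret against well-separated experiments. I would try first to reduce to the limit game via the embedding of \Cref{thm2}. There, the analogue of the limit regret $\tilde R(a^*,c)$ tends to $0$ as $c\to\infty$ (see \cref{fn:precision}), and the same convergence tool would transfer this to the finite games.

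\textbf{Lower bound on the support.} Suppose $\sup_{\Pi_n}\sqrt n\,\delta\to0$. Then all experiments in the support become asymptotically indistinguishable from $\bp^0$, the Oracle posteriors converge to $\mu$, and the regret against Nature's mixture tends to $0$ for the DM's response $a\equiv\mu$. This contradicts $\liminf V_n>0$. Combining the two bounds gives $\sup_{\bp\in\Pi_n}\delta(\bp)=\Theta(1/\sqrt n)$.
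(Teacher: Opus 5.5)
Your decomposition matches the paper's: your (i) is its Step 3, your (ii) its Step 1, and your (iii) its Step 2. Steps (i) and (iii) agree with the paper. The gap is (ii). You rightly call it the main obstacle, but you do not resolve it, and neither route you propose would.

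\textbf{The deviation argument.} It yields no contradiction. That $a_n^*$ is a best response to $\sigma_n$ constrains only the $\sigma_n$-\emph{average} of $\bp\mapsto R(a_n^*,\bp)$. What you need is the pointwise fact $R(a_n^*,\bp_n)\to0$. Changing $a_n^*$ on the extreme region gains roughly $\sigma_n(\text{far})\,V_n$. It costs the (also tiny) probability that $\bp^0$ and the local experiments assign to that region. So nothing is contradicted unless $\sigma_n$ puts non-negligible mass near $\bp_n$, and a support point can carry arbitrarily little mass. In any case, a contradiction here would be with the DM's optimality, not with Nature's indifference.

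\textbf{The reduction to the limit game.} It fails on three counts. First, \Cref{thm2} and $\tilde\Gamma$ exist only for binary signals with quadratic loss. Second, \Cref{thm2} treats finite-game strategies as measures on $[0,C]$ in the scale $c=\sqrt n(\pi-\tfrac12)$, so it takes as given the $O(1/\sqrt n)$ support bound you are trying to prove. Weak convergence $\gamma_n^*\to\gamma^*$ is also blind to vanishing mass far out, so it could not bound $\sup_{\Pi_n}$ anyway. Third, \Cref{prop:limit regret} holds pointwise for fixed $c$, so it says nothing about $c_n\to\infty$. Moreover, the limit-game fact you cite, $\tilde R(\bm a,c)\to0$ as $c\to\infty$, holds only for rules with $\bm a(z)\to1$ as $z\to\infty$ and $\bm a(z)\to0$ as $z\to-\infty$; for $\bm a\equiv\tfrac12$ the limit is $\tfrac14$. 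So the real content of (ii) is the behaviour of $a_n^*$ on extreme samples.

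\textbf{What the paper does instead.} It does not use the limit game for this step. Its Step 1 works directly with $a_n^*$ as the Bayes posterior under $\sigma_n$. On samples typical of the far experiment, experiments KL-far from it receive exponentially small likelihood, and the remaining ones are strongly informative. Hence $a_n^*\to\theta$, the Bregman regret at that support point vanishes, and this contradicts $R(a_n^*,\bp_n)=V_n\ge v>0$. That route works for arbitrary $\mathcal S$ and $G$ and targets the right object.

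\textbf{The missing ingredient.} Both the paper's sketch and your deviation argument leave one step implicit, and you need it to make either airtight. What pushes $a_n^*$ toward $\theta$ on such samples is not the possibly negligible mass near $\bp_n$. It is the mass on informative \emph{local} experiments. You therefore need a uniform lower bound on the $\sigma_n$-mass of experiments whose $\sqrt n\,\delta(\bp)$ lies in a fixed interval bounded away from $0$ and $\infty$.

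In the binary case, in the $c$-scale, take mass at least $w_0>0$ on $c\in[c_0,M]$. This bound does follow from $V_n\ge v$. Otherwise, reporting $\mu$ on moderate samples and a belief near $0$ or $1$ on extreme ones would give Bayes risk below $v$. The bound gives posterior odds of at least order $w_0e^{2c_0z-2c_0^2}$ at standardized count $z=(2k-n)/\sqrt n\ge 2M$. Hence $R(a_n^*,\bp)\to0$ uniformly over experiments with $c\ge M_n\to\infty$, which is exactly the statement you set out to prove.
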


\begin{proof}
    See appendix \Cref{pf:thm:general_main}.
\end{proof}

This theorem confirms that the $1/\sqrt{n}$-rate derived in the MSE game is not an artifact of quadratic loss; it is the fundamental ``sweet spot'' for adversarial ambiguity under any proper scoring rule. 
Nature’s strategy of scaling precision at the rate $1/\sqrt{n}$ ensures that the signal-to-noise ratio remains constant even as $n \to \infty$. 



\subsection{Proof Intuition}

The formal proof is relegated to the Appendix. Here, we outline the heuristic argument showing why $\Theta(1/\sqrt{n})$ is the unique rate that sustains a non-trivial equilibrium.

\paragraph{Why not too informative? (The Upper Bound)}
Suppose Nature includes an experiment in which the signal strength is asymptotically larger than $1/\sqrt{n}$. 
More formally, we write $f(n)=\omega(g(n))$ if $f(n)/g(n)\to\infty$ as $n\to\infty$.
Thus, Nature includes an experiment where the signal strength is of order $\omega(1/\sqrt{n})$. This implies the KL-divergence is of order $\omega(1/n)$. Given this experiment is the realized true experiment $\bp_{true}$, the DM can statistically partition the support $\Pi_n$ into two sets based on the observed data:
\begin{itemize}
    \item \textbf{Distinguishable DGPs ($\mathcal{B}$):} Experiments that are statistically far from the truth.
    \[ \mathcal{B} = \left\{ \bp \in \Pi_n : D_{KL}(\bp_{true} \| \bp) = \omega(1/n) \right\} \]
    \item \textbf{Indistinguishable DGPs ($\mathcal{A}$):} Experiments close to the truth.
    \[ \mathcal{A} = \Pi_n \setminus \mathcal{B} \]
\end{itemize}
With large $n$, the likelihood of experiments in $\mathcal{B}$ vanishes relative to $\bp_{true}$. The DM effectively rules out $\mathcal{B}$ and forms beliefs based on $\mathcal{A}$. Since the experiments in $\mathcal{A}$ are highly informative (converging to the truth which reveals the state), the DM learns the state perfectly. Regret converges to zero.

\paragraph{Why not too uninformative? (The Lower Bound)}
Suppose Nature includes an experiment in which the signal strength is asymptotically smaller than $1/\sqrt{n}$. More formally, we write $f(n)=o(g(n))$ if $f(n)/g(n)\to 0$ as $n\to\infty$.
Thus, Nature includes an experiment where the signal strength is of order $o(1/\sqrt{n})$. This implies the KL-divergence is $o(1/n)$. By the Law of Large Numbers, the log-likelihood ratios converge to 0. Consequently, the Oracle's posterior converges to the prior $\mu$:
\[ q_{\text{oracle}}(K) \xrightarrow{p} \mu. \]
Anticipating this, the DM can simply ignore the data and report the prior, $a(K) = \mu$. Since both the Oracle and the DM report $\mu$, the Bregman divergence vanishes. Regret converges to zero.

Thus, to prevent the regret from vanishing, Nature must choose a rate where the signal is strong enough to move the Oracle's posterior away from the prior, but weak enough to prevent the DM from perfectly distinguishing the DGP from noise. The only rate satisfying this is $\Theta(1/\sqrt{n})$.

\section{Related Literature}

\subsection{Decision-Theoretic and Behavioral Approaches to Learning}

Our paper contributes to a growing literature that moves beyond standard Bayesian updating to model how individuals learn in complex environments. \smallskip

{\it (i) Learning under Ambiguity.}  Our framework is closely related to the literature on learning in the presence of ambiguity. Early foundations for this approach were established by \cite{wald1950statistical} under the maxmin principle and \cite{savage1951theory} via the minimax regret criterion. The subsequent decision-theoretic literature characterizes the axiomatic properties and testable implications of various belief-updating rules under ambiguity (e.g., \cite{marinacci2002learning, epstein2007learning, marinacci2019learning}; see \cite{siniscalchi2009vector} and \cite{gilboa2016ambiguity} for a survey). These papers, focusing mostly on settings with a single signal draw, theorize learning as a protective response to uncertainty, proposing rules where the decision-maker (DM) adopts a ``worst-case'' interpretation of the signal tailored to the specific decision at hand.

Similarly to our approach, \cite{frick2022learning} and \cite{reshidi2025asymptotic} study a DM learning from a large sample of signals with unknown precision. Unlike our framework, however, they both adopt a {\bf maximin utility criterion}, which typically focuses on the least informative interpretation of realized signals as the ``worst-case.'' To avoid trivial no-learning outcomes, both studies assume strictly informative signals with precision bounded away from zero. Under this assumption, the DM eventually achieves full learning as the data size increases in \cite{frick2022learning}.\footnote{Their objective is instead to compare the learning rates and degree of dynamic inconsistency under alternative updating rules.}

By contrast, and similarly to our results, asymptotic learning fails in \cite{reshidi2025asymptotic}. This failure arises because the DM in their model identifies a {\it separate} adversarial DGP for {\it each} realized sample of signals, allowing for non-identical precision across observations. This {\bf sample-dependence} of the DGP destroys signal-independence, thereby causing the Law of Large Numbers (LLN) to fail.\footnote{In their model, an ex ante sample distribution, even when it is well defined,   is {\it non-identical} and {\it non-independent} across signal draws. The latter results in an LLN failure. If they had required consistency across samples and thus ensured signal-independence, as we require in our model, complete asymptotic learning (the LLN) would have held, even with non-identical signals, by virtue of McDiarmid's inequalities. } Our ex-ante approach, however, requires the adversarial DGP (i.e., signal precision) to remain identical across different sample realizations. While this consistency requirement preserves signal-independence, asymptotic learning still fails in our framework because adversarial Nature can strategically select signal precisions that converge toward uninformative levels.  Crucially, our {\bf minimax regret} criterion prevents the trivial no-learning outcomes typical of maxmin models. It ensures that even in the worst case, learning occurs at the CLT rate, characterized by a limiting Normal distribution.

 Beyond consistency across signals, our paper differs from existing ambiguity models in the consistency of the decision rule. In much of the existing literature, the ``worst-case'' posterior is {\bf act-dependent}; an agent may interpret the same signal $K$ differently depending on whether they face choice $f$ or $g$. Under the standard linear preferences, the resulting adversarial beliefs are generically extreme---collapsing to a single state except in knife-edge cases. In contrast, we require that the interpretation rule be consistent across all potential decision problems. By committing ex-ante to a rule that minimizes regret across a portfolio of potential future tasks, our DM effectively faces a spectrum of possible cost thresholds. This task aggregation induces convex ex-ante preferences, resulting in stable, non-degenerate beliefs that provide a robust interpretation of information across diverse contexts.

\smallskip

 {\it (ii) Decision and Signal Independent Updating.}
Similar to our approach, a significant strand of literature develops non-Bayesian theories of updating that are independent of any particular decision problem or the DM's preferences on that decision \citep{perea2009model, ortoleva2012modeling, zhao2022pseudo, dominiak2023inertial, tang2024theory}. At a high level, these models stylize belief formation as a process of minimal revision or geometric projection, where the DM moves their prior the ``shortest distance'' to a new belief state consistent with new information \citep{zhao2022pseudo, ke2024learning} or shifts paradigms based on logical consistency and surprise thresholds \citep{ortoleva2012modeling, basu2019bayesian}.

While both we and this literature seek act-independent beliefs, we distinguish ourselves by replacing geometric economy or logical consistency with the guiding principle of robustness. Our updating rule is endogenously driven by ambiguity aversion, providing a minimax-optimal guarantee against a Nature that may strategically provide uninformative data.
\smallskip

 {\it (iii) Behavioral Biases and Cognitive Constraints.}
Our work also touches upon a behavioral literature exploring systematic departures from Bayesian rationality within Expected Utility settings, focusing on internal cognitive or psychological constraints \citep{epstein2006temptation, epstein2008non, bordalo2023memory}. These models frequently identify biases such as overreaction to vivid signals or conservatism—a tendency to update less than a Bayesian—resulting from exogenous factors like memory interference or self-control costs \citep{epstein2006temptation, bordalo2023memory}.

We provide a decision-theoretic foundation for this trait, which we term \textit{under-inference}, but derive it as an endogenous response to DGP ambiguity. Unlike behavioral models of cognitive failure, our ``asymptotic caution'' is a strategic equilibrium where the DM under-infers to guard against a worst-case scenario of degraded signal precision.

\subsection{Asymptotic Statistics and Local Alternatives}


 Our work relates to a prominent literature applying statistical decision theory to treatment choice problems   \citep{manski2004statistical, stoye2009minimax, hirano2009asymptotics, tetenov2012statistical, montielolea2024decision}, which utilizes the minimax regret criterion to resolve state ambiguity. The most fundamental departure of our work from this tradition lies in the assumptions on how the data-generating process behaves with sample size. In these existing studies, signal precision is a fixed parameter of the stationary distribution and does not vary with the same size. Hence, the Law of Large Numbers (LLN) ensures that as the sample size $n$ grows, the signal-to-noise ratio increases indefinitely, eventually revealing the true state with perfect clarity. Consequently, the statistical decision problem in this framework becomes trivial in the limit, and the global maximum regret vanishes.

To maintain a non-degenerate limit experiment, this literature exogenously applies a ``magnifying glass'' to the indifference point $\theta_0$  between decision alternatives. By adopting a local parameterization of the form $\theta_n = \theta_0 + h/\sqrt{n}$, as pioneered in the Local Asymptotic Normality (LAN) framework  \citep{lecam1960locally, van2000asymptotic}, researchers like \cite{hirano2009asymptotics} zoom in on the microscopic level where the state is so close to the decision boundary that it remains difficult to distinguish from zero despite the high volume of data. While this characterizes local trade-offs, the $1/\sqrt{n}$ rate acts as an exogenous lens used to create a non-trivial problem out of an otherwise transparent environment.

By contrast, in our framework, the signal precision is determined endogenously and adversarially by Nature and depends on the sample size $n$.  Hence, the LLN fails (recall  \Cref{co2} and more generally  \Cref{thm:general_main}), and the decision problem remains non-trivial at the macroscopic scale. We show that the $1/\sqrt{n}$ scaling emerges endogenously as the equilibrium strategy of a Nature that strategically degrades signal quality to match the sample size. We thus offer a dual framework for the local alternatives method. While standard LAN models keep the noise fixed and scale the state alternatives ($\Delta\theta \sim 1/\sqrt{n}$), our model keeps the state global and scales the noise. This duality demonstrates that the Gaussian shift experiment---usually viewed as a local approximation---is a global structural property of learning under strategic ambiguity.


Beyond the statistical scaling, a crucial distinction lies in the ``decision dependence'' of the resulting rules. In the existing literature, the payoff is defined with respect to a specific, typically binary, decision problem. This yields a linear regret function, which is minimized by a ``cutoff'' or ``threshold'' rule \citep{stoye2009minimax, tetenov2012statistical, hirano2009asymptotics}. Such rules generate beliefs that are extreme except at the cutoff, as they are tailored to the requirements of a single task. By contrast, we follow the logic of \cite{schervish1989general} and aggregate over a continuum of potential tasks. This aggregation induces a strictly convex regret function, which is optimized by stable, non-degenerate posteriors that apply to multiple decisions rather than a single decision.

\section{Conclusion}

This paper introduces a regret-minimization framework for a DM who may learn robustly---i.e., form posterior beliefs---when the process generating her data is uncertain. We model the problem as a zero-sum game: the DM chooses posterior beliefs to minimize ex-ante Bregman Divergence regret, while an adversarial Nature chooses the quality of data. Our analysis provides precise solutions for small sample sizes and reveals a key strategic insight in the large-sample limit: Nature's optimal strategy is to degrade the informativeness of the data at a specific rate, $1/\sqrt{n}$, as the sample size grows.

This leads to the failure of complete learning and failure of Law of Large Numbers against adversarial Nature, as the decision-maker's regret remains positive even with infinite data; and provides a novel, game-theoretic interpretation for the ``local alternatives'' framework in statistics.

By contrast, when the true DGP is fixed and informative, despite the misspecification, the DM eventually learns the state with enough data. However, robust learning is sub-exponential and therefore much slower than the Oracle’s exponential rate.  To guard against a worst-case scenario in which Nature has degraded the signal’s precision to nearly uninformative levels, the DM systematically under-infers from the data. These two results quantify the “price” of robustness.

More broadly, our results show that ambiguity can have a non-trivial effect on inference and learning, with implications for the design of robust statistical procedures and hypothesis testing. When the DGP is unknown, signals cannot be mapped unambiguously to likelihood ratios, making it unclear how to select the hypothesis.   In this case, one can use our model's robust beliefs to resolve the ambiguity and obtain a cutoff rule, much as in standard hypothesis testing.  Such a selection rule can be certified, based on our theory, to have a meaningful loss guarantee.

An immediate and promising avenue for future research is to extend this static, ex-ante framework to a dynamic setting. Investigating how a decision-maker would robustly update her beliefs as signals arrive sequentially would connect our approach to the rich literature on learning under ambiguity and dynamic consistency. Further extensions could also enrich the environment by considering non-binary states or alternative structures of ambiguity, deepening our understanding of how agents can learn effectively in an information-rich but fundamentally uncertain world.

\bibliographystyle{ACM-Reference-Format}
\bibliography{bibliography}

\appendix
\section{Proofs}

\subsection{Microfounding $G$ via Task Aggregation}
\label{app:schervish_microfoundation}

This section provides the formal justification for the interpretation of $G''(p)$ as the density of future decision problems. We describe how an ex-ante proper scoring rule arises from a portfolio of binary decisions, as pioneered by \cite{schervish1989general}.

\subsubsection{The Setup}
Consider a decision-maker (DM) who will eventually face a binary decision problem $(\theta, c)$. The DM must choose an action $x \in \{0, 1\}$. The state $\theta \in \{0, 1\}$ is unknown, and the DM holds a posterior belief $q = \Pr(\theta = 1)$. The parameter $c \in [0, 1]$ represents a cost threshold or indifference point. The payoffs are normalized such that the DM incurs a cost $c$ if she chooses $x=1$ when $\theta=0$ (Type I error), and a cost $1-c$ if she chooses $x=0$ when $\theta=1$ (Type II error).

For a fixed threshold $c$, the DM’s optimal action is $x=1$ if $q > c$ and $x=0$ otherwise. The minimized expected cost (the Bayes risk) for a single problem $c$ given belief $q$ is:
\begin{equation*}
    \ell(q, c) = \min \{ q(1-c), (1-q)c \}.
\end{equation*}

\subsubsection{Task Aggregation and the Generating Function}
Suppose the DM does not know which threshold $c$ she will face, but holds a measure $\lambda$ over the set of possible cost thresholds on $[0, 1]$. Her ex-ante expected cost across this portfolio of tasks is:
\begin{equation*}
    L(q) = \int_0^1 \ell(q, c) d\lambda(c).
\end{equation*}
Following \cite{schervish1989general}, any such $L(q)$ corresponds to a concave function. In our minimax-regret framework, the convex generating function $G(q)$ corresponds to the negative of this expected loss ($G = -L$), representing the ``value of information'' or the utility derived from the optimal decision.

\subsubsection{Weighting Representation}
The following theorem relates the density of the task distribution $\lambda$ to the curvature of the generating function $G$.

\begin{theorem}[Weighting Representation]
Let $G: [0, 1] \to \mathbb{R}$ be a strictly convex and twice-differentiable generating function for a proper scoring rule. There exists a unique non-negative measure $\lambda$ on $[0, 1]$ such that the curvature of $G$ at any point $p$ satisfies:
\begin{equation*}
    G''(p) = \lambda(p),
\end{equation*}
where $\lambda(p)$ is the density of the decision problems whose indifference threshold $c$ is exactly $p$.
\end{theorem}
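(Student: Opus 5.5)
The proof runs through the Schervish integral representation of the task-aggregated loss. Write $L(q)=\int_0^1 \ell(q,c)\,d\lambda(c)$ with $\ell(q,c)=\min\{q(1-c),(1-q)c\}$. The kernel splits according to the threshold: $\ell(q,c)=(1-q)c$ if $c\le q$, and $\ell(q,c)=q(1-c)$ if $c>q$. Hence
\[
L(q)=(1-q)\int_{[0,q]} c\,d\lambda(c)+q\int_{(q,1]}(1-c)\,d\lambda(c).
\]

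\textbf{Differentiating.} When $\lambda$ has a density (denote it also by $\lambda$), I differentiate under the integral. The boundary contributions at $c=q$ are $(1-q)q\lambda(q)-q(1-q)\lambda(q)=0$, so they cancel, leaving
\[
L'(q)=-\int_0^q c\,\lambda(c)\,dc+\int_q^1(1-c)\lambda(c)\,dc .
\]
Differentiating once more gives $L''(q)=-q\lambda(q)-(1-q)\lambda(q)=-\lambda(q)$. With $G=-L$, up to an affine term, this is $G''=\lambda$.

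\textbf{Converse and uniqueness.} Existence in the direction stated (given $G$, find $\lambda$) is immediate: set $d\lambda(c)=G''(c)\,dc$. This is a non-negative measure because $G$ is convex. Then $-L_\lambda$ and $G$ have the same second derivative, so they differ by an affine function $\alpha+\beta q$. An affine addition leaves the induced scoring rule's regret $B_G$ unchanged, since Bregman divergences are invariant to affine perturbations of $G$. So the identification is up to the affine equivalence under which proper scoring rules are defined.

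Uniqueness follows because $L''=-\lambda$ determines $\lambda$ uniquely among absolutely continuous measures. For general measures I would instead work distributionally. The second distributional derivative of $q\mapsto\ell(q,c)$ is $-\delta_c$: the kernel is piecewise linear with slope jump $-c-(1-c)=-1$ at $q=c$. Fubini then gives $L''=-\lambda$ as measures, and a measure is determined by its action on test functions.

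\textbf{Main obstacle.} The delicate step is the regularity bookkeeping. It involves justifying differentiation under the integral (or the distributional Fubini argument) and handling the endpoints $0,1$, where $G''$ may blow up, as in the log-loss example. There $\lambda$ is only $\sigma$-finite on $(0,1)$. I would restrict to $[\epsilon,1-\epsilon]$, establish the identity there, and let $\epsilon\to0$.
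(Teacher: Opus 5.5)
Your proposal is correct and follows essentially the same route as the paper. You split $L(q)$ at $c=q$, differentiate twice by the Leibniz rule (your observation that the boundary terms $\pm q(1-q)\lambda(q)$ cancel is what the paper's first-derivative formula uses implicitly), obtain $L''=-\lambda$, and conclude $G''=\lambda$ from $G=-L$ up to an affine term; your converse step (setting $d\lambda=G''(c)\,dc$ and using the affine invariance of $B_G$) together with the distributional uniqueness argument supplies the direction ``given $G$, find $\lambda$,'' which the paper takes for granted from Schervish's representation.
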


\begin{proof}
The expected loss $L(q)$ can be written as:
\begin{equation*}
    L(q) = \int_0^q (1-q)c \, d\lambda(c) + \int_q^1 q(1-c) \, d\lambda(c).
\end{equation*}
Taking the derivative with respect to $q$ using the Leibniz rule:
\begin{equation*}
    L'(q) = -\int_0^q c \, d\lambda(c) + \int_q^1 (1-c) \, d\lambda(c).
\end{equation*}
Differentiating a second time yields:
\begin{equation*}
    L''(q) = -q \, \lambda(q) - (1-q) \, \lambda(q) = -\lambda(q).
\end{equation*}
Since $G(q) = -L(q)$ (plus potentially a linear term that disappears upon second differentiation), we obtain:
\begin{equation*}
    G''(q) = -L''(q) = \lambda(q).
\end{equation*}
Thus, the density of tasks at threshold $p$ is precisely the curvature of the generating function $G$ at that point.
\end{proof}

\subsection{Proof of  \Cref{prop1}}\label{pf:prop1}
\begin{proof}
The ex ante regret is:
\begin{align*}
    R(\bm{a},\pi)&=\sum_{k=0}^1 \Pr(k|\pi)(\Pr(\t=1|\pi,k)-a_k)^2\\
    &=\sum_{k=0}^1\frac{1}{2}(\Pr(\t=1|\pi,k)-a_k)^2\\
    &=\frac{1}{2}(\pi-a_1)^2+\frac{1}{2}(1-\pi-a_0)^2,
\end{align*}

Let $w$ be the weight that nature puts on $\pi=1$. We claim that $(a_1^*, a_0^*, w^*)=(\frac{3}{4},\frac{1}{4},\frac{1}{2})$ is the candidate solution. We verify that $w^*=\frac{1}{2}$ and $(a_0^*,a_1^*)$ form a saddle point. 

First, fixing $w^*=\frac{1}{2}$, 
\[wR(\bm{a},1)+(1-w)R(\bm{a},\frac{1}{2})=\frac{1}{4}(a_1-1)^2+\frac{1}{4}a_0^2+\frac{1}{4}(a_1-\frac{1}{2})^2 +\frac{1}{4}(a_0-\frac{1}{2})^2,\]
which is minimized at $a_1=\frac{3}{4}$, $a_0=\frac{1}{4}$.

Also, notice that $R(\bm{a}, \pi)$ is convex in $\pi$. This means that nature's optimal solution is to choose either $\pi=1$ and/or $\pi=\frac{1}{2}$. Fixing $(a_0^*, a_1^*)$, we have
\begin{align*}
    R(\bm{a},1)=R(\bm{a},\frac{1}{2}) \iff \frac{1}{2}(a_1-1)^2+\frac{1}{2}a_0^2 =\frac{1}{2}((a_1-\frac{1}{2})^2 +(a_0-\frac{1}{2})^2). 
\end{align*}
Thus, mixing between $\pi=\frac{1}{2}$ and $1$ is optimal for nature given the DM's action.
    \end{proof}

\subsection{Proof of  \Cref{thm:limit_eq}}\label{pf:thm:limit_eq}
\begin{proof}
We first show equation \Cref{eq:indifference} and \Cref{eq:localoptimality} can characterize the limit game's Nash equilibrium. Given nature's equilibrium strategy $\gamma$, DM's best response is derived via first order condition:
\begin{equation*}
a(z)=\frac{\E_{\gamma}\int_{-\infty}^{\infty}(\phi(z+2c^*)+\phi(z-2c^*))P(\theta|z,c^*) dz}{\E_{\gamma}\int_{-\infty}^{\infty}(\phi(z+2c^*)+\phi(z-2c^*)) dz}    
\end{equation*}

Let $\gamma(0)=1-w^*$, $\gamma(c^*)=w^*$,  nature's payoff when choosing $0$ is
\begin{equation*}
2\cdot\frac{1}{2}\int_{-\infty}^{\infty}\phi(z)\left(\frac{1}{2}-\frac{1-w^*+w^*e^{2c^*z-2{c^*}^2}}{2(1-w^*)+w^*(e^{2c^*z-2{c^*}^2}+e^{-2c^*z-2{c^*}^2})}\right)^2 dz    
\end{equation*}
Nature's payoff when choosing $c$ is
\begin{equation*}
\int_{-\infty}^{\infty} \frac{\left[ \phi\left(z-2c^*\right) + \phi\left(z+2c^*\right) \right]}{2} 
 \left[ \frac{(1-w^*)(e^{-2c^*z} - e^{2c^*z})}{\left[2(1-w^*) + w^*e^{-2{c^*}^2} \left(e^{2c^*z} + e^{-2c^*z}\right)\right]\left(e^{2c^*z} + e^{-2c^*z}\right)} \right]^2 dz    
\end{equation*}
which is exactly the same as $\tilde{R}(0)$ and $\tilde{R}(c^*)$. Thus nature's optimality condition can be captured by equation \Cref{eq:indifference} and \Cref{eq:localoptimality}.

Equation \Cref{eq:indifference} and \Cref{eq:localoptimality} can be written as 
\begin{align*}
F_1(c,w)&=\int_{-\infty}^{\infty} \frac{\left[ \phi\left(z-2c\right) + \phi\left(z+2c\right) \right]}{2} \cdot 
 \left[ \frac{(1-w)(e^{-2cz} - e^{2cz})}{\left[2(1-w) + we^{-2c^2} \left(e^{2cz} + e^{-2cz}\right)\right]\left(e^{2cz} + e^{-2cz}\right)} \right]^2 dz\\
 &-\int_{-\infty}^{\infty} \phi(z) \left(\frac{w(e^{2cz-2c^2}-e^{-2cz-2c^2})}{2[2(1-w)+w(e^{2cz-2c^2}+e^{-2cz-2c^2})]}\right)^2 dz\\
&=0
\end{align*}

\begin{align*}
F_2(c,w) = & \frac{d\tilde{R}(\frac{1}{2}+\frac{c}{\sqrt{n}})}{dc}\\=& \int_{-\infty}^{\infty}  (\frac{(1-w) + w e^{2c(z-c)}}{2(1-w) + w \left(e^{2c(z-c)} + e^{-2c(z+c)}\right)} - \frac{1}{1+e^{-4cz}}) \times \\
& \Bigg\{ \left[ (z-2c)\phi(z-2c) - (z+2c)\phi(z+2c) \right]\\
&(\frac{(1-w) + w e^{2c(z-c)}}{2(1-w) + w \left(e^{2c(z-c)} + e^{-2c(z+c)}\right)} - \frac{1}{1+e^{-4cz}}) \\
& \quad - 4z \left(\phi(z-2c) + \phi(z+2c)\right) \frac{e^{-4cz}}{(1+e^{-4cz})^2} \Bigg\} dz\\
&=0
\end{align*}

\subsubsection*{Existence}
\begin{lemma}
There exists a pair $(c^*,w^*) \in (0,1] \times [0,1]$ that solves
\begin{align*}
    F_1(c,w) &= 0, \\
    F_2(c,w) &= 0.
\end{align*}
\end{lemma}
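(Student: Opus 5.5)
We will reduce the two-equation system to a one-dimensional problem. For each fixed $c>0$ we first solve the indifference equation $F_1(c,w)=0$ in $w$, and then apply the intermediate value theorem in $c$ to $F_2$ along the resulting curve.

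\textbf{Step 1: solving $F_1=0$ in $w$.} Fix $c\in(0,1]$. The first term of $F_1$ is the regret from the informative component, $\tilde R(\bm a_w, c)$, where $\bm a_w$ is the best response to the mixture that puts weight $w$ on $c$. The second term is $\tilde R(\bm a_w,0)$.

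At $w=0$ the DM's best response is $\bm a\equiv 1/2$. Then the second term vanishes. The first term equals the expected squared distance between the oracle posterior $1/(1+e^{-4cz})$ and $1/2$, which is strictly positive. So $F_1(c,0)>0$.

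At $w=1$ the DM's best response is the oracle posterior, so the first term vanishes. The second term is strictly positive for $c>0$, so $F_1(c,1)<0$.

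Both integrands are dominated by integrable Gaussian envelopes, since all posteriors lie in $[0,1]$. Dominated convergence therefore makes $F_1$ jointly continuous on $(0,1]\times[0,1]$. Hence for each $c$ there is some $w(c)\in(0,1)$ with $F_1(c,w(c))=0$.

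To obtain a continuous selection, we would show that $F_1$ is strictly decreasing in $w$. This follows because raising $w$ moves $\bm a_w$ toward the oracle posterior at $c$ and away from $1/2$, and one can check the sign of $\partial_w F_1$ by differentiating under the integral. If strict monotonicity proves messy, we would instead take $w(c)=\sup\{w:F_1(c,w)\ge 0\}$ and use a connectedness argument on the zero set of $F_1$.

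\textbf{Step 2: sign change of $F_2$ along the curve.} Define $h(c)=F_2(c,w(c))$, which is continuous on $(0,1]$. We look at the two ends of the interval.

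Small $c$. Expand the integrands in powers of $c$. The oracle posterior is $1/2+cz+O(c^3)$, and $\bm a_w$ is $1/2+wcz+O(c^2)$ to leading order. The regret at the point $c'$ is then approximately $(1-w)^2c'^2\,\E[z^2]$, and the cross terms are of higher order. So $\partial_c\tilde R$ is of order $(1-w)^2 c>0$, giving $h(c)>0$ for small $c$.

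Large $c$, meaning $c=1$ or beyond. We want $h<0$: the regret from the informative component should decrease as the signal becomes nearly perfectly revealing. The reason is that the two Gaussians separate, so the region where the DM's rule differs from the oracle posterior moves into the Gaussian tails. Concretely, we would evaluate $F_2(1,w(1))$ numerically, using $c^*\approx 0.799<1$, and bound it rigorously by Gaussian tail estimates.

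If the sign at $c=1$ cannot be secured cleanly, we would enlarge the interval to $(0,C]$ for large $C$. There $\tilde R(\cdot,c)\to 0$ as $c\to\infty$, which forces $\partial_c\tilde R<0$ somewhere, and an intermediate value argument still applies. The intermediate value theorem then gives $c^*$ with $h(c^*)=0$, and we set $w^*=w(c^*)$.

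\textbf{Main obstacle.} The hard step is establishing the negative sign of $F_2$ at the upper end together with a well-defined continuous branch $w(c)$. Both require quantitative control of the integrals rather than soft arguments. We would try to handle them first via explicit asymptotics in $c$ and, if needed, a monotonicity argument in $w$ based on the convexity of the Bayes regret.
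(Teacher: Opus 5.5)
Your plan is essentially the paper's proof. You use the intermediate value theorem in $w$ from $F_1(c,0)>0>F_1(c,1)$, get a continuous branch $w_1(c)$ from $\partial_w F_1<0$ (the paper's Lemma~\ref{lemma9} plus its global implicit-function Lemma~\ref{lemma4}), and then apply the intermediate value theorem to $J(c)=F_2(c,w_1(c))$, with the negative sign at $c=1$ obtained numerically in both versions.

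The one real difference is that you get $J>0$ near $c=0$ from an expansion rather than from the paper's numerical value $J(0.2)\approx 0.138$. This works, with two corrections. First, holding the DM's action fixed, the expansion gives $F_2(c,w)=2(1-w)c+O(c^3)$ with $w_1(c)\to\tfrac12$, not something of order $(1-w)^2c$; the sign is unaffected. Second, your fallback of enlarging to $(0,C]$ would not give $c^*\in(0,1]$ as stated. Nor does it follow from $\tilde R(\bm a,c)\to 0$, because $F_2(c,w_1(c))$ differentiates at the same $c$ that determines the action.
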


\begin{proof}
    \textbf{Step 1:}  We show that for any $c \in (0,1]$, there exists some $w \in [0,1]$ such that $F_1(c,w)=0$. 
    
    We observe that for any $c \in (0,1]$, we have \begin{align*}
F_1(c,1)=-\int_{-\infty}^{\infty} \phi(z) \left(\frac{e^{2cz-2c^2}-e^{-2cz-2c^2}}{2(e^{2cz-2c^2}+e^{-2cz-2c^2})}\right)^2 dz<0   
\end{align*}
and
\begin{equation*}
F_1(c,0)=\int_{-\infty}^{\infty} \frac{\left[ \phi\left(z-2c\right) + \phi\left(z+2c\right) \right]}{2} \cdot 
 \left[ \frac{e^{-2cz} - e^{2cz}}{2\left(e^{2cz} + e^{-2cz}\right)} \right]^2 dz>0.
\end{equation*} Thus, the Intermediate Value Theorem ensures what we wanted.

\textbf{Step 2:} We show that the equation $F_1(c,w) = 0$ implicitly defines $w$ as a continuous function of $c$ on $(0,1]$. 

    \begin{lemma}\label{lemma9}
    For all $(c,w) \in (0,1] \times [0,1]$, we have $\frac{\partial F_1(c,w)}{\partial w} \ne 0$.
    \end{lemma}
\begin{proof}
    \begin{align*}
\frac{\partial F_1(c,w)}{\partial w} &= \int_{-\infty}^{\infty} \Bigg( \left[ \phi\left(z-2c\right) + \phi\left(z+2c\right) \right] \cdot \left[ \frac{-(1-w) \sinh(2cz)}{2(1-w)\cosh(2cz) + 2we^{-2c^2} \cosh^2(2cz)} \right] \\
&\quad\cdot \left[ \frac{e^{-2c^2} \sinh(2cz)}{2((1-w) + we^{-2c^2}  \cosh(2cz))^2} \right] \\
&\quad - \phi(z) \cdot 2 \left(\frac{e^{-2c^2} \sinh(2cz)}{2(1-w) + we^{-2c^2}  \left(e^{2cz} + e^{-2cz}\right)}\right) \cdot \left[ \frac{2e^{-2c^2} \sinh(2cz)}{\left[2(1-w) + 2we^{-2c^2}  \cosh(2cz)\right]^2} \right] \Bigg) dz
\end{align*}

Since the first term is negative and the second term is positive, the derivative is negative (with equality when $2cz=0$).

Thus we can show when $c>0$,
\begin{equation*}
\frac{\partial F_1(c,w)}{\partial w} \neq 0. 
\end{equation*}
\end{proof}

Then we establish a lemma that guarantees the global version of the implicit function theorem.\footnote{We believe this result is known, but we could not locate a reliable reference. Proof see appendix \Cref{pf:lemma4}.}

\begin{lemma}\label{lemma4}
Let $F(x,y)$ be continuously differentiable on $(-a,a) \times (-b,b)$, and suppose that $\frac{\partial F}{\partial y}(x,y) \ne 0$ for all $x \in (-a,a)$. If for any $x_0\in(-a,a)$, there exists some $y_0\in(-b,b)$ such that $F(x_0,y_0)=0$, then there exists a unique differentiable function $\varphi(x)$ defined on $(-a,a)$ such that $F(x, \varphi(x)) = 0$ for all $x$ in that interval.
\end{lemma}

By \Cref{lemma4}  and \Cref{lemma9}, the equation $F_1(c,w) = 0$ implicitly defines a continuous function $w = w_1(\pi) \in (0,1)$ with $c \in (0,1]$.

    \textbf{Step 3:} We show that there exists $c^* \in (0,1]$ such that $F_2(c^*, w_1(c^*)) = 0$.

    Define the function $J(c) := F_2(c, w_1(c))$. Since $F_2(c,w)$ is continuous, and $w_1(c)$ is continuous by Step 2. Hence, $J(c)$ is continuous on $(0,1]$.

We can compute endpoints:
\[
J(0.2) = F_2(0.2, w_1(0.2)) \approx 0.138429>0, 
\]
\[
J(1) = F_2(1, w_1(1)) \approx -0.041547<0.
\]

By the Intermediate Value Theorem, there exists $c^* \in (0.2,1)$ such that $J(c^*) = 0$. Therefore, the pair $(c, w) := (c^*, w_1(c^*))$ is a solution to the system.
\end{proof}

\begin{figure}[htbp]
    \centering
    \includegraphics[width=0.8\linewidth]{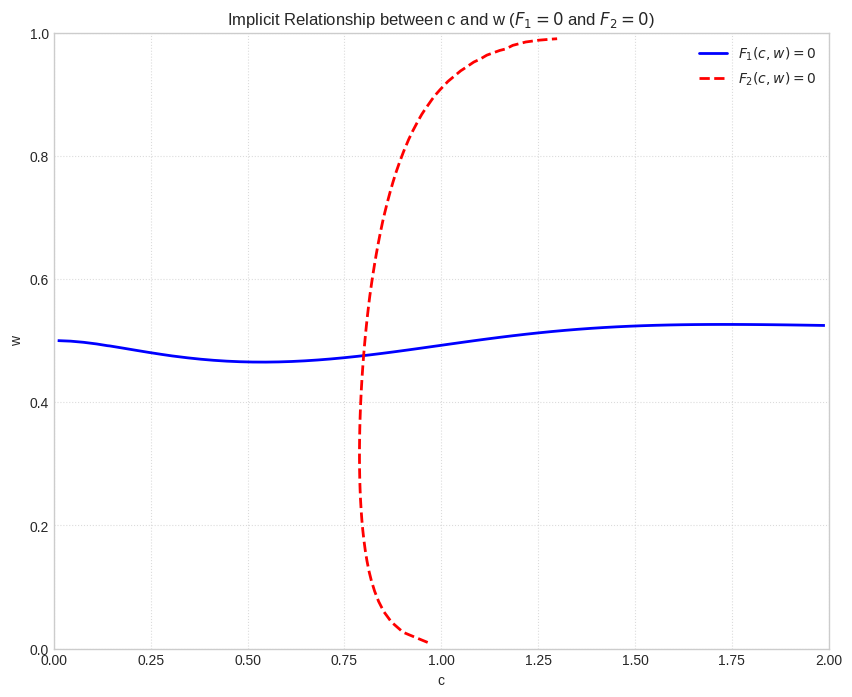}
    \label{fig:enter-label}
\end{figure}

\subsubsection*{Optimality}
First, it can be easily checked that the induced action $\bm{a}=q_{DM}$ is optimal for nature's strategy.  What remains to be shown is that given $\bm{a}$, the $(c^*,w^*)$ is optimal for the nature. We simply denote the limit regret $\tilde R(\bm a^*(c^*,w^*), c)$ as $\tilde R(c)$ in the following contents. Similarly, we also denote the finite regret $R_n(\bm a^*_n(c^*,w^*), c)$ as $\tilde R_n(c)$. 

Fixing DM's best response to $(c^*,w^*)$, we write 
\begin{align}
\tilde{R}(b)& \equiv \int_{-\infty}^\infty\frac{\left[ \phi\left(z-2b\right) + \phi\left(z+2b\right) \right]}{2} \left[\left(\frac{1-w^*+w^* e^{2c^*z-2{c^*}^2}}{2(1-w^*)+w^*(e^{2c^*z-2{c^*}^2}+e^{-2c^*z-2{c^*}^2})}-\frac{1}{1 + e^{-4bz}}\right)^2\right]dz\label{eq31}\\
&=\frac{1}{2}\int_{-\infty}^\infty\phi(z-2b)\left[\left(\frac{1-w^*+w^* e^{2c^*z-2c^2}}{2(1-w^*)+w^*(e^{2c^*z-2c^2}+e^{-2c^*z-2c^2})}-1\right)^2-\left(\frac{1}{1 + e^{-4bz}}-1\right)^2\right]dz\tag*{} \\
&+\frac{1}{2}\int_{-\infty}^\infty\phi(z+2b)\left[\left(\frac{1-w^*+w^* e^{2c^*z-2c^2}}{2(1-w^*)+w^*(e^{2c^*z-2c^2}+e^{-2c^*z-2c^2})}\right)^2-\left(\frac{1}{1 + e^{-4bz}}\right)^2\right]dz\label{eq32}\\
&=\int_{-\infty}^\infty\phi(z+2b) \left[\left(\frac{1-w^*+w^* e^{2c^*z-2{c^*}^2}}{2(1-w^*)+w^*(e^{2c^*z-2{c^*}^2}+e^{-2c^*z-2{c^*}^2})}\right)^2-\left(\frac{1}{1 + e^{-4bz}}\right)^2\right]dz\tag*{}
\end{align}
The first equality can be shown by arguing $\Cref{eq31}-\Cref{eq32}=0$.
\begin{align*}
    &\Cref{eq31}-\Cref{eq32}\\
    =&\int_{-\infty}^\infty \phi\left(z-2b\right)\left(\frac{1-w^*+w^* e^{2c^*z-2{c^*}^2}}{2(1-w^*)+w^*(e^{2c^*z-2{c^*}^2}+e^{-2c^*z-2{c^*}^2})}-\frac{1}{1 + e^{-4bz}}\right)\left(1-\frac{1}{1 + e^{-4bz}}\right)dz\\
    -&\int_{-\infty}^\infty \phi\left(z+2b\right)\left(\frac{1-w^*+w^* e^{2c^*z-2{c^*}^2}}{2(1-w^*)+w^*(e^{2c^*z-2{c^*}^2}+e^{-2c^*z-2{c^*}^2})}-\frac{1}{1 + e^{-4bz}}\right)\left(\frac{1}{1 + e^{-4bz}}\right)dz
\end{align*}
Observe that $\phi\left(z-2b\right)\left(1-\frac{1}{1 + e^{-4bz}}\right)=\phi\left(z+2b\right)\left(\frac{1}{1 + e^{-4bz}}\right)$. Thus, $\Cref{eq31}-\Cref{eq32}=0$. The second equality is by change of variable.

Namely, we fix the DM's action, and check if the nature has incentive to deviate.
\begin{align*}
\tilde{R}'(b) = \int_{-\infty}^{\infty} \Bigg\{ & -2(z+2b)\phi(z+2b) \left[ A(z)^2 - B(z,b)^2 \right] \\
& - 8z \, \phi(z+2b) \, \frac{e^{-4bz}}{\left(1+e^{-4bz}\right)^3} \Bigg\} \, dz\\
=\int_{-\infty}^{\infty}  & -2(z+2b)\phi(z+2b) \left[ A(z)^2 - B(z,b)^2 \right] dz
\end{align*}
where $A(z)=\frac{1-w^*+w^* e^{2c^*z-2{c^*}^2}}{2(1-w^*)+w^*(e^{2c^*z-2{c^*}^2}+e^{-2c^*z-2{c^*}^2})}$ and $B(z,b)=\frac{1}{1 + e^{-4bz}}$.

To prove optimality, it suffices to prove $\tilde{R}'(b)$ has at most 2 roots. We show this by arguing the auxiliary function $Q(b)=e^{2b^2}\tilde{R}'(b)$ is a strictly concave function and hence has at most 2 roots. That is, we want to show $Q''(b) <0$. 

\begin{align*}
        Q(b)=\frac{2}{\sqrt{2\pi}}\int_{-\infty}^\infty-(z+2b)e^{-\frac{1}{2}z^2-2bz}[A(z)^2-B(z,b)^2]dz
    \end{align*}
    \begin{align*}
        Q'(b)=\frac{4}{\sqrt{2\pi}}\int_{-\infty}^\infty\Bigg\{ &(z^2+2bz-1)e^{-\frac{1}{2}z^2-2bz}[A(z)^2-B(z,b)^2]\\
        &4z^2e^{-\frac{1}{2}z^2-2bz}\frac{e^{-4bz}}{\left(1 + e^{-4bz}\right)^3}\Bigg\}\,dz
    \end{align*}
    \begin{align*}
        Q''(b)=\frac{4}{\sqrt{2\pi}}\int_{-\infty}^\infty\Bigg\{ &-2z(z^2+2bz-2)e^{-\frac{1}{2}z^2-2bz}[A(z)^2-B(z,b)^2]\\
        &-16bz^2e^{-\frac{1}{2}z^2-2bz}\frac{e^{-4bz}}{\left(1 + e^{-4bz}\right)^3}\\
        &+16z^3e^{-\frac{1}{2}z^2-2bz}\frac{2e^{-8bz}-e^{-4bz}}{\left(1 + e^{-4bz}\right)^4}\Bigg\}\,dz\\
    =\frac{4}{\sqrt{2\pi}}\int_{-\infty}^\infty\Bigg\{ &-2z(z^2+2bz-2)e^{-\frac{1}{2}z^2-2bz}A(z)^2\\
        &+2z(z^2+2bz-2)e^{-\frac{1}{2}z^2-2bz}\frac{1}{\left(1 + e^{-4bz}\right)^3}\\
        &-12bz^2e^{-\frac{1}{2}z^2-2bz}\frac{e^{-4bz}}{\left(1 + e^{-4bz}\right)^3}\\
        &+16z^3e^{-\frac{1}{2}z^2-2bz}\frac{2e^{-8bz}-e^{-4bz}}{\left(1 + e^{-4bz}\right)^4}\Bigg\}\,dz
    \end{align*}

We can rewrite $Q''(b)$ as\footnote{In the calculations above and throughout the remainder of the paper, certain terms drop out because they are odd functions of $z$ and therefore integrate to zero over a symmetric domain.}:
\begin{align}
    Q''(b)=\frac{4}{\sqrt{2\pi}}\int_{-\infty}^\infty\Bigg\{ &-2z(z^2+2bz-2)e^{-\frac{1}{2}z^2-2bz}A(z)^2\tag{T1}\\
        &-4ze^{-\frac{1}{2}z^2-2bz}\frac{1}{\left(1 + e^{-4bz}\right)^3}\tag{T2}\\
        &-4bz^2e^{-\frac{1}{2}z^2-2bz}\frac{3e^{-4bz}-1}{\left(1 + e^{-4bz}\right)^3}\tag{T3}\\
        &+2z^3e^{-\frac{1}{2}z^2-2bz}\frac{8(2e^{-8bz}-e^{-4bz})+(1 + e^{-4bz})}{\left(1 + e^{-4bz}\right)^4}\Bigg\}\,dz\tag{T4}
\end{align}

Denote 
\begin{equation*}
T_1=\int_{-\infty}^{\infty} -2z(z^2+2bz-2)e^{-\frac{1}{2}z^2-2bz} A(z)^2 dz=  \int_{-\infty}^{\infty} f(b,z)A(z)^2 dz 
\end{equation*}

\begin{equation*}
T_2=-4\int_{-\infty}^{\infty}ze^{-\frac{1}{2}z^2-2bz}\frac{1}{\left(1 + e^{-4bz}\right)^3}dz
\end{equation*}

\begin{equation*}
T_3=-4b\int_{-\infty}^{\infty} z^2e^{-\frac{1}{2}z^2-2bz}\frac{3e^{-4bz}-1}{\left(1 + e^{-4bz}\right)^3} dz   
\end{equation*}

\begin{equation*}
T_4=2\int_{-\infty}^{\infty}z^3e^{-\frac{1}{2}z^2-2bz}\frac{8(2e^{-8bz}-e^{-4bz})+(1 + e^{-4bz})}{\left(1 + e^{-4bz}\right)^4}dz
\end{equation*}

where 
\begin{equation*}
f(b,z)= -2z(z^2+2bz-2)e^{-\frac{1}{2}z^2-2bz}   
\end{equation*}

\paragraph{Prove $T_1<0$ for every $c$, $w \in (0,1)$}
Notice that
\begin{equation*}
-2z(z^2+2bz-2)e^{-\frac{1}{2}z^2-2bz} =\frac{d}{dz} 2z^2 e^{-\frac{1}{2}z^2-2bz}     
\end{equation*}

Denote
\begin{equation*}
G(b,z)= 2z^2 e^{-\frac{1}{2}z^2-2bz}        
\end{equation*}

Notice there are three zero point for $f(b,z)$, from left to right we denote them by $b_1$, $0$, and $b_2$.

\begin{equation*}
G(b,-\infty)= \lim_{z \rightarrow -\infty} 2z^2 e^{-\frac{1}{2}z^2-2bz}=0    
\end{equation*}

\begin{equation*}
G(b,0)=0    
\end{equation*}

\begin{equation*}
G(b,\infty)= \lim_{z \rightarrow \infty} 2z^2 e^{-\frac{1}{2}z^2-2bz}=0       
\end{equation*}

Since
\begin{equation*}
\frac{d}{dz} A(z)=  \frac{d}{dz}  \frac{1-w+w e^{2cz-2c^2}}{2(1-w)+w(e^{2cz-2c^2}+e^{-2cz-2c^2})}>0
\end{equation*}

We have
\begin{align*}
&\int_{-\infty}^{b_1} -2z(z^2+2bz-2)e^{-\frac{1}{2}z^2-2bz} A(z)^2 dz\\
<& \int_{-\infty}^{b_1} -2z(z^2+2bz-2)e^{-\frac{1}{2}z^2-2bz} A(b_1)^2 dz\\
=&  (G(b,b_1)-G(b,-\infty))A(b_1)^2\\
=&  G(b,b_1)A(b_1)^2
\end{align*}

\begin{align*}
&\int_{b_1}^{0} -2z(z^2+2bz-2)e^{-\frac{1}{2}z^2-2bz} A(z)^2 dz\\
<& \int_{b_1}^{0} -2z(z^2+2bz-2)e^{-\frac{1}{2}z^2-2bz} A(b_1)^2 dz\\
=&  (G(b,0)-G(b,b_1)A(b_1)^2\\
=&  -G(b,b_1)A(b_1)^2    
\end{align*}

\begin{align*}
&\int_{0}^{b_2} -2z(z^2+2bz-2)e^{-\frac{1}{2}z^2-2bz} A(z)^2 dz\\
<& \int_{0}^{b_2} -2z(z^2+2bz-2)e^{-\frac{1}{2}z^2-2bz} A(b_2)^2 dz\\
=&  (G(b,b_2)-G(b,0))A(b_2)^2\\
=&  G(b,b_2)A(b_2)^2     
\end{align*}

\begin{align*}
&\int_{b_2}^{\infty} -2z(z^2+2bz-2)e^{-\frac{1}{2}z^2-2bz} A(z)^2 dz\\
<& \int_{b_2}^{\infty} -2z(z^2+2bz-2)e^{-\frac{1}{2}z^2-2bz} A(b_2)^2 dz\\
=&  (G(b,\infty)-G(b,b_2))A(b_2)^2\\
=&  -G(b,b_2)A(b_2)^2     
\end{align*}
   
Thus
\begin{align*}
T_1&=\int_{-\infty}^{\infty} -2z(z^2+2bz-2)e^{-\frac{1}{2}z^2-2bz} A(z)^2 dz=  \int_{-\infty}^{\infty} f(b,z)A(z)^2 dz \\
&<G(b,b_1)A(b_1)^2-G(b,b_1)A(b_1)^2+G(b,b_2)A(b_2)^2-G(b,b_2)A(b_2)^2\\
&=0
\end{align*}

\paragraph{Prove $T_2+T_3+T_4<0$}

\begin{align*}
T_2&=-4\int_{-\infty}^{\infty}ze^{-\frac{1}{2}z^2-2bz}\frac{1}{\left(1 + e^{-4bz}\right)^3}dz\\
&=-4\int_{0}^{\infty}ze^{-\frac{1}{2}z^2-2bz}\frac{1}{\left(1 + e^{-4bz}\right)^3}dz-4\int_{-\infty}^{0}ze^{-\frac{1}{2}z^2-2bz}\frac{1}{\left(1 + e^{-4bz}\right)^3}dz\\
&= -4\int_{0}^{\infty}ze^{-\frac{1}{2}z^2-2bz}\frac{1}{\left(1 + e^{-4bz}\right)^3}dz-4 \int_{-\infty}^{0}ze^{-\frac{1}{2}z^2-2bz}\frac{1}{\left(1 + e^{-4bz}\right)^3}dz\\
&=-4\int_{0}^{\infty}ze^{-\frac{1}{2}z^2-2bz}\frac{1}{\left(1 + e^{-4bz}\right)^3}dz-4\int_{\infty}^{0}ze^{-\frac{1}{2}z^2+2bz}\frac{1}{\left(1 + e^{4bz}\right)^3}dz\\
&=-4\int_{0}^{\infty}ze^{-\frac{1}{2}z^2-2bz}\frac{1}{\left(1 + e^{-4bz}\right)^3}-ze^{-\frac{1}{2}z^2+2bz}\frac{1}{\left(1 + e^{4bz}\right)^3}dz\\
&=-4\int_{0}^{\infty}ze^{-\frac{1}{2}z^2}\frac{1}{\left(1 + e^{-4bz}\right)^3}(e^{-2bz}-e^{-10bz})dz
\end{align*}

\begin{align*}
T_3&=-4b\int_{-\infty}^{\infty} z^2e^{-\frac{1}{2}z^2-2bz}\frac{3e^{-4bz}-1}{\left(1 + e^{-4bz}\right)^3} dz \\
&=-4b\int_{0}^{\infty} z^2e^{-\frac{1}{2}z^2}(\frac{3e^{-6bz}-e^{-2bz}}{\left(1 + e^{-4bz}\right)^3}+\frac{3e^{6bz}-e^{2bz}}{\left(1 + e^{4bz}\right)^3}) dz \\
&=-4b\int_{0}^{\infty} z^2e^{-\frac{1}{2}z^2}(\frac{3e^{-6bz}-e^{-2bz}}{\left(1 + e^{-4bz}\right)^3}+\frac{3e^{6bz}-e^{2bz}}{\left(1 + e^{4bz}\right)^3}) dz\\
&=-4b\int_{0}^{\infty} z^2e^{-\frac{1}{2}z^2}(\frac{3e^{-6bz}-e^{-2bz}}{\left(1 + e^{-4bz}\right)^3}+\frac{3e^{-6bz}-e^{-10bz}}{\left(1 + e^{-4bz}\right)^3}) dz\\
&=-4b\int_{0}^{\infty} z^2e^{-\frac{1}{2}z^2}(\frac{6e^{-6bz}-e^{-10bz}-e^{-2bz}}{\left(1 + e^{-4bz}\right)^3}) dz\\
\end{align*}

\begin{align*}
    T_4=&2\int_{-\infty}^{\infty}z^3e^{-\frac{1}{2}z^2-2bz}\frac{8(2e^{-8bz}-e^{-4bz})+(1 + e^{-4bz})}{\left(1 + e^{-4bz}\right)^4}dz\\
    =&2\int_{-\infty}^{\infty}z^3e^{-\frac{1}{2}z^2-2bz}\frac{16e^{-8bz}-7e^{-4bz}+1}{\left(1 + e^{-4bz}\right)^4}dz\\
    =&2\int_{0}^{\infty}z^3e^{-\frac{1}{2}z^2}(\frac{16e^{-10bz}-7e^{-6bz}+e^{-2bz}}{\left(1 + e^{-4bz}\right)^4}-\frac{16e^{10bz}-7e^{6bz}+e^{2bz}}{\left(1 + e^{4bz}\right)^4})dz\\
    =&2\int_{0}^{\infty}z^3e^{-\frac{1}{2}z^2}(\frac{16e^{-10bz}-7e^{-6bz}+e^{-2bz}}{\left(1 + e^{-4bz}\right)^4}-\frac{16e^{-6bz}-7e^{-10bz}+e^{-14bz}}{\left(1 + e^{-4bz}\right)^4})dz\\
    =&2\int_{0}^{\infty}z^3e^{-\frac{1}{2}z^2}(\frac{23e^{-10bz}-23e^{-6bz}+e^{-2bz}-e^{-14bz}}{\left(1 + e^{-4bz}\right)^4})dz
\end{align*}

Define an auxiliary function
\begin{equation*}
F(z)=-2z^2 e^{-\frac{1}{2}z^2} \frac{e^{2bz} - e^{-2bz}}{(e^{2bz} + e^{-2bz})^2}    
\end{equation*}
We have
\begin{equation*}
\frac{d}{dz} F(z)=-4ze^{-\frac{1}{2}z^2} \frac{e^{2bz} - e^{-2bz}}{(e^{2bz} + e^{-2bz})^2}-4b z^2 e^{-\frac{1}{2}z^2}\frac{6-e^{-4bz}-e^{4bz}}{(e^{2bz} + e^{-2bz})^3}+2z^3e^{-\frac{1}{2}z^2} \frac{e^{2bz} - e^{-2bz}}{(e^{2bz} + e^{-2bz})^2}
\end{equation*}

We have
\begin{align*}
&F(\infty)-F(0)= \int_{0}^{\infty}  \frac{d}{dz} F(z)dz\\
&=\int_{0}^{\infty} -4ze^{-\frac{1}{2}z^2} \frac{e^{2bz} - e^{-2bz}}{(e^{2bz} + e^{-2bz})^2}-4b z^2 e^{-\frac{1}{2}z^2}\frac{6-e^{-4bz}-e^{4bz}}{(e^{2bz} + e^{-2bz})^3}+2z^3e^{-\frac{1}{2}z^2} \frac{e^{2bz} - e^{-2bz}}{(e^{2bz} + e^{-2bz})^2} dz\\
&=\int_{0}^{\infty} -4ze^{-\frac{1}{2}z^2} \frac{e^{-2bz} - e^{-6bz}}{(1 + e^{-4bz})^2}-4b z^2 e^{-\frac{1}{2}z^2}\frac{6e^{-6bz}-e^{-10bz}-e^{-2bz}}{(1 + e^{-4bz})^3}+2z^3e^{-\frac{1}{2}z^2} \frac{e^{-2bz}-e^{-6bz}}{(1 + e^{-4bz})^2} dz\\
&=\int_{0}^{\infty} -4ze^{-\frac{1}{2}z^2} \frac{e^{-2bz} - e^{-10bz}}{(1 + e^{-4bz})^3}-4b z^2 e^{-\frac{1}{2}z^2}\frac{6e^{-6bz}-e^{-10bz}-e^{-2bz}}{(1 + e^{-4bz})^3}+2z^3e^{-\frac{1}{2}z^2} \frac{e^{-2bz}-e^{-6bz}}{(1 + e^{-4bz})^2} dz\\
\end{align*}

Notice that
\begin{equation*}
-48z^3 e^{-\frac{1}{2}z^2} \frac{e^{2bz} - e^{-2bz}}{(e^{2bz} + e^{-2bz})^4}+2z^3e^{-\frac{1}{2}z^2} \frac{e^{-2bz}-e^{-6bz}}{(1 + e^{-4bz})^2}=2z^3e^{-\frac{1}{2}z^2}(\frac{23e^{-10bz}-23e^{-6bz}+e^{-2bz}-e^{-14bz}}{\left(1 + e^{-4bz}\right)^4})    
\end{equation*}

Thus 
\begin{align*}
&T_2+T_3+T_4\\
=& -4\int_{0}^{\infty}ze^{-\frac{1}{2}z^2}\frac{1}{\left(1 + e^{-4bz}\right)^3}(e^{-2bz}-e^{-10bz})dz\\
&-4b\int_{0}^{\infty} z^2e^{-\frac{1}{2}z^2}(\frac{6e^{-6bz}-e^{-10bz}-e^{-2bz}}{\left(1 + e^{-4bz}\right)^3}) dz+ 2\int_{0}^{\infty}z^3e^{-\frac{1}{2}z^2}(\frac{23e^{-10bz}-23e^{-6bz}+e^{-2bz}-e^{-14bz}}{\left(1 + e^{-4bz}\right)^4})dz \\
=& -4\int_{0}^{\infty}ze^{-\frac{1}{2}z^2}\frac{1}{\left(1 + e^{-4bz}\right)^3}(e^{-2bz}-e^{-10bz})dz\\
&-4b\int_{0}^{\infty} z^2e^{-\frac{1}{2}z^2}(\frac{6e^{-6bz}-e^{-10bz}-e^{-2bz}}{\left(1 + e^{-4bz}\right)^3}) dz\\
&+ \int_{0}^{\infty}-48z^3 e^{-\frac{1}{2}z^2} \frac{e^{2bz} - e^{-2bz}}{(e^{2bz} + e^{-2bz})^4}+2z^3e^{-\frac{1}{2}z^2} \frac{e^{-2bz}-e^{-6bz}}{(1 + e^{-4bz})^2} dz \\
=& -4\int_{0}^{\infty}ze^{-\frac{1}{2}z^2}\frac{1}{\left(1 + e^{-4bz}\right)^3}(e^{-2bz}-e^{-10bz})dz\\
&-4b\int_{0}^{\infty} z^2e^{-\frac{1}{2}z^2}(\frac{6e^{-6bz}-e^{-10bz}-e^{-2bz}}{\left(1 + e^{-4bz}\right)^3}) dz\\
&+ \int_{0}^{\infty}-48z^3 e^{-\frac{1}{2}z^2} \frac{e^{2bz} - e^{-2bz}}{(e^{2bz} + e^{-2bz})^4}dz+ \int_{0}^{\infty}2z^3e^{-\frac{1}{2}z^2} \frac{e^{-2bz}-e^{-6bz}}{(1 + e^{-4bz})^2} dz \\
=& F(\infty)-F(0) + \int_{0}^{\infty}-48z^3 e^{-\frac{1}{2}z^2} \frac{e^{2bz} - e^{-2bz}}{(e^{2bz} + e^{-2bz})^4}dz
\end{align*}

Since
\begin{equation*}
\int_{0}^{\infty}-48z^3 e^{-\frac{1}{2}z^2} \frac{e^{2bz} - e^{-2bz}}{(e^{2bz} + e^{-2bz})^4}dz<0    
\end{equation*}

\begin{equation*}
F(\infty)= \lim_{z \rightarrow \infty}  -2z^2 e^{-\frac{1}{2}z^2} \frac{e^{2bz} - e^{-2bz}}{(e^{2bz} + e^{-2bz})^2}=0  
\end{equation*}

\begin{equation*}
F(0)=0    
\end{equation*}

We can show
\begin{equation*}
T_2+T_3+T_4<0
\end{equation*}

Combine with $T_1<0$, we have
\begin{equation*}
 Q''(b)=T_1+ T_2+T_3+T_4<0   
\end{equation*}

Thus, $Q(b)=e^{2b^2}\tilde{R}'(b)$ is strictly concave, which implies that it has at most 2 roots. That means $\tilde{R}'(b)$ has at most 2 roots and $\tilde{R}(b)$ has at most 2 critical points. Moreover, 
    \begin{align*}
        \tilde{R}'(0)&=\int_{-\infty}^{\infty}-2z\phi(z)[A(z)^2-\frac{1}{4}] dz\\
        &=\int_0^{\infty}-2z\phi(z)\bigg(\frac{1-w^*+w^*(e^{2c^*z-2{c^*}^2}-e^{-2c^*z-2{c^*}^2})}{2(1-w^*)+w^*(e^{2c^*z-2{c^*}^2}+e^{-2c^*z-2{c^*}^2})}\bigg)^2 dz\\
        &<0
    \end{align*}

Our first indifference condition says that $\tilde{R}(0)=\tilde{R}(c^*)$. Together with $\tilde{R}(b)$ decreasing at $b=0$, $\tilde{R}(b)$ must increase from some point in $(0,c^*)$ to $c^*$. Thus, there must be at least one critical point in $(0,c^*)$.

Moreover, the second condition says that $c^*$ is a critical point for $\tilde{R}(b)$. Since there are at most two critical points, then $\tilde{R}(b)$ must be first decreasing, then increasing to $\tilde{R}(c^*)$, and then decreasing.

Thus, $\tilde{R}(0)=\tilde{R}(c^*)$ achieves the maximum.

For the simulated $(c^*,w^*)=(0.799, 0.476)$, 
\begin{figure}[htbp!]
    \centering
    \includegraphics[width=0.6\linewidth]{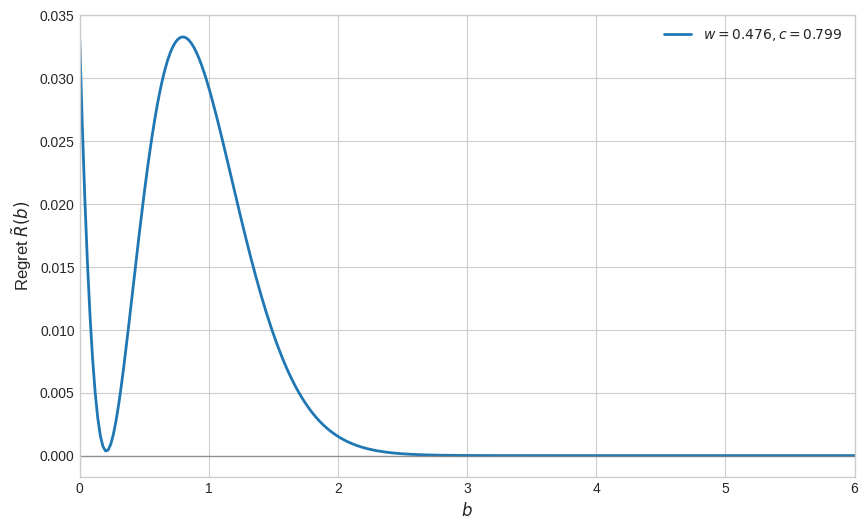}
\end{figure}

\subsubsection*{Uniqueness of $(c^*,w^*)$}
The proof proceeds by contradiction in two stages. First, we prove the uniqueness of the optimal interior signal quality $c^*$. Second, given the unique $c^*$, we prove the uniqueness of the randomization weight $w^*$.

Let $(\bm a, \gamma)$ denote a saddle point (Nash Equilibrium) of the limit game $\tilde{\Gamma}$, where $\bm a$ is the DM's strategy and $\gamma$ is Nature's strategy.

\vspace{\baselineskip}
\noindent\textbf{Step 1: Uniqueness of $c^*$}

\noindent Assume that there are two distinct equilibrium strategies for Nature, characterized by pairs $(c_1, w_1)$ and $(c_2, w_2)$ with $c_1 \neq c_2$. Let the full saddle points be $(\bm a_1, \gamma_1)$ and $(\bm a_2, \gamma_2)$. Thus, $\gamma_1$ is a probability distribution on $\{0, c_1\}$ with weight $w_1$ on $c_1$ and $\gamma_2$ is a probability distribution on $\{0, c_2\}$ with weight $w_2$ on $c_2$. $\bm a_1$ and $\bm a_2$ are the DM's best-response strategy to $\gamma_1$, $\gamma_2$, respectively.

Minimax theorem implies that if $(\bm a_1, \gamma_1)$ and $(\bm a_2, \gamma_2)$ are saddle points, then the ``crossed'' profiles $(\bm a_1, \gamma_2)$ and $(\bm a_2, \gamma_1)$ must also be saddle points.

Let us analyze the implication of $(\bm a_1, \gamma_2)$ being a saddle point. This requires that $\gamma_2$ be a best response for Nature to the DM's strategy $\bm a_1$. For the mixed strategy $\gamma_2$ to be a best response, every pure strategy in its support must also be a best response. This means that both $c=0$ and $c=c_2$ must maximize the regret function $c \mapsto \tilde{R}(\bm a_1, c)$.

This implies $c_2$ must be a global maximizer of $\tilde{R}(\bm a_1, c)$ over $c \ge 0$. The original saddle point $(\bm a_1, \gamma_1)$ also implies $c_1$ must be a global maximizer as well.

As established in the proof of optimality part (via the analysis of the auxiliary function $Q(b)$ which showed it to be strictly concave), the regret function $c \mapsto \tilde{R}(\bm a_1, c)$ has a unique global maximizer for $c > 0$, which is $c_1$. Therefore, it must be that $c_1 = c_2$.

\vspace{\baselineskip}
\noindent\textbf{Step 2: Uniqueness of $w^*$}

\noindent Now that we have established that $c^*$ is unique, we show that the corresponding weight $w^*$ must also be unique. Assume for contradiction that there are two distinct equilibria characterized by $(c^*, w_1)$ and $(c^*, w_2)$ with $w_1 \neq w_2$.

The equilibrium weight $w$ is determined as the solution to Nature's indifference condition, where the DM's action $\bm a(c^*, w)$ is itself a function of $w$:
\[
\tilde{R}(\bm a(c^*, w), c^*) = \tilde{R}(\bm a(c^*, w), 0)
\]
Let us define the function $F: [0,1] \to \mathbb{R}$ as:
\[
F(w) = \tilde{R}(\bm a(c^*, w), c^*) - \tilde{R}(\bm a(c^*, w), 0)
\]
An equilibrium weight $w^*$ must be a root of the equation $F(w) = 0$. In the existence part of proof for the limit game (\Cref{lemma9}), it was shown that the partial derivative of the indifference equation with respect to $w$ is non-zero. Specifically, it was established that $F(w)$ is a strictly decreasing function of $w$ for $w \in (0,1)$.

 Thus, there can only be a single value $w^* \in (0,1)$ that satisfies the indifference condition $F(w^*) = 0$. This implies it must be that $w_1 = w_2$.

Therefore, the equilibrium strategy for Nature, characterized by $(c^*, w^*)$, is unique.
\end{proof}

\subsection{Proof of  \Cref{thm2}}\label{pf:thm2}
\begin{proof}
We show that any sequence of Nash equilibria $(\bm a_n^*,\gamma_n^*)$ of the finite games $\Gamma_n$ converges (under the embeddings described below) to the unique Nash equilibrium $(\bm a^*,\gamma^*)$ of the limit game $\tilde\Gamma$.

\subsubsection{Embedding and Compactness}
Because the finite and limit games are defined on different strategy spaces, we first place them on the same compact domain. For Nature, recall that in $\Gamma_n$ she chooses a probability measure $\gamma_n$ on $[0,\sqrt n/2]$ over the local precision parameter. The following result shows we may restrict attention without loss to a large compact interval $[0,C]$ (high precision is strictly dominated for Nature).

\begin{lemma}
    There exists a $C>0$ such that the tail region $[C,\infty)$ is strictly dominated. Thus, it is without loss to restrict attention to precision in the interval $c\in[0,C]$.
\end{lemma}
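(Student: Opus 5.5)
The plan is to show that, uniformly in $n$ and in the DM's strategy, Nature's regret from choosing a local precision $c$ becomes smaller than a fixed positive benchmark once $c$ is large. Nature can always guarantee that benchmark by playing $c=0$ or a mixture over moderate $c$, so large $c$ can never belong to the support of an equilibrium strategy. We work in the local parametrization $\pi=\tfrac12+c/\sqrt n$ and treat the limit game first, then transfer to the finite games uniformly in $n$.

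\textbf{Step 1 (lower bound on the value).} By the minimax theorem the equilibrium regret equals $\max_\gamma\min_{\bm a}$ of the regret. Take Nature's strategy $\gamma_0$ that puts probability $\tfrac12$ on each of $c=0$ and $c=1$. Against $\gamma_0$ the DM's best response is the Bayes posterior under the mixture, and the induced minimal regret is strictly positive. The reason is that the oracle posteriors under $c=0$ (constant $\tfrac12$) and under $c=1$ (the logistic function of $4z$) differ on a set of positive probability, and a single action cannot match both. For the finite games this value converges to its Gaussian limit by the local CLT (\Cref{prop:limit regret}), so there are $v_0>0$ and $n_0$ with $V_n(\gamma_0)\ge v_0$ for all $n\ge n_0$. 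The finitely many $n<n_0$ can be handled directly, since each $V_n(\gamma_0)>0$.

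\textbf{Step 2 (upper bound on regret at large $c$).} Fix any DM rule $\bm a$. Against a pure $c$, regret is $\E[(q_c(K)-a(K))^2]\le 2\E[(q_c-\theta)^2]+2\E[(a-\theta)^2]$. The first term is the oracle's MSE. It is bounded by a Chernoff/Hoeffding estimate of order $e^{-\kappa c^2}$, uniformly in $n$: the log-likelihood ratio has mean of order $c^2$ and standard deviation of order $c$ in the local scaling, even for $\pi$ near $1$ where $c\sim\sqrt n$. The second term is not automatically small for arbitrary $\bm a$. So we compare against the rule that is relevant: an equilibrium DM rule $\bm a$ is a best response to $\gamma^*_n$, hence a posterior-type monotone rule with $\bm a\ge\tfrac12$ above the median count. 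Instead of bounding regret for all $\bm a$, we show that playing $c$ large gives Nature less than $v_0$ against any rule satisfying the equilibrium indifference. Concretely, if an equilibrium $\gamma_n^*$ put mass on some $c>C$, then by indifference $R_n(\bm a_n^*,c)=V^*_n\ge v_0$. Yet for $c$ large the data reveals $\theta$ almost surely, and $\bm a_n^*$ must also be nearly correct on the extreme counts that such $c$ generates. Otherwise, moving Nature's mass from $c$ to $c'$ slightly larger would strictly increase regret, contradicting the fact that the support lies in the argmax. We make this precise by showing that $c\mapsto R_n(\bm a,c)$ tends to $\E_{K\sim c}[(\theta-a(K))^2]$ as $c\to\infty$, and that the DM's first-order condition forces $a(K)\to\theta$ on counts whose posterior weight comes predominantly from $c$ itself.

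\textbf{Main obstacle.} The delicate point is uniformity in $n$ of the oracle error bound in the region $c\in[C,\sqrt n/2]$, where the Gaussian approximation fails. The plan is to use an exact binomial Chernoff bound on $\Pr(K\le n/2\mid\theta=1,\pi)\le \exp(-n\,D_{KL}(\tfrac12\|\pi))$ together with $n\,D_{KL}(\tfrac12\|\tfrac12+c/\sqrt n)\ge 2c^2$, which holds for all $c\le\sqrt n/2$. This yields the tail bound $e^{-2c^2}$ uniformly. We then choose $C$ with $4e^{-2C^2}<v_0$, so that combined with Step 2 the tail $[C,\infty)$ is strictly dominated.
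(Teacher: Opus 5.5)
You are right not to claim that large $c$ gives small regret against \emph{every} rule. That is exactly where the paper's own argument breaks. The paper asserts $\tilde R(\bm a,c)\to 0$ as $c\to\infty$ for any measurable $\bm a$, by dominated convergence from $\phi(z\pm 2c)\to 0$ pointwise. But the shifted densities have no integrable dominating function (each integrates to $1$), and for $\bm a\equiv\frac12$ the map $c\mapsto\tilde R(\bm a,c)$ actually increases to $\frac14$. So literal strict domination is false. The statement that survives is your reformulation: for large $n$, no equilibrium $\gamma_n^*$ charges $(C,\infty)$. This does suffice for ``without loss'', because saddle points are interchangeable and the DM's best response is unique.

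Step~1 is fine in substance, but \Cref{prop:limit regret} (pointwise convergence for a fixed rule) only yields $\limsup_n V_n(\gamma_0)\le V_\infty(\gamma_0)$. The lower bound should be proved directly. For example, use $\frac12P_0(k)(\frac12-a_k)^2+\frac12P_1(k)(q_1(k)-a_k)^2\ge\frac14\min\{P_0(k),P_1(k)\}(q_1(k)-\frac12)^2$ summed over central counts, where $P_c$ is the law of $K$ at local precision $c$ and $q_1$ is the oracle posterior at $c=1$.

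The genuine gap is in Step~2. You never bound the DM term $\E_{K\sim c}[(\bm a_n^*(K)-\theta)^2]$ at a large support point $c$, yet your closing inequality $4e^{-2C^2}<v_0$ presupposes that this term is of order $e^{-2c^2}$. The ``main obstacle'' you single out, the uniform oracle bound, is the easy half. Neither mechanism you offer closes the gap:
\begin{itemize}
\item \emph{The perturbation argument fails.} A support point of $\gamma_n^*$ is a maximizer of $R_n(\bm a_n^*,\cdot)$, so no nearby $c'$ does strictly better. Nothing you say makes the regret locally increasing when $\bm a_n^*$ errs.
\item \emph{The first-order-condition argument is too narrow.} It covers only counts at which the posterior over precisions concentrates on $c$ itself. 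At $x=(2K-n)/\sqrt n\approx 2c$, this requires $\gamma_n^*$'s mass near $c$ to be large relative to $e^{-2c^2}$ times its mass near $0$. Nothing excludes a priori an equilibrium with tiny (or atomless) mass near a large $c$. Then $\bm a_n^*$ at those counts is set by the other components; if these sit near $0$, $\bm a_n^*\approx\frac12$ there and $R_n(\bm a_n^*,c)\approx\frac14$. That is precisely the scenario the lemma must rule out.
\end{itemize}

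The missing ingredient is a global use of $V_n^*\ge v_0$, in two steps.
\begin{itemize}
\item First show that $\gamma_n^*$ must put mass at least some $\eta>0$ on a fixed window $[2\delta,C_0]$. Otherwise, take the rule $a=\frac12$ for $|x|\le T$, and $a=1$ or $0$ according to the sign of $x$ for $|x|>T$, with $T$ large, $\delta T$ small and $C_0\gg T$. This rule keeps the regret against $\gamma_n^*$ below $v_0$, a contradiction.
\item Then use monotone likelihood ratios. At $x\approx 2c$, the posterior weight on $[0,\delta)$ is at most of order $\eta^{-1}e^{-2\delta x}$, while every $c'\ge\delta$ has $1-q_{c'}\le e^{-4\delta x}$. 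Hence $1-\bm a_n^*$ is exponentially small in $c$ there, uniformly in $n$, once exact binomial bounds replace the Gaussian ones for $c$ of order $\sqrt n$.
\end{itemize}
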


\begin{proof}
We show there exists a $C>0$ such that the tail region $[C,\infty)$ is strictly dominated, and it is without loss to restrict attention
to $c\in[0,C]$.

Notice that for any measurable rule $\bm a(\cdot)$ and any $c\ge 0$,
\begin{align*}
\tilde{R}(\bm a,c)
&=\frac12\int_{-\infty}^{\infty}\phi(z+2c)\,\bigl(P(\theta\mid z,c)-\bm a(z)\bigr)^2\,dz
+\frac12\int_{-\infty}^{\infty}\phi(z-2c)\,\bigl(P(\theta\mid z,c)-\bm a(z)\bigr)^2\,dz\\
&\le \frac12\int_{-\infty}^{\infty}\phi(z+2c)\,dz+\frac12\int_{-\infty}^{\infty}\phi(z-2c)\,dz,
\end{align*}
where the inequality uses $0\le P(\theta\mid z,c)\le 1$ and $0\le \bm a(z)\le 1$, hence
$\bigl(P(\theta\mid z,c)-\bm a(z)\bigr)^2\le 1$ pointwise.

By the change of variables $u=z\pm 2c$,
\[
\int_{-\infty}^{\infty}\phi(z+2c)\,dz=\int_{-\infty}^{\infty}\phi(u)\,du=1,
\qquad
\int_{-\infty}^{\infty}\phi(z-2c)\,dz=\int_{-\infty}^{\infty}\phi(u)\,du=1.
\]
Thus $\tilde{R}(\bm a,c)\le 1$ for all $c$. Moreover, since $\phi$ is integrable and $\phi(z\pm 2c)\to 0$
pointwise as $c\to\infty$, dominated convergence yields
\[
\int_{-\infty}^{\infty}\phi(z+2c)\,\bigl(P(\theta\mid z,c)-\bm a(z)\bigr)^2\,dz \rightarrow 0,
\qquad
\int_{-\infty}^{\infty}\phi(z-2c)\,\bigl(P(\theta\mid z,c)-\bm a(z)\bigr)^2\,dz \rightarrow 0,
\]
and therefore
\[
\lim_{c\to\infty}\tilde{R}(\bm a,c)=0.
\]

Since $\tilde{R}(\bm a,\cdot)$ is nonnegative and continuous in $c$, and satisfies
$\tilde{R}(\bm a,c)\to 0$ as $c\to\infty$, there exists $C>0$ such that
\[
\sup_{c\ge C}\tilde{R}(\bm a,c)\le \varepsilon
\]
for any prescribed $\varepsilon>0$ by choosing $C$ large enough.
Hence the tail region $[C,\infty)$ is strictly dominated, and it is without loss to restrict attention
to $c\in[0,C]$.  \end{proof}

We therefore view each $\gamma_n$ as an element of $\Sigma:=\Delta([0,C])$ by restricting/identifying its mass on $[0,C]$. Equipped with the topology of weak convergence (e.g., the Prohorov metric), $\Sigma$ is compact and convex.

For DM's strategy, define a strategy space $\mathcal{A}$ as the set of \emph{monotone non-decreasing functions} $f: [0,1] \to [0,1]$. We embed the finite game's strategy space $A_n$ and the limit game's strategy space $A$ into this strategy space $\mathcal{A}$. For \emph{limit strategies}, given $\bm a:\mathbb R\to[0,1]$, define its embedding $\tilde{\bm a}\in\mathcal A$ by $\tilde{\bm a}(y)=\bm a(\Phi^{-1}(y))$ for $y\in[0,1]$, where $\Phi$ is the standard normal CDF. For \emph{finite strategies}, given $\bm a_n=(a_0,\ldots,a_n)\in A_n$, map each signal count $k$ to its asymptotic quantile under the uninformative experiment $c=0$ (equivalently $\pi=1/2$) via $y_{n,k}=\Phi\left(\frac{k - n/2}{\sqrt{n}/2}\right)$, and define $\tilde{\bm a}_n\in\mathcal A$ as the piecewise-linear interpolation of the points $\{(y_{n,k},a_k)\}_{k=0}^n$.
By Helly's Selection Theorem, $\mathcal{A}$ is compact under pointwise convergence. 

Moreover, the following result establishes the existence of saddle points in both the Finite Games and the Limit Game.

\begin{proposition}
    Saddle point exists in both Finite and Limit Games.
\end{proposition}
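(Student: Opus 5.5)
The plan is to apply Sion's minimax theorem separately to each finite game $\Gamma_n$ and to the limit game $\tilde\Gamma$. The required ingredients are compactness and convexity of the strategy sets, convexity of the payoff in the DM's strategy, concavity (here linearity) in Nature's strategy, and the appropriate semicontinuity.

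\textbf{Nature's side.} We use the truncation lemma just proved to work on $\Sigma=\Delta([0,C])$. This set is convex, and it is compact in the weak topology by Prokhorov's theorem, since $[0,C]$ is compact. Both $R_n(\bm a,\gamma)=\E_{c\sim\gamma}[R_n(\bm a,c)]$ and $\tilde R(\bm a,\gamma)$ are linear in $\gamma$. They are also weakly continuous in $\gamma$, provided $c\mapsto R_n(\bm a,c)$ and $c\mapsto\tilde R(\bm a,c)$ are bounded and continuous on $[0,C]$.
\begin{itemize}
\item For $\Gamma_n$, continuity is immediate. The expression is a finite sum of polynomials in $\pi=\tfrac12+c/\sqrt n$ times squared differences of Oracle posteriors, and these posteriors are continuous in $\pi$. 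The one delicate point is $\pi=1$ with $k\in\{0,n\}$, where the posterior is still well defined.
\item For $\tilde R$, continuity in $c$ follows from dominated convergence. The integrand is bounded by $\phi(z\pm2c)$, and $P(\theta\mid z,c)=1/(1+e^{-4cz})$ is continuous in $c$. This holds also at $c=0$, where the posterior equals $\tfrac12$.
\end{itemize}

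\textbf{DM's side.} In $\Gamma_n$ the DM chooses from $[0,1]^{n+1}$, which is compact and convex. The payoff is a finite sum of squares, so it is continuous and strictly convex in $\bm a$. Sion's theorem therefore gives $\min_{\bm a}\max_\gamma=\max_\gamma\min_{\bm a}$, with both extrema attained, and hence a saddle point. Equivalently, one can appeal to the classical finite-dimensional minimax theorem in the form cited from Osborne--Rubinstein, applied to the compact set $\Sigma$.

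For $\tilde\Gamma$, we take the DM's strategy set to be the embedded set $\mathcal A$ of monotone maps $[0,1]\to[0,1]$. Monotonicity is preserved under convex combinations, so $\mathcal A$ is convex, and it is compact under pointwise convergence by Helly's theorem. Using $\tilde{\bm a}(y)=\bm a(\Phi^{-1}(y))$, the substitution $y=\Phi(z)$ rewrites $\tilde R$ as an integral over $[0,1]$ of a bounded integrand. That integrand is convex in $\tilde{\bm a}(y)$, with density $\phi(z\pm2c)/\phi(z)$, and these densities are integrable in $y$. Bounded convergence then shows that $\tilde{\bm a}\mapsto\tilde R(\tilde{\bm a},\gamma)$ is continuous under pointwise convergence, and in particular lower semicontinuous. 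Convexity in $\tilde{\bm a}$ is pointwise.

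Sion's theorem then yields a saddle point in $\mathcal A\times\Sigma$. Restricting the DM to monotone rules is without loss. The reason is that any best response to a measure $\gamma$ is the posterior mixture $\E[P(\theta\mid z,c)\mid z]$, and this is nondecreasing in $z$ by the monotone likelihood ratio property of the normal shift family. So the saddle point is also one for the game with unrestricted measurable rules. The same MLR argument in $k$ justifies the monotone embedding for $\Gamma_n$.

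\textbf{Main obstacle.} I expect the hardest step to be the topological one in the limit game. One must verify that pointwise convergence on $\mathcal A$ makes $\tilde R$ lower semicontinuous, since Sion requires it in the minimizing variable, and that the map is jointly well behaved near $y\in\{0,1\}$, where the change of variables blows up. I would handle this with a dominating function: $\phi(\Phi^{-1}(y)\pm2c)/\phi(\Phi^{-1}(y))$ is integrable against $dy$ uniformly for $c\le C$, because it integrates to exactly $1$ in $z$-coordinates.
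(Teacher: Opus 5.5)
\textbf{Finite games.} Your treatment of the finite games is the paper's argument: Sion's theorem on compact convex strategy sets, with $R_n$ linear in $\gamma$ and convex in $\bm a$. Two small remarks. You do not need the truncation there, since $\Delta([0,\sqrt n/2])$ is already weakly compact. And at $\pi=1$ the delicate counts are $0<k<n$ (posterior undefined, but $\Pr(k\mid\pi)\to0$), not $k\in\{0,n\}$.

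\textbf{Limit game.} Here you take a genuinely different route. The paper runs no minimax argument for the limit game. It cites \Cref{thm:limit_eq}, which constructs $((c^*,w^*),\bm a^*)$ explicitly and checks Nature's optimality over all $b\ge 0$ through the concavity of $Q(b)=e^{2b^2}\tilde R'(b)$. Your Sion argument on $\mathcal A\times\Delta([0,C])$ avoids those computations and the numerics. Its topological part is sound: the space of monotone maps is compact and first countable under pointwise convergence, so your sequential bounded-convergence argument does give lower semicontinuity. But it puts all the weight on compactness of Nature's side, and that is where the gap is.

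\textbf{The gap: the truncation lemma is not valid.} Its dominated-convergence step fails, because the translates $\phi(\cdot\pm 2c)$ have no common integrable majorant: each carries mass $1$, which escapes to $\pm\infty$ rather than vanishing. The conclusion is also false. For $\bm a\equiv\tfrac12$ one has $\tilde R(\bm a,0)=0$ while $\tilde R(\bm a,c)\to\tfrac14$ as $c\to\infty$; for monotone $\bm a$ the limit is $\tfrac12\bm a(-\infty)^2+\tfrac12(1-\bm a(+\infty))^2$. So Sion only gives a saddle point $(\bm a_C,\gamma_C)$ of the truncated game. To get one for $\tilde\Gamma$ with $\Delta([0,\infty))$, you must still show $\sup_{c>C}\tilde R(\bm a_C,c)\le V_C$ for some $C$.

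\textbf{How to close it.} The truncated values $V_C$ are nondecreasing, with $V_1>0$.
\begin{enumerate}
\item Suppose $\gamma_{C_m}([1/m,m])<1/m$ along some $C_m\to\infty$. Consider the monotone rule equal to $0$ for $z<-\sqrt m$, $\tfrac12$ on $|z|\le\sqrt m$, and $1$ for $z>\sqrt m$. Its regret under $\gamma_{C_m}$ tends to $0$, contradicting $V_{C_m}\ge V_1$.
\item Hence $\gamma_C([a,b])\ge\eta$ for some fixed $0<a<b$, $\eta>0$ and all large $C$.
\item Then the Bayes posterior satisfies $1-\bm a_C(z)\le\eta^{-1}e^{-2az+2b^2}$ for $z\ge0$, and symmetrically for $z\le0$.
\item So $\tilde R(\bm a_C,c)\to0$ as $c\to\infty$ uniformly in $C$, and $\sup_{c>C}\tilde R(\bm a_C,c)<V_1\le V_C$ for $C$ large.
\end{enumerate}
If you read $\tilde\Gamma$ as the truncated game the paper uses inside the convergence proof, your argument is complete as it stands.

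\textbf{A smaller point.} Monotonicity of the best response does not follow from MLR of each Gaussian component alone, since mixtures of MLR families need not be MLR. It holds here by symmetry. With $g(z)=\int\phi(z-2c)\,d\gamma(c)$, the posterior is $g(z)/(g(z)+g(-z))$, and $(\log g)'(z)+(\log g)'(-z)=\E^{(z)}[2c]+\E^{(-z)}[2c]\ge 0$. Here $\E^{(t)}$ denotes expectation under $\gamma$ tilted by $\phi(t-2c)$.
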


\begin{proof}
    For any Finite-$n$ Game, the strategy spaces are compact and convex. The regret function $R_n(\bm{\tilde{a}}, \gamma)$ is linear in $\gamma$ and strictly convex in $\bm{\tilde{a}}$. Hence, Sion' Minimax Theorem implies that saddle point exists in the Finite Games. 

    For the Limit Game, existence is established in \Cref{thm:limit_eq}.
\end{proof}

\subsubsection{Pointwise Convergence of Payoffs.}
Fix $(\tilde{\bm a},\gamma)\in\mathcal A\times\Sigma$. We show finite-game regret converges to the limit-game regret:
\[
R_n(\tilde{\bm a},\gamma)\;\longrightarrow\;\tilde R(\tilde{\bm a},\gamma).
\]

\begin{proposition}\label{prop:limit regret} Fix $(\tilde{\bm a},\gamma)\in\mathcal A\times\Sigma$. The finite-game regret pointwise converges to the limit-game regret:
\[
R_n(\tilde{\bm a},\gamma)\;\longrightarrow\;\tilde R(\tilde{\bm a},\gamma).
\]
\end{proposition}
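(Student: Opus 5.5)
The plan is to reduce to a single precision level and then use weak convergence of the normalized binomial count. Since $\gamma$ is supported on $[0,C]$, for $n \ge 4C^2$ every $c$ in the support gives an admissible precision $\pi_n(c)=\tfrac12+c/\sqrt n\le 1$. Both $R_n(\tilde{\bm a},\gamma)=\E_{c\sim\gamma}[R_n(\tilde{\bm a},c)]$ and $\tilde R(\tilde{\bm a},\gamma)=\E_{c\sim\gamma}[\tilde R(\tilde{\bm a},c)]$ have integrands bounded by $1$. So by dominated convergence in $c$, it suffices to prove $R_n(\tilde{\bm a},c)\to\tilde R(\tilde{\bm a},c)$ for each fixed $c\in[0,C]$. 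The case $c=0$ is immediate, since both oracles report $\tfrac12$ and the argument below applies verbatim; assume $c>0$.

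\textbf{Rewriting the finite regret.} Fix $c$ and introduce the normalized count $Z_n=(K-n/2)/(\sqrt n/2)$, with lattice values $z_{n,k}$. By construction of the embedding, $y_{n,k}=\Phi(z_{n,k})$. Writing $\bm a(z):=\tilde{\bm a}(\Phi(z))$, the DM's action at count $k$ is $a_k=\bm a(z_{n,k})$, and $\bm a$ is monotone. The finite regret becomes
\[
R_n(\tilde{\bm a},c)=\tfrac12\E_{\theta=1}\bigl[g_n(Z_n)\bigr]+\tfrac12\E_{\theta=0}\bigl[g_n(Z_n)\bigr],\qquad g_n(z)=\bigl(\sigma_n(z)-\bm a(z)\bigr)^2,
\]
where $\sigma_n(z)$ is the oracle posterior at $k=n/2+z\sqrt n/2$. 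The oracle's log-likelihood ratio is $(2K-n)\log\frac{\pi_n}{1-\pi_n}=Z_n\sqrt n\log\frac{1/2+c/\sqrt n}{1/2-c/\sqrt n}$. Expanding $\log\frac{1/2+h}{1/2-h}=4h+O(h^3)$ shows this equals $4cZ_n(1+O(1/n))$. Hence $\sigma_n(z)\to\sigma(z):=1/(1+e^{-4cz})=\Pr(\theta=1\mid z,c)$ uniformly on compact sets. Therefore $g_n\to g:=(\sigma-\bm a)^2$ at every $z$, and this convergence is locally uniform in the sense that $z_n\to z$ implies $g_n(z_n)\to g(z)$, provided $z$ is a continuity point of $\bm a$.

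\textbf{Weak convergence and passage to the limit.} Under $\theta=1$, $K\sim\mathrm{Bin}(n,\tfrac12+c/\sqrt n)$, so $\E Z_n=2c$ and $\mathrm{Var}(Z_n)=4\pi_n(1-\pi_n)\to1$. The Lindeberg (or de Moivre–Laplace) CLT then gives $Z_n\Rightarrow\mathcal N(2c,1)$, and symmetrically $Z_n\Rightarrow\mathcal N(-2c,1)$ under $\theta=0$. By Skorokhod's representation, realize $Z_n\to Z$ almost surely. Because $\bm a$ is monotone, its discontinuity set is countable and hence has Gaussian measure zero, so $Z$ is almost surely a continuity point of $\bm a$. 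By the previous step, $g_n(Z_n)\to g(Z)$ almost surely. Since $0\le g_n\le1$, dominated convergence yields $\E g_n(Z_n)\to\E g(Z)=\int\phi(z\mp2c)\,(\sigma(z)-\bm a(z))^2dz$. Averaging the two states reproduces exactly $\tilde R(\tilde{\bm a},c)$ as defined in the limit game, which completes the argument.

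\textbf{Main obstacle.} The delicate point is that the DM's strategy is an arbitrary monotone function, possibly discontinuous. Plain weak convergence therefore does not immediately transfer to $\E g_n(Z_n)$, and a local CLT would need the integrand to be regular. I would handle this with the extended continuous-mapping argument above: Skorokhod's representation plus the null discontinuity set of a monotone function under a Gaussian limit. One must also check that evaluating $\tilde{\bm a}$ only at the lattice points $y_{n,k}$ (the piecewise-linear interpolation is irrelevant for the finite regret) is consistent with this. The uniform-on-compacts expansion of the log-odds is routine by comparison.
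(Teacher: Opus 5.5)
Your proof is correct, but for the fixed-$c$ limit it takes a genuinely different route from the paper. The outer skeleton is shared: reduce to a fixed $c$ by bounded convergence in $\gamma$, then expand the oracle log-odds as $4cx_{n,k}+O(n^{-1})$.

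\emph{The paper's route.} It uses a \emph{local} CLT, $\Pr(K=k\mid c,\theta=1)=\Delta_n\phi(x_{n,k}-2c)+o(\Delta_n)$ with $\Delta_n=2/\sqrt n$. It truncates to the central region $|x_{n,k}|\le M$ and bounds the tails by the ordinary CLT. It treats the central sum as a Riemann sum of the bounded, a.e.\ continuous (hence Riemann-integrable) function $\phi(z-2c)\bigl(p(z,c)-\tilde{\bm a}(z)\bigr)^2$ on $[-M,M]$. Finally it sends $n\to\infty$ and then $M\to\infty$.

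\emph{Your route.} You need only distributional convergence $Z_n\Rightarrow\mathcal N(\pm 2c,1)$ for the Bernoulli triangular array. You then apply the extended continuous-mapping argument: Skorokhod representation, countability of the discontinuities of the monotone $\bm a=\tilde{\bm a}\circ\Phi$, atomlessness of the Gaussian limit, and bounded convergence.

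\emph{What each buys.} Your route needs no local limit theorem and no truncation, since tails are absorbed by bounded convergence. It also handles the $n$-dependence of the integrand explicitly through your step ``$z_n\to z$ implies $g_n(z_n)\to g(z)$.'' The paper's Riemann-sum step leaves two things tacit: that the $o(\Delta_n)$ error is uniform in $k$ over the central region, and that $p_n(k,c)$ is uniformly close to $p(x_{n,k},c)$ there. You also keep the embedding straight by setting $a_k=\tilde{\bm a}(\Phi(z_{n,k}))$, whereas the paper's proof evaluates $\tilde{\bm a}$ directly at $x_{n,k}$. In exchange, the paper's argument works at the level of the probability mass function, the form you would need to extract quantitative approximation rates rather than just the limit.
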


\begin{proof}
Recall that the finite-game regret is given by:
\[
R_n(\tilde{\bm a}, \gamma) = \int_{[0,\sqrt{n}]} R_n(\tilde{\bm a}, c) \, \gamma(dc),
\]
where $R_n(\tilde{\bm a}, c)$ is the regret conditional on the local parameter $c$ (corresponding to precision $\pi_n = 1/2 + c/\sqrt{n}$). Since the loss function is bounded in $[0,1]$, we have $0 \le R_n(\tilde{\bm a}, c) \le 1$ for all $n$ and $c$. By the Dominated Convergence Theorem, it suffices to establish pointwise convergence in $c$. That is, we show that for any fixed $c \in \mathbb{R}_{+}$:
\[
\lim_{n \to \infty} R_n(\tilde{\bm a}, c) = \tilde R(\tilde{\bm a}, c).
\]
In the remainder of the proof, fix $c$. We analyze the convergence of the regret conditional on state $\theta=1$; the argument for $\theta=0$ is symmetric. The $\theta=1$ component of the limit regret is:
\[
\tilde{R}(\tilde{\bm a}, c | \theta=1) = \int_{-\infty}^{\infty} \phi(z-2c) \left( \frac{1}{1+e^{-4cz}} - \tilde{\bm a}(z) \right)^2 dz.
\]

\paragraph{Step 1: Grid Discretization and Local CLT}
Let $K$ denote the number of high signals. We standardize our grid using the variable:
\[
x_{n,K} \equiv \frac{2K - n}{\sqrt{n}}, \quad \text{with grid spacing } \Delta_n \equiv \frac{2}{\sqrt{n}}.
\]
We analyze the behavior of $K$ under the sequence of priors $\pi_n = 1/2 + c/\sqrt{n}$. We define the standardized random variable:
\[
Z_n = \frac{K - n\pi_n}{\sqrt{n\pi_n(1-\pi_n)}}.
\]
By the De Moivre-Laplace theorem, $Z_n$ converges in distribution to a standard normal $Z \sim N(0,1)$. To relate this to our grid $x_{n,k}$, we expand $Z_n$:
\begin{align*}
Z_n(K) &= \frac{K - n(1/2 + c/\sqrt{n})}{\sqrt{n(1/2 + c/\sqrt{n})(1/2 - c/\sqrt{n})}} \\
&= \frac{(K - n/2) - c\sqrt{n}}{\sqrt{n/4 - c^2}} \\
&= \frac{\frac{\sqrt{n}}{2} x_{n,K} - c\sqrt{n}}{\frac{\sqrt{n}}{2}\sqrt{1 - 4c^2/n}} \\
&= \frac{x_{n,K} - 2c}{\sqrt{1 - 4c^2/n}}.
\end{align*}
As $n \to \infty$, the denominator approaches 1 uniformly on compact sets. Thus, $Z_n(K=k) \rightarrow x_{n,k} - 2c$.
Since the probability mass function of the standardized binomial is approximated by $\frac{1}{\sigma_n}\phi(Z_n)$, and the standard deviation is $\sigma_n = \sqrt{n\pi_n(1-\pi_n)} = \frac{\sqrt{n}}{2}+O(\frac{1}{\sqrt{n}}) = \frac{1}{\Delta_n}$, we obtain the local limit form:
\begin{equation} \label{eq:LLT}
\Pr(K=k \mid c, \theta=1) = \Delta_n \phi(x_{n,k} - 2c) + o(\Delta_n).
\end{equation}

\paragraph{Step 2: Posterior Convergence}
Let $p_n(k, c) \equiv \Pr(\theta=1 \mid K=k, c)$. The log-likelihood ratio for the finite model is:
\begin{align*}
\log \left( \frac{p_n}{1-p_n} \right) &= (2k-n) \log \left( \frac{1/2 + c/\sqrt{n}}{1/2 - c/\sqrt{n}} \right) \\
&= \sqrt{n} x_{n,k} \cdot \log \left( \frac{1 + 2c/\sqrt{n}}{1 - 2c/\sqrt{n}} \right).
\end{align*}
Using the Taylor expansion, the term on the right becomes:
\[
\sqrt{n} x_{n,k} \left( \frac{4c}{\sqrt{n}} + O(n^{-3/2}) \right) = 4c x_{n,k} + O(n^{-1}).
\]
Thus, for any sequence $k_n$ such that $x_{n,k_n} \to z$, the posterior converges to the limit posterior:
\begin{equation} \label{eq:post_conv}
p_n(k_n, c) \xrightarrow{} \frac{1}{1+e^{-4cz}} \equiv p(z, c).
\end{equation}

\paragraph{Step 3: Decomposition and Convergence}
We decompose the sum over $k$ into a central region bounded by $M$ and the tails. Let $I_n(M) = \{k : |x_{n,k}| \le M\}$.
\[
R_n(\tilde{\bm a}, c | \theta=1) = \sum_{k \in I_n(M)} \Pr(K=k) (p_n(k,c) - \tilde{\bm a}(x_{n,k}))^2 \quad + \sum_{k \notin I_n(M)} \Pr(K=k) (p_n(k,c) - \tilde{\bm a}(x_{n,k}))^2.
\]

\textbf{The Tails:} Since the loss is bounded by 1, the tail sum is bounded by $\Pr(|X_{n,K}| > M)$. By the Central Limit Theorem, as $n \to \infty$, the distribution of $X_{n,K}$ converges to $N(2c, 1)$. Thus:
\[
\limsup_{n \to \infty} \sum_{k \notin I_n(M)} \Pr(K=k) (p_n(k,c) - \tilde{\bm a}(x_{n,k}))^2 \le 1 - \int_{-M}^M \phi(z-2c) dz.
\]
This term vanishes as $M \to \infty$.

\textbf{The Central Region:} Substituting the Local Limit approximation \Cref{eq:LLT} into the central sum:
\[
 \sum_{k \in I_n(M)} \left( \Delta_n \phi(x_{n,k} - 2c) + o(\Delta_n) \right) (p_n(k,c) - \tilde{\bm a}(x_{n,k}))^2
\]
Since $\tilde{\bm a}$ is a monotonic function, it is continuous almost everywhere. At any continuity point $z$ of $\tilde{\bm a}$, the discrete loss $(p_n(k,c) - \tilde{\bm a}(x_{n,k}))^2$ converges pointwise to the limit loss $(p(z,c) - \tilde{\bm a}(z))^2$ by \Cref{eq:post_conv}.
The sum is a Riemann sum approximating the integral of a bounded, almost-everywhere continuous function, such function is Riemann integrable, so the sum converges. Thus, as $n \to \infty$:
\[
\sum_{k \in I_n(M)} \left( \Delta_n \phi(x_{n,k} - 2c) + o(\Delta_n) \right) (p_n(k,c) - \tilde{\bm a}(x_{n,k}))^2 \longrightarrow \int_{-M}^M \phi(z-2c) \left( p(z,c) - \tilde{\bm a}(z) \right)^2 dz.
\]

\paragraph{Step 4: Conclusion}
Combining the bounds, we take limits in the order $n \to \infty$ followed by $M \to \infty$:
\begin{align*}
\lim_{n \to \infty} R_n(\tilde{\bm a}, c | \theta=1) &= \int_{-\infty}^{\infty} \phi(z-2c) \left( p(z,c) - \tilde{\bm a}(z) \right)^2 dz.
\end{align*}
Applying the symmetric argument for $\theta=0$ yields the full limit regret $\tilde{R}(\tilde{\bm a}, c)$. This completes the proof.
\end{proof}

\subsubsection{Uniform Convergence in the DM's Problem.}
For each $\gamma\in\Sigma$, define Nature's value function in the finite and limit games by
\[
V_n(\gamma):=\inf_{\tilde{\bm a}_n\in\mathcal A} R_n(\tilde{\bm a}_n,\gamma),
\qquad
V_\infty(\gamma):=\inf_{\tilde{\bm a}\in\mathcal A} \tilde R(\tilde{\bm a},\gamma).
\]
We claim that for any sequence $\gamma_n\to\gamma$ in $\Sigma$,
\begin{equation}\label{eq:value-pointwise}
V_n(\gamma_n)\;\longrightarrow\;V_\infty(\gamma).
\end{equation}
To justify the interchange of $\lim$ and $\inf$, it suffices to show that $R_n(\cdot,\gamma_n)$ converges to $\tilde R(\cdot,\gamma)$ uniformly on $\mathcal A$. This uniformity follows from a standard convex-analytic fact: on a compact convex set, pointwise convergence of convex functions to a continuous limit implies uniform convergence. Here, for each $n$, the map $\tilde{\bm a}_n\mapsto R_n(\tilde{\bm a}_n,\gamma_n)$ is convex because it is an expectation of quadratic losses in $\tilde{\bm a}_n$, and the limit map $\tilde{\bm a}\mapsto \tilde R(\tilde{\bm a},\gamma)$ is continuous (it is an integral of a squared difference against a fixed dominating density). Since we have pointwise convergence result above, uniform convergence on $\mathcal A$ follows, and \Cref{eq:value-pointwise} holds.

\subsubsection{Uniform Convergence of Value Functions on $\Sigma$.}
Next we strengthen \Cref{eq:value-pointwise} to uniform convergence of $V_n$ to $V_\infty$ on the compact set $\Sigma$:
\begin{equation}\label{eq:value-uniform}
\sup_{\gamma\in\Sigma}\bigl|V_n(\gamma)-V_\infty(\gamma)\bigr|\;\longrightarrow\;0.
\end{equation}
To see this, note first that for each $n$, $V_n$ is concave on $\Sigma$ because $R_n(\tilde{\bm a}_n,\gamma)$ is linear in $\gamma$ and $V_n$ is the pointwise infimum over $\tilde{\bm a}_n$ of linear functions. The same is true for $V_\infty$. Moreover, by the previous paragraph, $V_n(\gamma)\to V_\infty(\gamma)$ pointwise for every $\gamma\in\Sigma$. Since $\Sigma$ is compact and convex and $V_\infty$ is continuous (finite and concave on a compact convex domain), the same compactness/convexity argument implies that pointwise convergence of concave functions to a continuous limit is uniform, yielding \Cref{eq:value-uniform}.

\subsubsection{Convergence of Equilibrium Strategies.}
Consider Nature's equilibrium problem $\max_{\gamma\in\Sigma} V_n(\gamma)$. By \Cref{eq:value-uniform}, $V_n\to V_\infty$ uniformly on the compact set $\Sigma$. \Cref{thm:limit_eq} implies that the limit game has a unique equilibrium and hence $V_\infty$ has a unique maximizer $\gamma^*$. By Berge's maximum theorem, any sequence of maximizers $\gamma_n^*\in\arg\max_{\gamma\in\Sigma}V_n(\gamma)$ must converge weakly to $\gamma^*$:
\[
\gamma_n^*\to \gamma^*.
\]

Finally, fix any subsequence along which $\gamma_n^*\to\gamma^*$. Because the DM's best response to any $\gamma$ is unique under quadratic loss, the correspondence $\gamma\mapsto \arg\min_{\tilde{\bm a}_n\in\mathcal A} R_n(\tilde{\bm a}_n,\gamma)$ is single-valued for each $n$, and likewise in the limit. The uniform convergence of $R_n(\cdot,\gamma_n^*)$ to $\tilde R(\cdot,\gamma^*)$ on $\mathcal A$ then implies that the associated minimizers converge pointwise:
\[
\tilde{\bm a}_n^*(u)\;\to\;\tilde{\bm a}^*(u)\qquad\text{for all continuity points }u\in[0,1].
\]
Together with $\gamma_n^*\to\gamma^*$, this establishes the convergence of equilibria.
\end{proof}

\subsection{Proof of \Cref{co2}}\label{pf:co2}
\begin{proof}
(i) \begin{equation*}
\tilde{R}(0)=\int_{-\infty}^{\infty} \phi(z) \left(\frac{w^*(e^{2c^*z-2{c^*}^2}-e^{-2c^*z-2{c^*}^2})}{2[2(1-w^*)+w^*(e^{2c^*z-2{c^*}^2}+e^{-2c^*z-2{c^*}^2})]}\right)^2 dz    
\end{equation*}

Notice that $\left(\frac{w^*(e^{2c^*z-2{c^*}^2}-e^{-2c^*z-2{c^*}^2})}{2[2(1-w^*)+w^*(e^{2c^*z-2{c^*}^2}+e^{-2c^*z-2{c^*}^2})]}\right)^2>0$ on $(-\infty,0) \cup (0,\infty)$, we have $\tilde{R}(0)>0$. Since $0<w^*<1$ and
\[\tilde{R}(\gamma)=w^*\tilde{R}(c^*)+(1-w^*)\tilde{R}(0)\]
$\tilde{R}(0)>0$ implies $\tilde{R}(\gamma)>0$.  \Cref{prop:limit regret} implies $R_n(\gamma) \xrightarrow{d} \tilde{R}(\gamma)$, thus $R_n(\gamma)>0$.
\bigskip\\

\noindent(ii) Since 
\[q_{n,DM} \xrightarrow{d} \frac{1-w^*+w^* e^{2c^*z-2{c^*}^2}}{2(1-w^*)+w^*(e^{2c^*z-2{c^*}^2}+e^{-2c^*z-2{c^*}^2})}\]
When $-\infty<z<\infty$, we have
\begin{equation*}
0<\frac{1-w^*+w^* e^{2c^*z-2{c^*}^2}}{2(1-w^*)+w^*(e^{2c^*z-2{c^*}^2}+e^{-2c^*z-2{c^*}^2})}<1   \end{equation*}

Thus $0<q_{n,DM}<1$ with probability $1$ when $n \rightarrow \infty$.
\end{proof}

\subsection{Proof of  \Cref{thm:mis}}\label{pf:thm:mis}
\textbf{Proof of the Convergence Rate for DM's Mean Squared Loss $L_n\equiv\E_{k|\pitrue, \tstate}\left[ (\bm a_n^*(k) - \tstate)^2 \right]$}
\begin{proof}
By symmetry,  it suffices to analyze $D_n = \E_{k|\pitrue, \tstate=1}\left[ (1-\bm a_n^*(k))^2 \right]$, for which $L_n = D_n$. When $\tstate=1$, the number of signals $k$ follows a binomial distribution $B(n, \pitrue)$.

Let $Z_K = \frac{2K-n}{\sqrt{n}}$ and $\tilde{Z}_K = \frac{K - n\pitrue}{\sqrt{n\pitrue(1-\pitrue)}}$. The De Moivre-Laplace Theorem implies that $\tilde{Z}_K \xrightarrow{d} Z \sim N(0,1)$. We can write $Z_K$ as a function of $\tilde{Z}_K$:
\[
Z_K = \sqrt{n}(2\pitrue-1) + 2\tilde{Z}_K\sqrt{\pitrue(1-\pitrue)}.
\]
Since $\pitrue > 1/2$, the deterministic term $\sqrt{n}(2\pitrue-1)$ goes to infinity.  For any $M > 0$,
\[
\Prb(Z_k \le M) = \Prb\left(\tilde{Z}_k \le \frac{M - \sqrt{n}(2\pitrue-1)}{2\sqrt{\pitrue(1-\pitrue)}}\right).
\]
As $n \to \infty$, the right-hand side of the inequality tends to $-\infty$. Since $\tilde{Z}_k \xrightarrow{d} Z$, we have $\lim_{n\to\infty} \Prb(Z_k \le M) = \Prb(Z \le -\infty) = 0$. Thus, $Z_k \xrightarrow{p} \infty$ .

The DM's error term is given by
\[
1 - \bm a_n^*(K) \xrightarrow{d} \frac{1-w^*+w^* e^{-2c^*Z_K-2{c^*}^2}}{2(1-w^*)+w^*(e^{2c^*Z_K-2{c^*}^2}+e^{-2c^*Z_K-2{c^*}^2})}.
\]
Define a function $h(z) = \left(\frac{1-w^*}{w^*}e^{2{c^*}^2}e^{-2c^*z}\right)^2$. Since $Z_k \xrightarrow{p} \infty$, by the Continuous Mapping Theorem,
\[
\frac{(1-\bm a_n^*(K))^2}{h(Z_K)} \xrightarrow{p} 1.
\]
Let $\Xi := 4c^*(2\pitrue-1)$. We analyze the limit of expectation:
\[
\lim_{n\to\infty} e^{\Xi\sqrt{n}} D_n = \lim_{n\to\infty} \E\left[ e^{\Xi\sqrt{n}} (1-\bm a_n^*(K))^2 \right].
\]
Let $g_n(\tilde{z}) = e^{\Xi\sqrt{n}} (1 - \bm a_n^*(k(\tilde{z})))^2$. We find the pointwise limit of $g_n(\tilde{z})$ for a fixed $\tilde{z}$.
\begin{align*}
\lim_{n\to\infty} g_n(\tilde{z}) &= \lim_{n\to\infty} \left[ e^{\Xi\sqrt{n}} h(z_k(\tilde{z})) \cdot \frac{(1-\bm a_n^*(k(\tilde{z})))^2}{h(z_k(\tilde{z}))} \right] \\
&= \lim_{n\to\infty} e^{\Xi\sqrt{n}} h(z_k(\tilde{z})) \\
&= \left(\frac{1-w^*}{w^*}e^{2{c^*}^2}\right)^2 \lim_{n\to\infty} e^{\Xi\sqrt{n} - 4c^*z_k(\tilde{z})} \\
&= \left(\frac{1-w^*}{w^*}e^{2{c^*}^2}\right)^2 \lim_{n\to\infty} e^{\left(\Xi - 4c^*(2\pitrue-1)\right)\sqrt{n} - 8c^*\tilde{z}\sqrt{\pitrue(1-\pitrue)}} \\
&= \left(\frac{1-w^*}{w^*}e^{2{c^*}^2}\right)^2 e^{-8c^*\tilde{z}\sqrt{\pitrue(1-\pitrue)}} \equiv g(\tilde{z}).
\end{align*}
Since $\tilde{Z}_k \xrightarrow{d} Z$, we have $g_n(\tilde{Z}_k) \xrightarrow{d} g(Z)$. The sequence of random variables $\{g_n(\tilde{Z}_K)\}$ is integrable.  Thus, by the Dominated Convergence Theorem, the expectation converges:
\[
\lim_{n\to\infty} \E\left[g_n(\tilde{Z}_K)\right] = \E[g(Z)] = \int_{-\infty}^{\infty} g(z) \frac{1}{\sqrt{2\pi}} e^{-z^2/2} dz.
\]
This integral is a finite, positive constant:
\begin{align*}
\E[g(Z)] &= \left(\frac{1-w^*}{w^*}e^{2{c^*}^2}\right)^2 \int_{-\infty}^{\infty} \frac{1}{\sqrt{2\pi}} e^{-z^2/2 - 8c^*z\sqrt{\pitrue(1-\pitrue)}} dz \\
&= \left(\frac{1-w^*}{w^*}e^{2{c^*}^2}\right)^2 e^{32{c^*}^2\pitrue(1-\pitrue)}.
\end{align*}
Since $\lim_{n\to\infty} e^{\Xi\sqrt{n}} D_n$ is a finite positive constant, it follows that $D_n = \Theta({e^{-\Xi\sqrt{n}}})$.
\end{proof}

\textbf{Proof of the Convergence Rate for $R_n^{\text{mis}} \equiv \E_{k|\pitrue}\left[ \left( \Prb(\tstate=1|\pitrue, K) - \bm a_n^*(K) \right)^2 \right]$}
\begin{proof}
Let $q_{n,\text{oracle}}^{\text{true}}(k) = \Prb(\tstate=1|\pitrue, k)$. The misspecified regret is
$$
R_n^{\text{mis}} = \E_{K|\pitrue, \tstate=1}\left[ \left( q_{n,\text{oracle}}^{\text{true}}(K) - \bm a_n^*(K) \right)^2 \right] = \E_{K|\pitrue, \tstate=1}\left[ (1 - \bm a_n^*(K))^2 \left( 1 - \frac{1-q_{n,\text{oracle}}^{\text{true}}(k)}{1-\bm a_n^*(K)} \right)^2 \right].
$$
Let $\rho_n(k) = \frac{1-q_{n,\text{oracle}}^{\text{true}}(k)}{1-a_n^*(k)}$. We analyze the limit of this ratio. The Oracle's error term is
$$
1 - q_{n,\text{oracle}}^{\text{true}}(K) \xrightarrow{d} \frac{1}{1 + e^{\sqrt{n}Z_K \log(\frac{\pitrue}{1-\pitrue})}}.
$$
As established in the previous proof, $Z_k \to \infty$ in probability.

We now prove that $\rho_n(K) \xrightarrow{p} 0$. That is, for any $\epsilon > 0$,
\[
\Prb(|\rho_n(K)| > \epsilon) \to 0 \quad \text{as } n \to \infty,
\]

for $z_k > 0$, we have the bounds:
$$
1-q_{n,\text{oracle}}^{\text{true}}(k) < e^{-\sqrt{n}z_k \log(\frac{\pitrue}{1-\pitrue})}
$$
$$
1-\bm a_n^*(k) > K_D e^{-2c^*z_k} \quad \text{for some constant } K_D > 0.
$$
Therefore, $\rho_n(k) < \frac{1}{K_D} e^{-z_k(\sqrt{n}\log(\frac{\pitrue}{1-\pitrue}) - 2c^*)}$. For the inequality $\rho_n(k) > \epsilon$ to hold, $z_k$ must be smaller than a value $M_n$ which tends to 0. Specifically, there exists a sequence $M_n$ with $\lim_{n\to\infty} M_n = 0$ such that $\{\rho_n(k) > \epsilon\} \subseteq \{z_k \le M_n\}$. Thus
\begin{equation*}
\Prb(|\rho_n(K)| > \epsilon) \leq \Prb(Z_K \le M_n)   
\end{equation*}

As we proved in the $L_n$ section, $Z_K \to \infty$ in probability. This means $\lim_{n\to\infty}\Prb(Z_K \le M_n) = 0$, we have
\[
\lim_{n\to\infty}\Prb(\rho_n(K) > \epsilon) = 0.
\]

This holds for any $\epsilon>0$, so $\rho_n(K) \xrightarrow{p} 0$.

Let $X_n(K) = (1 - \bm a_n^*(K))^2$ and $Y_n(K) = (1 - \rho_n(K))^2$. We know that $\E[X_n(K)] = L_n$. Since $Y_n(K) \xrightarrow{p} 1$ and $Y_n(K)$ is bounded, the sequence of random variables $\{X_n(K)Y_n(K)\}$ is uniformly integrable relative to $\{X_n(K)\}$. By the Dominated Convergence Theorem, we can conclude that
\[
\lim_{n\to\infty} \E[X_n(K)Y_n(K)] = \lim_{n\to\infty} \E[X_n(K)].
\]
Thus, $R_n^{\text{mis}}$ has the same asymptotic rate as $L_n$, which is $\Theta (e^{-\Xi\sqrt{n}})$.
\end{proof}

\textbf{Proof of the Convergence Rate for Oracle’s Mean Squared Loss\\ $L_n^{\text{oracle}}\equiv\E_{K|\pitrue, \tstate}\left[ (\Prb(\tstate=1|\pitrue, K) - \tstate)^2 \right]$}
\begin{proof}
We can split the error sum into two parts:
\[
L_n^{\text{oracle}} = \underbrace{\sum_{k=0}^{\lfloor n/2 \rfloor} P(k)(1-q_{n,\text{oracle}}^{\text{true}}(k))^2}_{\text{Part A: Misleading Samples}} + \underbrace{\sum_{k=\lfloor n/2 \rfloor+1}^{n} P(k)(1-q_{n,\text{oracle}}^{\text{true}}(k))^2}_{\text{Part B: Correctly Classified Samples}}
\]

First we bound the error from misleading samples.
\[
 \sum_{k=0}^{\lfloor n/2 \rfloor} P(k)(1-q_{n,\text{oracle}}^{\text{true}}(k))^2    \le \sum_{k=0}^{\lfloor n/2 \rfloor} P(k) = P(k \le n/2)
    \]
    
We can now apply the Chernoff bound to this tail probability:
    \[
    P(K \le n/2) \le e^{-n \cdot D_{KL}(1/2 || \pi_{true})}
    \]
    
Next we bound the error in correctly classified samples.

Denote $r=\frac{\pi_{true}}{1-\pi_{true}}$, we have
\begin{equation*}
\sum_{k=\lfloor n/2 \rfloor+1}^{n} P(k)(1-q_{n,\text{oracle}}^{\text{true}}(k))^2= \sum_{k=\lfloor n/2 \rfloor+1}^{n} P(k)  (1-\frac{1}{1+r^{2k-n}})^2 
\end{equation*}

\begin{lemma}\label{lem:central}
For all integers $n\ge 1$,
\[
\binom{n}{\lfloor n/2\rfloor}\;\le\;\frac{2^n}{\sqrt{\pi n/2}},
\]
\end{lemma}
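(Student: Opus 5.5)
The plan is to reduce to the even case $n=2m$, where the claim reads $\binom{2m}{m}\le 4^m/\sqrt{\pi m}$, and then derive the odd case from it. For the even case I would avoid explicit Stirling error terms and use a monotonicity argument instead. Define
\[
a_m \equiv \binom{2m}{m}\frac{\sqrt{m}}{4^m}, \qquad m\ge 1 .
\]
The idea is to show that $a_m$ is nondecreasing and converges to $1/\sqrt{\pi}$. Together these give $a_m\le 1/\sqrt{\pi}$ for all $m\ge1$, which is exactly the even-case bound because $\sqrt{\pi n/2}=\sqrt{\pi m}$ when $n=2m$.

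For monotonicity, the identity $\binom{2m+2}{m+1}=\binom{2m}{m}\frac{(2m+1)(2m+2)}{(m+1)^2}$ gives
\[
\frac{a_{m+1}}{a_m}=\frac{2m+1}{2m+2}\sqrt{\frac{m+1}{m}}=\frac{2m+1}{2\sqrt{m(m+1)}}.
\]
This ratio is at least $1$, since $(2m+1)^2=4m^2+4m+1> 4m(m+1)$. For the limit $a_m\to 1/\sqrt{\pi}$, I would invoke Stirling's formula; the Robbins sharpened bounds mentioned in the proof sketch suffice, and only the leading term is needed. Equivalently one can use the Wallis product. This limit is the only nonelementary input, so I expect it to be the main (mild) obstacle. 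Since it is classical, I would cite it rather than reprove it.

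For odd $n=2m+1$ with $m\ge 0$, we have $\lfloor n/2\rfloor=m$. The plan is to use the symmetry $\binom{2m+1}{m}=\binom{2m+1}{m+1}$ together with Pascal's rule, which gives $\binom{2m+2}{m+1}=2\binom{2m+1}{m}$. Applying the even case at $m+1\ge1$ then yields
\[
\binom{2m+1}{m}=\tfrac12\binom{2m+2}{m+1}\le \frac{2^{2m+1}}{\sqrt{\pi(m+1)}}\le \frac{2^{n}}{\sqrt{\pi n/2}}.
\]
The last inequality holds because $m+1\ge (2m+1)/2=n/2$. This covers every $n\ge1$, including $n=1$.
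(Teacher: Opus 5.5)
Your proof is correct, but it takes a different route from the paper's, and it is more complete.

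\textbf{Even case.} For even $n=2m$ the paper works directly: it substitutes Robbins' two-sided Stirling bounds into $\binom{2m}{m}=(2m)!/(m!)^2$, using the upper bound on the numerator and the lower bound on each $m!$. This gives $\binom{2m}{m}\le \frac{4^m}{\sqrt{\pi m}}\exp\bigl(\frac{1}{24m}-\frac{2}{12m+1}\bigr)$, and the paper then notes that the exponent is negative for $m\ge 1$. You instead show that $a_m=\binom{2m}{m}\sqrt{m}/4^m$ is increasing, via the elementary ratio $\frac{2m+1}{2\sqrt{m(m+1)}}>1$. You then need only the leading-order limit $a_m\to 1/\sqrt{\pi}$ (from Stirling or Wallis), with no explicit error terms. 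The paper's route is non-asymptotic and gives an explicit slack factor for each $m$, but it imports the sharper Robbins inequalities. Yours uses a weaker classical input, and its only computation is a one-line ratio.

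\textbf{Odd case.} The paper's proof treats only $n=2m$ (``to simplify analysis'') and never returns to odd $n$. Yet the lemma is later applied with general $n$ when bounding the oracle's loss in the correctly classified region. Your reduction $\binom{2m+1}{m}=\tfrac12\binom{2m+2}{m+1}$, combined with the even case at $m+1$ and with $m+1\ge n/2$, closes that case cleanly, including $n=1$. So your argument actually establishes the lemma as stated, while the paper's proof, read literally, covers only half of it.
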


\begin{proof}
We use Robbins' sharpened Stirling bounds: for every integer $t\ge1$,
\begin{equation}\label{eq:robbins}
\sqrt{2\pi}\,t^{\,t+\frac12}e^{-t}\,e^{\frac{1}{12t+1}}
\;\le\;
t!
\;\le\;
\sqrt{2\pi}\,t^{\,t+\frac12}e^{-t}\,e^{\frac{1}{12t}}\,.
\end{equation}

To simplify analysis we consider $n=2m$.
We claim
\[
\binom{2m}{m}\;\le\;\frac{4^m}{\sqrt{\pi m}}\quad\text{for all }m\ge1.
\]
Write $\displaystyle \binom{2m}{m}=\frac{(2m)!}{(m!)^2}$ and apply \Cref{eq:robbins} with the \emph{upper} bound for $(2m)!$ and the \emph{lower} bound for each $m!$:
\[
\binom{2m}{m}
\;\le\;
\frac{\sqrt{2\pi}\,(2m)^{2m+\frac12}e^{-2m}\,e^{\frac{1}{12(2m)}}}
{\bigl(\sqrt{2\pi}\,m^{\,m+\frac12}e^{-m}\,e^{\frac{1}{12m+1}}\bigr)^2}.
\]
Simplifying constants and powers yields
\[
\binom{2m}{m}
\;\le\;
\frac{(2m)^{2m}}{m^{2m}}\cdot\frac{1}{\sqrt{\pi m}}
\cdot
\exp\!\Big(\frac{1}{24m}-\frac{2}{12m+1}\Big)
\;=\;
\frac{4^m}{\sqrt{\pi m}}\cdot
\exp\!\Big(\frac{1}{24m}-\frac{2}{12m+1}\Big).
\]
Since $\frac{1}{24m}-\frac{2}{12m+1}<0$ for all $m\ge1$, the desired bound follows.
\end{proof}

For $k=\lfloor n/2\rfloor+j\ge \lceil n/2\rceil$ ($j\ge 0$),
\(
(1+r^{2k-n})^{-2}\le r^{-\,4j}.
\)
Therefore, by \Cref{lem:central},
\[
\begin{aligned}
&\sum_{j=0}^{\infty}\Pr\!\big(K=\lfloor n/2\rfloor+j\big)\,\frac{1}{(1+r^{2j})^2}
\ \le\ \sum_{j=0}^{\infty}\Pr\!\big(K=\lfloor n/2\rfloor+j\big)\,r^{-\,4j}\\
\le&\ \binom{n}{\lfloor n/2\rfloor}\,(\pi_{true}(1-\pi_{true}))^{n/2}\,\sum_{j=0}^{\infty} r^{\,j}\,r^{-\,4j}
=\binom{n}{\lfloor n/2\rfloor}\,(\pi_{true}(1-\pi_{true}))^{n/2}\,\sum_{j=0}^{\infty} r^{-\,3j}.
\end{aligned}
\]
The second inequality is because, for any $j\ge0$,
\begin{align*}
\Pr(K=\lfloor n/2\rfloor+j)&=\binom{n}{\lfloor n/2\rfloor+j}\,\pi_{true}^{\lfloor n/2\rfloor+j}(1-\pi_{true})^{n-\lfloor n/2\rfloor-j}
\\
&=\binom{n}{\lfloor n/2\rfloor+j}\,\pi_{true}^{\lfloor n/2\rfloor}(1-\pi_{true})^{n-\lfloor n/2\rfloor}\,r^{\,j}.
\end{align*}
By unimodality of $\binom{n}{k}$, $\binom{n}{\lfloor n/2\rfloor+j}\le \binom{n}{\lfloor n/2\rfloor}$ for $j\ge0$. Moreover,
\[
\pi_{true}^{\lfloor n/2\rfloor}(1-\pi_{true})^{n-\lfloor n/2\rfloor}
\le (\pi_{true}(1-\pi_{true}))^{n/2}.
\] 
Hence
\[
\Pr(K=\lfloor n/2\rfloor+j)\,r^{-4j}
\le \binom{n}{\lfloor n/2\rfloor}\,(\pi_{true}(1-\pi_{true}))^{n/2}\,r^{\,j}\,r^{-4j}.
\]

Because $r>1$, $\sum_{j=0}^{\infty} r^{-3j}=(1-r^{-3})^{-1}<\infty$. Using the upper bound in
\Cref{lem:central},
\begin{align*}
\sum_{k=\lfloor n/2 \rfloor+1}^{n} P(k)(1-q_{n,\text{oracle}}^{\text{true}}(k))^2
&\le \frac{1}{1-r^{-3}}\cdot\frac{2^n}{\sqrt{\pi n/2}}\cdot(\pi_{true}(1-\pi_{true}))^{n/2}\\
&=\frac{\sqrt{2/\pi}}{1-r^{-3}}\;n^{-1/2}\,(2\sqrt{\pi_{true}(1-\pi_{true})})^{n}.
\end{align*}
Since
\[
D_{\mathrm{KL}}\!\left(\tfrac12\middle\|\pi_{\text{true}}\right)
=-\log\!\big(2\sqrt{\pi_{\text{true}}(1-\pi_{\text{true}})}\big),
\]
this yields the upper bound with $\frac{\sqrt{2/\pi}}{1-r^{-3}} \frac{1}{\sqrt{n}} e^{-n \cdot D_{KL}(1/2 || \pi_{\text{true}})}$. 

Thus there exists $M \geq 0$ such that

\[L_n^{\text{oracle}} \leq M e^{-n \cdot D_{KL}(1/2 || \pi_{\text{true}})}.\]\end{proof}

\subsection{Proof of  \Cref{prop:inference}}\label{pf:prop:inference}
\begin{proof}
By symmetry, we can assume the true state is $\theta=1$. 

\begin{enumerate}
\item 
Consider the case $n=1$. The DM's strategy is $(a_0, a_1) = (1/4, 3/4)$. The oracle's posterior after one signal $s_1=1$ is $q_{1,\text{oracle}}^{\text{true}}(1) = \frac{\pi_{\text{true}}}{\pi_{\text{true}} + (1-\pi_{\text{true}})} = \pi_{\text{true}}$. The DM over-infers if $\bm a_1^*(1) > q_{1,\text{oracle}}^{\text{true}}(1)$, which is $3/4 > \pi_{\text{true}}$. This inequality holds for any true precision $\pi_{\text{true}} \in (1/2, 3/4)$. 
\item 
Without loss we consider $\theta=1$. Notice that
\[\Pr\left( \bm a_n^*(K) < q_{n,\text{oracle}}^{\text{true}}(K) \right)=\Pr\left( 1-\bm a_n^*(K) > 1-q_{n,\text{oracle}}^{\text{true}}(K) \right)=\Pr\left( \frac{1-q_{n,\text{oracle}}^{\text{true}}(K)}{1-\bm a_n^*(K)}<1 \right)\]

Since $\rho_n(K) \xrightarrow{p} 0$,
\[ \lim_{n\to\infty} \Pr\left( \bm a_n^*(K) < q_{n,\text{oracle}}^{\text{true}}(K) \right) =\lim_{n\to\infty} \Pr\left(\rho_n(K) <1 \right) = 1, \]

Since $\bm a_n^*(K) \xrightarrow{p} 1 $ and $q_{n,\text{oracle}}^{\text{true}}(K) \xrightarrow{p} 1$, thus
\begin{align*}
&\lim_{n\to\infty}\Pr\left( |\bm a_n^*(K)-\frac{1}{2}| < |q_{n,\text{oracle}}^{\text{true}}(K)-\frac{1}{2}| \right)\\
=&\lim_{n\to\infty} \Pr\left( \bm a_n^*(K) < q_{n,\text{oracle}}^{\text{true}}(K) \right)=1     
\end{align*}
\end{enumerate}
\end{proof}

\subsection{Proof of  \Cref{thm:general_main}}\label{pf:thm:general_main}
\subsubsection*{Preliminary Lemmas}

To prove  \Cref{thm:general_main}, we utilize two technical lemmas regarding the asymptotic behavior of likelihood ratios.

\begin{lemma}[Concentration of the log-likelihood ratio]\label{lem:llr_full_recall}
For any fixed $\bp=(\pi_1,\pi_0)$ and any data $K\sim\mathrm{Multinomial}(n,\pi_{\text{true}})$, 
\[
\frac{1}{n}\log\frac{L(\pi_1;K)}{L(\pi_0;K)}
\;\xrightarrow{p}\;
D_{KL}(\pi_{\text{true}}\|\pi_0)-D_{KL}(\pi_{\text{true}}\|\pi_1).
\]
\end{lemma}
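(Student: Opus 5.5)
The log-likelihood ratio is a sum of i.i.d. terms, so I will reduce the statement to the weak law of large numbers. Write $K=(K_s)_{s\in\mathcal S}$ with $K\sim\mathrm{Multinomial}(n,\pi_{\text{true}})$, and let $X_1,\dots,X_n$ be the underlying i.i.d. signal draws with law $\pi_{\text{true}}$, so that $K_s=\sum_{i=1}^n\mathbf 1\{X_i=s\}$. Because $l(\pi_\theta;K)=\prod_s \pi_\theta(s)^{K_s}$, we have
\[
\frac1n\log\frac{l(\pi_1;K)}{l(\pi_0;K)}=\sum_{s\in\mathcal S}\frac{K_s}{n}\log\frac{\pi_1(s)}{\pi_0(s)}=\frac1n\sum_{i=1}^n \log\frac{\pi_1(X_i)}{\pi_0(X_i)}.
\]

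Under the support condition below, each summand takes finitely many values, so it is bounded with finite mean. The weak law of large numbers applies. Equivalently, $K_s/n\xrightarrow{p}\pi_{\text{true}}(s)$ for each of the finitely many $s$, and the continuous mapping theorem applied to this finite linear combination gives the same conclusion. The limit is
\[
\sum_s \pi_{\text{true}}(s)\log\frac{\pi_1(s)}{\pi_0(s)}=\sum_s\pi_{\text{true}}(s)\Big[\log\frac{\pi_{\text{true}}(s)}{\pi_0(s)}-\log\frac{\pi_{\text{true}}(s)}{\pi_1(s)}\Big]=D_{KL}(\pi_{\text{true}}\|\pi_0)-D_{KL}(\pi_{\text{true}}\|\pi_1).
\]
Signals with $\pi_{\text{true}}(s)=0$ occur with probability zero, so they drop from both the sample sum almost surely and the expectation.

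The main obstacle is the support issue. If $\pi_{\text{true}}(s)>0$ but $\pi_1(s)=0$ or $\pi_0(s)=0$, the summand is $\pm\infty$ with positive probability, and the divergences on the right are infinite. I would handle this by adopting the convention that both sides are then interpreted in the extended reals. If only $\pi_1$ (respectively only $\pi_0$) vanishes at such an $s$, then with probability tending to one some $K_s>0$. The ratio is then identically $-\infty$ (respectively $+\infty$), matching the right side $-\infty$ (respectively $+\infty$).

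If both vanish at such an $s$, the expression $\infty-\infty$ is undefined. In that case the lemma should be stated under the standing assumption that $\pi_{\text{true}}\ll\pi_0$ and $\pi_{\text{true}}\ll\pi_1$, as holds in the applications to \Cref{thm:general_main}, where experiments near $\pi^0$ share support. I would state this assumption explicitly at the start. The remaining argument is the routine LLN computation above.
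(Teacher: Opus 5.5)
Your proposal is correct and follows the paper's own argument. The paper likewise applies the law of large numbers to the multinomial frequencies, $K_s/n\xrightarrow{p}\pi_{\text{true}}(s)$, passes the limit through the finite linear combination $\sum_s (K_s/n)\log(\pi_1(s)/\pi_0(s))$, and rewrites the limit as the difference of the two KL divergences. Your treatment of the support issue (assuming $\pi_{\text{true}}\ll\pi_0$ and $\pi_{\text{true}}\ll\pi_1$, or reading both sides in the extended reals) is a refinement the paper leaves implicit, since its proof tacitly assumes all the log-probabilities are finite.
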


\begin{proof}
By the Law of Large Numbers for multinomial distributions, the empirical frequencies converge to the true probabilities, i.e., $K_s/n \xrightarrow{p} \pi_{true}(s)$. The normalized log-likelihood under a distribution $\pi_\theta$ is
$$ \frac{1}{n}\log L(\pi_\theta;K) = \frac{1}{n}\sum_{s\in\mathcal{S}}K_s\log \pi_\theta(s) = \sum_{s\in\mathcal{S}}\frac{K_s}{n}\log \pi_\theta(s) $$
As $n\to\infty$, this converges in probability to $\sum_{s\in\mathcal{S}}\pi_{true}(s)\log \pi_\theta(s)$. The normalized log-likelihood ratio is the difference of two such terms:
$$ \frac{1}{n}\log\frac{L(\pi_1;K)}{L(\pi_0;K)} = \frac{1}{n}\log L(\pi_1;K) - \frac{1}{n}\log L(\pi_0;K) $$
Taking the limit in probability yields:
\begin{align*}
    &\left( \sum_{s\in\mathcal{S}}\pi_{true}(s)\log \pi_1(s) \right) - \left( \sum_{s\in\mathcal{S}}\pi_{true}(s)\log \pi_0(s) \right) = \sum_{s\in\mathcal{S}}\pi_{true}(s)\log\frac{\pi_1(s)}{\pi_0(s)} \\
    &= \sum_{s\in\mathcal{S}}\pi_{true}(s)\log\frac{\pi_{true}(s)}{\pi_0(s)} - \sum_{s\in\mathcal{S}}\pi_{true}(s)\log\frac{\pi_{true}(s)}{\pi_1(s)} \\
    &= D_{KL}(\pi_{true}||\pi_0) - D_{KL}(\pi_{true}||\pi_1)
\end{align*}
\end{proof}

\begin{lemma}[Local quadratic expansion of multinomial $D_{KL}$]\label{lem:KLquad_multi}
Fix $\widetilde{\pi}\in\Delta(\mathcal{S})$ with strictly positive support. As $\pi\to\widetilde{\pi}$,
\begin{equation}\label{eq:KL_multi_quad}
D_{KL}(\widetilde{\pi}\,\|\,\pi)
\;=\;
\frac{1}{2}\sum_{s\in\mathcal{S}}\frac{\bigl(\pi(s)-\widetilde{\pi}(s)\bigr)^2}{\widetilde{\pi}(s)}
\;+\;O\!\Bigl(\sum_{s\in\mathcal{S}}\bigl|\pi(s)-\widetilde{\pi}(s)\bigr|^3\Bigr).
\end{equation}
Consequently, $D_{KL}(\widetilde{\pi}\,\|\,\pi)=O(1/n)$ implies $||\pi-\widetilde{\pi}||_1=O(1/\sqrt{n})$.
\end{lemma}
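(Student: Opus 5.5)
The plan is to Taylor-expand $\pi \mapsto D_{KL}(\widetilde{\pi}\,\|\,\pi)$ to second order around $\pi=\widetilde{\pi}$. Write $\pi(s)=\widetilde{\pi}(s)+\delta_s$, with $\sum_s \delta_s = 0$ because both vectors lie in the simplex. Then
\[
D_{KL}(\widetilde{\pi}\,\|\,\pi) = -\sum_{s\in\mathcal{S}} \widetilde{\pi}(s)\log\Bigl(1+\frac{\delta_s}{\widetilde{\pi}(s)}\Bigr).
\]
Since $\widetilde{\pi}(s)>0$ for every $s$ and $\mathcal{S}$ is finite, we have $m:=\min_s \widetilde{\pi}(s)>0$. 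Hence for $\|\delta\|_\infty\le m/2$ each ratio $u_s=\delta_s/\widetilde{\pi}(s)$ lies in $[-1/2,1/2]$.

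On that interval the expansion $\log(1+u)=u-u^2/2+R(u)$ holds with $|R(u)|\le C|u|^3$ for a universal constant $C$; for instance, the Lagrange remainder gives $C=\tfrac{8}{3}$. Substituting, we get:
\begin{itemize}
\item the linear term $-\sum_s \delta_s$, which vanishes because $\sum_s\delta_s=0$;
\item the quadratic term $\tfrac12\sum_s \delta_s^2/\widetilde{\pi}(s)$;
\item a remainder bounded by $C\sum_s |\delta_s|^3/\widetilde{\pi}(s)^2 \le (C/m^2)\sum_s|\delta_s|^3$.
\end{itemize}
This is exactly \Cref{eq:KL_multi_quad}, with the $O(\cdot)$ constant depending only on $\widetilde{\pi}$.

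For the consequence, I will read the statement as applying to a sequence $\pi_n\to\widetilde{\pi}$, so that the expansion is eventually valid. There are two steps.

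\begin{itemize}
\item \textbf{Lower bound on the divergence.} The quadratic term dominates: $\tfrac12\sum_s\delta_s^2/\widetilde{\pi}(s)\ge \tfrac12\|\delta\|_2^2$ because $\widetilde{\pi}(s)\le 1$. The remainder satisfies $\sum_s|\delta_s|^3\le \|\delta\|_\infty\|\delta\|_2^2$. So once $\|\delta\|_\infty\le m^2/(4C)$, we get $D_{KL}(\widetilde{\pi}\,\|\,\pi)\ge \tfrac14\|\delta\|_2^2$.
\item \textbf{From $\ell_2$ to $\ell_1$.} Cauchy--Schwarz on the finite set $\mathcal{S}$ gives $\|\delta\|_1\le\sqrt{|\mathcal{S}|}\,\|\delta\|_2\le 2\sqrt{|\mathcal{S}|\,D_{KL}(\widetilde{\pi}\|\pi)}$. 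If $D_{KL}=O(1/n)$, this yields $\|\pi_n-\widetilde{\pi}\|_1=O(1/\sqrt n)$.
\end{itemize}

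The main obstacle is justifying that $\pi_n$ is close enough to $\widetilde{\pi}$ for the expansion to apply at all. This holds under the hypothesis $\pi\to\widetilde{\pi}$. If one only assumes $D_{KL}\to0$, Pinsker's inequality, $\|\pi-\widetilde{\pi}\|_1\le\sqrt{2D_{KL}}$, gives the convergence directly. In fact Pinsker alone already yields the $O(1/\sqrt n)$ consequence globally, so I would invoke it as the primary route and keep the Taylor argument for the quadratic expansion itself.
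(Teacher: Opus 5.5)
Your proposal is correct, and for the expansion itself it follows the paper's argument. The paper expands $-\log x$ at $\widetilde{\pi}(s)$, multiplies by $\widetilde{\pi}(s)$, sums, and kills the linear term with $\sum_s(\pi(s)-\widetilde{\pi}(s))=0$. You do the same via $u_s=\delta_s/\widetilde{\pi}(s)$, but you also make the remainder explicit: its constant is $C/m^2$ with $C=8/3$ and $m=\min_s\widetilde{\pi}(s)$.

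The difference lies in the ``consequence''. The paper only asserts that it follows from strict positivity of $\widetilde{\pi}$. You supply the missing domination step: the remainder is at most $\frac14\|\delta\|_2^2$ once $\|\delta\|_\infty\le m^2/(4C)$, so $D_{KL}\ge\frac14\|\delta\|_2^2$, and Cauchy--Schwarz finishes. You also observe that Pinsker, $\|\pi-\widetilde{\pi}\|_1\le\sqrt{2D_{KL}(\widetilde{\pi}\|\pi)}$, gives the implication outright.

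Pinsker is the better tool for the stated implication. It is global, needs neither $\pi\to\widetilde{\pi}$ nor full support, and is what the paper itself uses in Step 1 of the proof of \Cref{thm:general_main}. The Taylor route instead buys the reverse, local bound $D_{KL}(\widetilde{\pi}\|\pi)\le\frac{1}{2m}\|\delta\|_2^2+O(\|\delta\|_2^3)$. That is the direction (small $\ell_1$ distance forces small divergence) that Step 2 of that proof actually relies on, and Pinsker cannot provide it.
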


\begin{proof}[Proof of \Cref{lem:KLquad_multi}]
Write
$D_{KL}(\widetilde{\pi}\,\|\,\pi)=\sum_{s\in\mathcal{S}}\widetilde{\pi}(s)\log\!\bigl(\widetilde{\pi}(s)/\pi(s)\bigr)$.
For each $s$, expand $x\mapsto -\log x$ at $x=\widetilde{\pi}(s)$:
\[
-\log \pi(s)
=
-\log \widetilde{\pi}(s)
-\frac{\pi(s)-\widetilde{\pi}(s)}{\widetilde{\pi}(s)}
+\frac{\bigl(\pi(s)-\widetilde{\pi}(s)\bigr)^2}{2\,\widetilde{\pi}(s)^2}
+O\!\bigl(|\pi(s)-\widetilde{\pi}(s)|^3\bigr).
\]
Multiply by $\widetilde{\pi}(s)$, sum over $s$, and use $\sum_s\bigl(\pi(s)-\widetilde{\pi}(s)\bigr)=0$ to obtain \Cref{eq:KL_multi_quad}.
The stated consequence follows because all $\widetilde{\pi}(s)$ are strictly positive.
\end{proof}

\subsubsection*{Proof of \Cref{thm:general_main}}

\begin{proof}
\noindent\textbf{Step 1: Upper Bound (Vanishing Informativeness).}
We first show that if experiments are too informative ($\omega(1/\sqrt{n})$), the regret vanishes. Fix $n$ and let $\bp_{\text{true}}\in\Pi_n$ be the realized experiment.
Pinsker's inequality implies $\|\pi_1-\pi_0\|_1 \le \sqrt{2 \min\{D_{KL}(\pi_1\Vert \pi_0), D_{KL}(\pi_0\Vert \pi_1)\}}$. Thus, if the KL divergence is $O(1/n)$, the $L_1$ distance is $O(1/\sqrt{n})$.

Suppose to the contrary that there exists $\bp\in\Pi_n$ with $\min\{D_{KL}(\pi_1\Vert \pi_0), D_{KL}(\pi_0\Vert \pi_1)\} = \omega(1/n)$.
Applying \Cref{lem:llr_full_recall} to $\bp_{true}$, we have:
\[
\theta=1:\;\frac{1}{n}\log\frac{L(\pi_1^{true};K)}{L(\pi_0^{true};K)}\xrightarrow{p} D_{KL}(\pi_1^{true}\Vert \pi_0^{true})=\omega(1/n).
\]
This implies the likelihood ratio diverges to $+\infty$. Consequently, the Oracle perfectly identifies the state: $q_{\text{\tiny oracle}}(K;\bp_{true}) \xrightarrow{p} 1$.

Now consider the DM. We partition $\Pi_n$ into experiments ``far'' from the truth ($\mathcal{B}$) and ``close'' to the truth ($\mathcal{A}$). For experiments in $\mathcal{B}$ (where KL divergence is $\omega(1/n)$), the likelihood $L(\pi; K)$ becomes exponentially small relative to the true likelihood $L(\pi_{true}; K)$. Their contribution to the DM's posterior calculation (via Bayes' rule over $\Pi_n$) vanishes. The DM's belief is dominated by experiments in $\mathcal{A}$. For these experiments, the likelihood ratio diverges similarly to the true process.
Thus, $a_n^*(K) \xrightarrow{p} 1$ when $\theta=1$ (and $0$ when $\theta=0$).
By dominated convergence, the expected regret $\mathbb{E}[B_G(q_{\text{\tiny oracle}} \| a_n^*)] \to 0$, contradicting the non-trivial equilibrium assumption. Therefore, $\|\pi_1-\pi_0\|_1 = O(1/\sqrt{n})$.

\smallskip
\noindent\textbf{Step 2: Lower Bound (Non-Triviality).}
Suppose to the contrary that $\sup_{\bp\in\Pi_n} ||\pi_1-\pi_0||_1 = o(1/\sqrt{n})$.
By \Cref{lem:KLquad_multi}, this implies the KL divergences are $o(1/n)$. By \Cref{lem:llr_full_recall}, the log-likelihood ratio converges to 0 in probability for both states. Thus, $q_{\text{\tiny oracle}}(K;\bp) \xrightarrow{p} \mu$ (the prior).

If the DM plays the constant rule $a(K)\equiv \mu$, the regret becomes:
\[ \mathbb{E}\bigl[\,B_G\!\bigl(q_{\text{\tiny oracle}}(K;\bp)\,\|\,\mu\bigr)\,\bigr] \to B_G(\mu\|\mu)=0. \]
This forces $R_n \to 0$, contradicting the assumption that $R > 0$. Thus, the informativeness must be at least $\Theta(1/\sqrt{n})$. 

\smallskip
\noindent\textbf{Step 3: Positive regret.}
Construct Nature's mixed strategy $\hat{\sigma}_n$ by randomizing between the uninformative experiment
\[
\bp^0=(\pi^0,\pi^0)
\]
and the local informative experiment with CLT rate,
\[
\bp^c_n=\Bigl(\pi^0+\frac{c}{\sqrt{n}},\ \pi^0-\frac{c}{\sqrt{n}}\Bigr),
\]
where $c$ is chosen small enough so that $\pi^0\pm \frac{c}{\sqrt{n}}\in\Delta(\mathcal S)$ for all sufficiently large $n$.

Applying \Cref{lem:llr_full_recall} to $\bp_{\text{true}}=\bp^c_n$, we have (under $\theta=1$)
\[
\frac{1}{n}\log\frac{L(\pi^{\text{true}}_1;K)}{L(\pi^{\text{true}}_0;K)}
\xrightarrow{p}
D_{KL}(\pi^{\text{true}}_1\Vert \pi^{\text{true}}_0)
=
\Theta(1/n),
\]
and similarly (under $\theta=0$) the limit is $-D_{KL}(\pi^{\text{true}}_0\Vert \pi^{\text{true}}_1)= -\Theta(1/n)$.
Equivalently,
\[
\log\frac{L(\pi^{\text{true}}_1;K)}{L(\pi^{\text{true}}_0;K)}
\xrightarrow{p}
\Theta(1),
\]
so the likelihood ratio does not degenerate. Hence the oracle posterior
$q_{\text{oracle}}(K;\bp^c_n)$ is non-degenerate and, in particular, does not converge to the prior $\mu$.
In contrast, if $\bp_{\text{true}}=\bp^0$, then $L(\pi^0;K)$ is the same under both states and therefore
\[
q_{\text{oracle}}(K;\bp^0)\equiv \mu .
\]

Under the mixture $\hat{\sigma}_n$, the DM's robust posterior action $a_n^*(K)$ (the Bayes action under $B_G$)
is a $\hat{\sigma}_n$-weighted average of the oracle posteriors across the two possible experiments. Since
$q_{\text{oracle}}(K;\bp)$ takes two distinct (with positive probability) values under $\hat{\sigma}_n$---one equal to $\mu$
(on $\bp^0$) and one non-degenerate (on $\bp^c_n$)---it follows that $a_n^*(K)$ does not converge in probability to
$q_{\text{oracle}}(K;\bp)$, and therefore the expected Bregman divergence does not vanish. Consequently, the regret induced
by $\hat{\sigma}_n$ is bounded away from zero for large $n$.

Therefore,
\[
\min_{a(\cdot)}\max_{\sigma_n\in\Delta(\Pi)}
\mathbb{E}_{\sigma_n,K}\!\left[
B_G\!\left(q_{\text{oracle}}(K;\bp)\,\|\,a(K)\right)
\right]
\ \ge\
\min_{a(\cdot)}
\mathbb{E}_{\hat{\sigma}_n,K}\!\left[
B_G\!\left(q_{\text{oracle}}(K;\bp)\,\|\,a(K)\right)
\right]
\ >\ 0.
\]

Combining Steps 1, 2 and 3 proves the theorem.  \end{proof}

\section{Online Appendix}

\subsection{Proof of  \Cref{thm1}}\label{pf:thm1}

\begin{proof}[Proof of  \Cref{thm1} when $n=2$]

First, fixing $w$, the expected regret for the DM is
\[
wR(\bm{a}, 1)+(1-w)R(\bm{a},\frac{1}{2})=\frac{w}{2}(1-a_2)^2+\frac{w}{2}a_0^2+\frac{1-w}{4}\sum_{k=0}^2{2 \choose k}(\frac{1}{2}-a_k)^2.
\]
To minimize the regret, $(a_0,a_1,a_2)$ should satisfy the following FOC conditions:
\begin{align*}
    a_0&=\frac{1-w}{2w+2}\\
    a_1&=\frac{1}{2}\\
    a_2&=\frac{3w+1}{2w+2}.
\end{align*}
Next, fixing $(a_0,a_1,a_2)$, the regret function $R(\bm{a},\pi)$ can be written as 
\begin{align*}
    R(\bm{a},\pi)=(\pi^2+(1-\pi)^2)(\frac{(1-\pi)^2}{\pi^2+(1-\pi)^2}-a_0)^2.
\end{align*}
$$
\frac{\partial}{\partial\pi}R(\bm{a},\pi) = 2\left( \frac{(1-\pi)^2}{\pi^2+(1-\pi)^2} - a_0 \right)\left[ (2\pi-1)\left( \frac{(1-\pi)^2}{\pi^2+(1-\pi)^2} - a_0 \right) + \frac{2\pi(\pi-1)}{\pi^2+(1-\pi)^2} \right].
$$
Next, fixing $(a_0,a_1,a_2)$, 
We can show 
$$\left[ (2\pi-1)\left( \frac{(1-\pi)^2}{\pi^2+(1-\pi)^2} - a_0 \right) + \frac{2\pi(\pi-1)}{\pi^2+(1-\pi)^2} \right]=-(2\pi-1)a_0+\frac{-2\pi^2+\pi-1}{\pi^2+(1-\pi)^2}<0$$
and $\left( \frac{(1-\pi)^2}{\pi^2+(1-\pi)^2} - a_0 \right)$ is decreasing in $\pi$ and it crosses 0 once for some $\pi\in(1/2,1)$. Thus, $R(\bm{a},\pi)$ is first decreasing and then increasing in $\pi$. Therefore, nature must be choosing $\pi=1$ or $\frac{1}{2}$ to maximize the regret. With the given $(a_0,a_1,a_2)$, we also have 
\[
R(\bm{a},1)=R(\bm{a},1/2).
\]
Thus, nature randomizing between $\pi=1$ and $\frac{1}{2}$ is indeed optimal.
\end{proof}

\begin{proof}[Proof for  \Cref{thm1} when $n=3$]
First, for the proposed solution to hold, nature is indifferent between $\pi=\frac{1}{2}$ and $\pi=\pi^*$ given DM's choice $\bm{a}^*$:
\begin{equation}
    R(\bm{a}^*,\pi^*)=R(\bm{a}^*,\frac{1}{2})\iff \sum_{k=0}^3\Pr(k|\pi^*)(\Pr(\t=1|\pi^*,k)-a_k^*)^2=\frac{1}{8}\sum_{k=0}^3{3 \choose k}(\frac{1}{2}-a_k^*)^2.
\end{equation}
Second, let $w$ be the weight that nature puts on $\pi^*$. $\bm{a}^*$ should satisfy the FOC of $wR(\bm{a},\pi^*)+(1-w)R(\bm{a},\frac{1}{2})$, which gives
\begin{equation}
    a_k^*=\frac{8w{\pi^*}^k(1-\pi^*)^{3-k}+1-w}{8w({\pi^*}^k(1-\pi^*)^{3-k}+(1-\pi^*)^k{\pi^*}^{3-k})+2(1-w)}.\label{FOC}
\end{equation}
Third, $\pi^*$ should maximize $R(\bm{a}^*, \pi)$: $R(\bm{a}^*, \pi^*)\ge R(\bm{a}^*, \pi)$ for any $\pi\in[\frac{1}{2},1]$. When $\pi^*\in (\frac{1}{2},1)$, a necessary condition is the local concavity around $\pi^*$,
\begin{align}
&\sum_{k=0}^{3} \Biggl[ \frac{d}{d\pi} \Pr(k|\pi^*) (\Pr(\theta=1|\pi^*,k) - a_k^*)^2 + \nonumber\\
&2 \Pr(k|\pi^*) (\Pr(\theta=1|\pi^*,k) - a_k^*) \Pr(\theta=1|\pi^*,k) \Pr(\theta=0|\pi^*,k) (2k - 3) \left(\frac{1}{\pi^*} - \frac{1}{1-\pi^*}\right) \Biggr] = 0.
\end{align}

Note that $a_k$ and $a_{3-k}$ will be symmetric in the sense that $a_k+a_{3-k}=1$ for any $0\le k\le 3$. The regret can be expressed as 
\[
R(\bm{a},\pi)=[\pi^3+(1-\pi)^3](\frac{\pi^3}{\pi^3+(1-\pi)^3}-a_3)^2+3\pi(1-\pi)(\pi-a_2)^2.
\]
As a result, we can simplify the above three conditions to be a system of four equations with four unknowns $(a_2, a_3, \pi^*, w)$. We abuse the notation of $\pi^*$ by $\pi$.
\begin{align}
    (\frac{\pi^3}{\pi^3+(1-\pi)^3}-a_3)&=\frac{(1-w)(2\pi-1)(\pi^2-\pi+1)}{(\pi^3+(1-\pi)^3)(8w(\pi^3+(1-\pi)^3)+2(1-w))}\label{11}\\
    (\pi-a_2)&=\frac{(1-w)(2\pi-1)}{8w\pi(1-\pi)+2(1-w)}\label{22}\\
    (\frac{1}{2}-a_3)&=\frac{2w(1-2\pi)(\pi^2-\pi+1)}{4w(\pi^3+(1-\pi)^3)+(1-w)}\label{33}\\
    (\frac{1}{2}-a_2)&=\frac{2w\pi(1-\pi)(1-2\pi)}{4w\pi(1-\pi)+(1-w)}.\label{44}
\end{align}

We argue that 1) there exists $(a_2, a_3, \pi, w)$ that solves this system of equations, and 2) the solution to this system of equations $(a_2, a_3, \pi, w)$ is indeed a saddle point to our problem.

If we use equations \Cref{11}-\Cref{22} to get the expression for $a_2$ and $a_3$, and plug into equations \Cref{33}-\Cref{44}, the last two equations can be rewritten as
\begin{align*}
\frac{(\pi^2-\pi+1)[4w^2(3\pi^2-3\pi+1)-(1-w)^2]}{(3\pi^2-3\pi+1)[3w(2\pi-1)^2+1]^2}+\frac{3\pi(1-\pi)[4w^2\pi(1-\pi)-(1-w)^2]}{[-w(2\pi-1)^2+1]^2}=0\\
\frac{\pi^2-\pi+1}{(3\pi^2-3\pi+1)^2[3w(2\pi-1)^2+1]}[\frac{(1-w)(2\pi-1)^2(\pi^2-\pi+1)}{2[3w(2\pi-1)^2+1]}+2\pi^2(1-\pi)^2]\tag*{}\\
=\frac{1}{-w(2\pi-1)^2+1}[\frac{(1-w)(2\pi-1)^2}{2[-w(2\pi-1)^2+1]}-2\pi(1-\pi)]
\end{align*}

Write 
    \begin{equation*}
        G_1(\pi,w)=\frac{(\pi^2-\pi+1)[4w^2(3\pi^2-3\pi+1)-(1-w)^2]}{(3\pi^2-3\pi+1)[3w(2\pi-1)^2+1]^2}+\frac{3\pi(1-\pi)[4w^2\pi(1-\pi)-(1-w)^2]}{[-w(2\pi-1)^2+1]^2},
    \end{equation*}
    \begin{align*}
        G_2(\pi,w)=\frac{\pi^2-\pi+1}{(3\pi^2-3\pi+1)^2[3w(2\pi-1)^2+1]}[\frac{(1-w)(2\pi-1)^2(\pi^2-\pi+1)}{2[3w(2\pi-1)^2+1]}+2\pi^2(1-\pi)^2]\\
    -\frac{1}{-w(2\pi-1)^2+1}[\frac{(1-w)(2\pi-1)^2}{2[-w(2\pi-1)^2+1]}-2\pi(1-\pi)].
    \end{align*}
    Hence, we are looking for solution $(\pi,w)$ to the following system 
    \begin{align*}
        G_1(\pi,w)&=0\\
        G_2(\pi,w)&=0
    \end{align*}
    within the domain $\pi\in[0.5,1], w\in[0,1]$.

\paragraph{Existence}
We are looking for solution $(\pi,w)$ to the following system 
    \begin{align*}
        G_1(\pi,w)&=0\\
        G_2(\pi,w)&=0
    \end{align*}
    within the domain $\pi\in[0.5,1], w\in[0,1]$.    

\begin{proposition}
There exists a pair $(\pi^*,w^*) \in [0.5,1] \times [0,1]$ that solves
\begin{align*}
    G_1(\pi,w) &= 0, \\
    G_2(\pi,w) &= 0.
\end{align*}
\end{proposition}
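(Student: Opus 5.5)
We mirror the existence argument used for the limit game (the Lemma establishing a solution to $F_1=F_2=0$): first solve $G_1=0$ for $w$ as a continuous function of $\pi$, then apply the Intermediate Value Theorem to $G_2$ along that curve. Throughout we work on $\pi\in(\tfrac12,1]$ and $w\in[0,1]$. The denominators $3w(2\pi-1)^2+1$ and $1-w(2\pi-1)^2$ are positive there, except at the corner $w=1,\pi=1$, which we avoid by restricting $\pi\le 1-\delta$ when needed. So $G_1,G_2$ are continuously differentiable on this domain.

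\textbf{Step 1 (a root of $G_1$ in $w$ for each $\pi$).} At $w=0$ both brackets $4w^2(\cdot)-(1-w)^2$ equal $-1$. Hence $G_1(\pi,0)<0$. At $w=1$ the brackets become $4(3\pi^2-3\pi+1)>0$ and $4\pi(1-\pi)\ge 0$. Hence $G_1(\pi,1)>0$ for $\pi<1$. The Intermediate Value Theorem then gives some $w\in(0,1)$ with $G_1(\pi,w)=0$.

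\textbf{Step 2 (a continuous selection $w_1(\pi)$).} We show $\partial G_1/\partial w>0$ on the relevant region. The numerators $4w^2 A-(1-w)^2$ are strictly increasing in $w$. The denominators move in opposite directions: the first increases in $w$, while $[1-w(2\pi-1)^2]^{-2}$ increases. We would therefore differentiate directly. Near a zero of $G_1$ the two summands have opposite signs, so the quotient-rule terms involving derivatives of the denominators can be controlled. If a clean sign argument is elusive, we would fall back on showing $\partial_w G_1>0$ wherever $G_1=0$. That suffices for uniqueness of the root and for the implicit function theorem. With the nonvanishing derivative, \Cref{lemma4} yields a unique continuous $w_1(\pi)$ on $(\tfrac12,1-\delta]$ with $G_1(\pi,w_1(\pi))=0$.

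\textbf{Step 3 (sign change of $J(\pi):=G_2(\pi,w_1(\pi))$).} $J$ is continuous. We evaluate it at two points, as was done for $J(c)$ in the limit game. As $\pi\downarrow\tfrac12$, the terms with $(2\pi-1)^2$ vanish. The first summand of $G_2$ tends to $2\cdot\tfrac34\cdot\tfrac1{16}/(\tfrac14)^2\cdot$(positive)$>0$. The second summand tends to $-(-2\cdot\tfrac14)=+\tfrac12$. A careful limit computation (or a numerical value at, say, $\pi=0.55$) should fix the sign of $J$ near $\tfrac12$. Near $\pi=1$ we compute $J$ numerically at a point such as $\pi=0.95$ and show the opposite sign. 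The Intermediate Value Theorem then gives $\pi^*$ with $J(\pi^*)=0$, and $(\pi^*,w_1(\pi^*))$ solves the system.

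\textbf{Main obstacle.} The main obstacle is Step 2, the sign of $\partial_w G_1$, because the two summands pull in different directions. The endpoint signs of $J$ in Step 3 are secondary; they may require explicit numerical evaluation, as in the paper's own $J(0.2)$, $J(1)$ computation.
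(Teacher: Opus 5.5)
Your three-step skeleton is exactly the paper's: IVT in $w$, then a continuous selection $w_1(\pi)$ via \Cref{lemma4}, then IVT for $J(\pi)=G_2(\pi,w_1(\pi))$. As written, though, it is a plan rather than a proof, because the two facts that carry the content are left open.

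In Step 2 you never establish the sign of $\partial_w G_1$, and the fallback of checking it only on $\{G_1=0\}$ is not argued either. The missing idea is a simple algebraic reduction. With $u=(2\pi-1)^2$ one has $4(3\pi^2-3\pi+1)=1+3u$ and $4\pi(1-\pi)=1-u$. The brackets in $G_1$ then become $3uw^2+2w-1$ and $-uw^2+2w-1$, and the quotient rule collapses to the constant numerators $2(1+3u)$ and $2(1-u)$. This gives
\[
\frac{\partial G_1}{\partial w}=\frac{8(\pi^2-\pi+1)}{[1+3w(2\pi-1)^2]^3}+\frac{24\,\pi^2(1-\pi)^2}{[1-w(2\pi-1)^2]^3}>0,
\]
which is \Cref{lemma5}. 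Each summand is separately increasing in $w$, so the tension between the two denominators that you flagged as the main obstacle does not actually arise.

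In Step 3, your own limits ($\tfrac32+\tfrac12$) already settle the left end. At $\pi=\tfrac12$ all $w$-dependence drops out and $G_2(\tfrac12,w)\equiv 2$, so no further careful limit is needed.

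The right end is the real omission. You defer to a numerical evaluation at $\pi=0.95$ that you never carry out, and you retreat from $\pi=1$ to avoid the singular corner $(1,1)$. That retreat is unnecessary. For $w<1$, $G_1(1,w)=(3w-1)(w+1)/(1+3w)^2$, so $w_1(1)=\tfrac13$, well away from the corner. $G_1$ is smooth near $(1,\tfrac13)$, and $G_2(1,\tfrac13)=\tfrac1{12}-\tfrac34=-\tfrac23<0$.

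These two exact computations are precisely what the paper uses: $H(0.5)=2$ and $H(1)=-2/3$. With them your argument goes through; without them, the existence claim is not yet proved.
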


\begin{proof}
    \textbf{Step 1:}  We show that for any $\pi \in [0.5,1]$, there exists some $w \in [0,1]$ such that $G_1(\pi,w)=0$. 
    
    We observe that for any $\pi \in [0.5,1]$, we have $G_1(\pi,1) > 0$ and $G_1(\pi,0) < 0$. Thus, the Intermediate Value Theorem ensures what we wanted.

\textbf{Step 2:} We show that the equation $G_1(\pi,w) = 0$ implicitly defines $w$ as a continuous function of $\pi$ on $[0.5,1]$. 

    To do this, we first establish a lemma that guarantees the global version of the implicit function theorem\footnote{We believe this result is known, but we could not locate a reliable reference.}, and then verify that its conditions are satisfied.
    
    \begin{lemma}[ \Cref{lemma4}]
    Let $F(x,y)$ be continuously differentiable on $(-a,a) \times (-b,b)$, and suppose that $\frac{\partial F}{\partial y}(x,y) \ne 0$ for all $x \in (-a,a)$. If for any $x_0\in(-a,a)$, there exists some $y_0\in(-b,b)$ such that $F(x_0,y_0)=0$, then there exists a unique differentiable function $\varphi(x)$ defined on $(-a,a)$ such that $F(x, \varphi(x)) = 0$ for all $x$ in that interval.
    \end{lemma}

    \begin{proof}\label{pf:lemma4}
    By the local implicit function theorem, there exists a differentiable function $\varphi_0(x)$ defined on some neighborhood $(-\epsilon_0, \epsilon_0)$ such that $F(x,\varphi_0(x)) = 0$. We extend $\varphi_0$ to the endpoint $x = \epsilon_0$ via $\varphi_0(\epsilon_0) = \lim_{x \to \epsilon_0^-} \varphi_0(x)$.

    Next, apply the local implicit function theorem again at the point $(\epsilon_0, \varphi_0(\epsilon_0))$ to obtain a new function $\varphi_1(x)$ defined on $(\epsilon_0 - \epsilon_1, \epsilon_0 + \epsilon_1)$ satisfying $F(x, \varphi_1(x)) = 0$. By uniqueness, $\varphi_0$ and $\varphi_1$ coincide on their overlapping domain, so we can extend $\varphi_0$ further. Repeating this process inductively, we construct a sequence of extensions $\varphi_n$ defined on intervals expanding to the right.

    If the series $\sum_{i=0}^\infty \epsilon_i$  past $a$, the function is globally defined on $(-a,a)$. If not, let $M = \sup_t \sum_{i=0}^t \epsilon_i < a$. Again, the local implicit function theorem allows us to define a function near $x = M$, which must match the limit of the previous $\varphi_t$, allowing the process to continue. Hence, the function $\varphi(x)$ is well-defined and unique on $(-a,a)$.
    \end{proof}

    \begin{lemma}\label{lemma5}
    For all $(\pi,w) \in [0.5,1] \times [0,1]$, we have $\frac{\partial G_1(\pi,w)}{\partial w} \ne 0$.
    \end{lemma}

    \begin{proof}
    \[
    \frac{\partial G_1(\pi,w)}{\partial w} = \frac{8(\pi^2 - \pi + 1)}{[1 + 3w(2\pi - 1)^2]^3} + \frac{24[\pi(1 - \pi)]^2}{[1 - w(2\pi - 1)^2]^3}.
    \]
    The first term is strictly positive for all $\pi \in [0.5,1]$, $w \in [0,1]$. The second term is non-negative and equals zero only when $\pi = 1$. Therefore, the partial derivative is strictly positive in the domain.
    \end{proof}

    By  \Cref{lemma4} and \Cref{lemma5}, the equation $G_1(\pi,w) = 0$ implicitly defines a continuous function $w = w_1(\pi) \in (0,1)$ with $\pi \in [0.5,1]$.

    \textbf{Step 3:} We show that there exists $\pi^* \in [0.5,1]$ such that $G_2(\pi^*, w_1(\pi^*)) = 0$.

    Define the function $H(\pi) := G_2(\pi, w_1(\pi))$. Since $G_2(\pi,w)$ is a polynomial or polynomial in $\pi$ and $w$, it is continuous, and $w_1(\pi)$ is continuous by Step 2. Hence, $H(\pi)$ is continuous on $[0.5,1]$.

We can compute endpoints:
\[
H(0.5) = G_2(0.5, w_1(0.5)) = G_2(0.5, 0.5) = 2 > 0, 
\]
\[
H(1) = G_2(1, w_1(1)) = G_2(1, 1/3) = -\frac{2}{3} < 0.
\]

By the Intermediate Value Theorem, there exists $\pi^* \in (0.5,1)$ such that $H(\pi^*) = 0$. Therefore, the pair $(\pi^*, w^*) := (\pi^*, w_1(\pi^*))$ is a solution to the system.
\end{proof}

\paragraph{Optimality}
\begin{lemma}
    The $(a_2^*, a_3^*, \pi^*,w^*)$ that solves for the systems of equation forms a saddle point.
\end{lemma}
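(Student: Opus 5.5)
To show that $(a_2^*,a_3^*,\pi^*,w^*)$ is a saddle point, we check the two best-response conditions separately; symmetry lets us work only with $(a_2,a_3)$, using $a_k+a_{3-k}=1$.

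\textbf{DM's best response.} Fix Nature's mixed strategy $w^*$ on $\pi^*$ and $1-w^*$ on $\tfrac12$. The DM's objective $w^*R(\bm a,\pi^*)+(1-w^*)R(\bm a,\tfrac12)$ is a sum of squares in the $a_k$, so it is strictly convex. Its unique minimizer is therefore given by the first-order condition \eqref{FOC}, and \eqref{11}--\eqref{44} were derived from exactly that condition. So $\bm a^*$ is the best response to Nature's mix, and nothing more needs to be shown on this side.

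\textbf{Nature's best response.} Fix $\bm a^*$ and study the one-variable function
\[
h(\pi)=[\pi^3+(1-\pi)^3]\Bigl(\tfrac{\pi^3}{\pi^3+(1-\pi)^3}-a_3^*\Bigr)^2+3\pi(1-\pi)(\pi-a_2^*)^2, \qquad \pi\in[\tfrac12,1].
\]
Following the approach used in the limit game, we would show that $h$ has at most two interior critical points in $(\tfrac12,1)$. To do this, expand
\[
h(\pi)=\frac{\pi^6}{D}-2a_3^*\pi^3+a_3^{*2}D+3\pi(1-\pi)(\pi-a_2^*)^2, \qquad D=\pi^3+(1-\pi)^3=3\pi^2-3\pi+1 .
\]
Multiplying $h'(\pi)$ by $D^2>0$ turns the critical-point equation into a polynomial equation of moderate degree. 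Substituting $u=(2\pi-1)^2\in[0,1]$ and factoring out the trivial root at $\pi=\tfrac12$ that symmetry forces, we would aim to show that the remaining factor has at most two sign changes on $(\tfrac12,1)$. For this we would use Descartes' rule of signs or Sturm sequences with the explicit numeric values of $(\pi^*,w^*,a_2^*,a_3^*)$, computed to enough precision that interval bounds certify the signs.

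With the critical points controlled, the argument mirrors the limit-game proof. We compute $h'(\tfrac12)$ (or its one-sided sign, or the second derivative if the first vanishes) and show that $h$ initially decreases from $\pi=\tfrac12$. The indifference condition $h(\pi^*)=h(\tfrac12)$ then forces a local minimum in $(\tfrac12,\pi^*)$. The local-optimality condition makes $\pi^*$ a critical point, so with at most two critical points $\pi^*$ must be a local maximum and $h$ is decreasing on $(\pi^*,1]$. Hence
\[
\max_{\pi\in[1/2,1]}h(\pi)=h(\tfrac12)=h(\pi^*),
\]
so both support points of Nature's strategy are best responses, and we have a saddle point.

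\textbf{Main obstacle.} The hard step is the critical-point count, because the coefficients of the polynomial depend on the implicitly defined $(\pi^*,w^*)$. We would first try to bound $(\pi^*,w^*)$ within a small rectangle, using the sign changes of $H$ and $G_1$ together with the monotonicity from Lemma \ref{lemma5}. We would then show the coefficient sign pattern is constant on that rectangle. If this fails, the fallback is a direct certified numerical check: evaluate $h'$ on a grid and use a Lipschitz bound on $h''$ between grid points.
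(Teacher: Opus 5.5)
Your overall route is the paper's. You get DM optimality from the first-order condition. For Nature, you bound the number of critical points of $h=R(\bm a^*,\cdot)$ through the degree-7 polynomial $h'(\pi)(3\pi^2-3\pi+1)^2$ (Sturm), then combine the sign of $h'$ at $\pi=\tfrac12$ with the indifference and local-optimality conditions. Two steps in your execution are wrong or incomplete, though.

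\textbf{No root at $\pi=\tfrac12$.} Symmetry does not force a root there, and the substitution $u=(2\pi-1)^2$ is not available. Fix a symmetric rule $a_k+a_{3-k}=1$. Replacing $\pi$ by $1-\pi$ makes the signals misleading, which is equivalent to replacing $a_k$ by $a_{3-k}$; so $h$ is not symmetric about $\tfrac12$. Differentiating your own expansion gives
\[
h'(\tfrac12)=\tfrac32\bigl(1-a_2^*-a_3^*\bigr)<0,
\]
because $a_2^*,a_3^*>\tfrac12$, as the FOC expressions for $\tfrac12-a_2$ and $\tfrac12-a_3$ show. This is exactly the paper's $R'(0.5)<0$, and it makes your hedging about one-sided signs or second derivatives unnecessary. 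Consequently the polynomial is not even in $2\pi-1$ and does not vanish at $\tfrac12$. The root count must be done on the full degree-7 polynomial, as in the paper.

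\textbf{Why $\pi^*$ is a local maximum.} ``At most two critical points'' alone does not give this. Sturm's theorem counts distinct roots, so a double root of $h'$ at $\pi^*$ (with $h'>0$ on both sides) is compatible with your count. In that case $h(1)>h(\pi^*)$ and Nature would deviate. The paper rules this out with the endpoint sign $R'(1)<0$. In your notation,
\[
h'(1)=3\bigl[(1-a_3^*)^2-(1-a_2^*)^2\bigr]<0
\]
because $\tfrac12<a_2^*<a_3^*<1$, and this forces $h'<0$ on $(\pi^*,1]$. Alternatively, count roots with multiplicity.

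With these two repairs your argument coincides with the paper's. The paper only asserts that a Sturm sequence certifies the root count, without exhibiting it. Your plan to certify it via interval bounds on $(\pi^*,w^*)$ or a Lipschitz grid check would actually complete that step.
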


\begin{proof}
Note that the optimality of the for DM's action is already given by the FOC condition \Cref{FOC}. What remains to be shown is that given the action $a_2^*, a_3^*$, $\pi^*$ is indeed optimal for the nature to maximize the regret.

Recall that 
\[
R(\bm{a},\pi)=[\pi^3+(1-\pi)^3](\frac{\pi^3}{\pi^3+(1-\pi)^3}-a_3)^2+3\pi(1-\pi)(\pi-a_2)^2.
\]
We can calculate the higher order derivatives with $\bm a$ fixed:
\begin{align*}
R'(\pi) &= \left( \frac{\pi^3}{3\pi^2 - 3\pi + 1} - a_3 \right) \left[ (6\pi - 3) \left( \frac{\pi^3}{3\pi^2 - 3\pi + 1} - a_3 \right) + \frac{6\pi^2(1-\pi)^2}{3\pi^2 - 3\pi + 1} \right]\\
&+ 3(\pi - a_2)[(1 - 2\pi)(\pi - a_2) + 2\pi(1 - \pi)],
\end{align*}

Note that the $a_2, a_3$ here are the solutions to the system of equations, which by simulation are approximately $$a_2\approx 0.60, \quad a_3\approx 0.86.$$ Without simulation, we at least know $0.5<a_2<a_3<1$. We can verify that $R'(0.5)<0, R'(1)<0$. Hence, if we can show $R'(\pi)$ has at most 2 roots in $\pi\in[1/2,1]$ then $R(\pi)$ must be first decreasing then increasing, and then decreasing in $\pi\in[1/2,1]$. Combining with the indifference condition and local concavity condition, then we are done.

Notice that the expression of $R'(\pi)=0$ can be rewritten as polynomial
\begin{align*}
R'(\pi) \cdot (3\pi^2 - 3\pi + 1)^2 = 
    & (-96)\pi^7 \\
    & + (282 + 162a_2 - 54a_3)\pi^6 \\
    & + (-336 - 432a_2 + 108a_3 - 54a_2^2 + 54a_3^2)\pi^5 \\
    & + (207 + 486a_2 - 90a_3 + 135a_2^2 - 135a_3^2)\pi^4 \\
    & + (-66 - 288a_2 + 36a_3 - 144a_2^2 + 144a_3^2)\pi^3 \\
    & + (9 + 90a_2 - 6a_3 + 81a_2^2 - 81a_3^2)\pi^2 \\
    & + (-12a_2 - 24a_2^2 + 24a_3^2)\pi \\
    & + (3a_2^2 - 3a_3^2).
\end{align*}
Thus, we can use Sturm's theorem to verify there are exactly 2 roots for this polynomial. The statement of Sturm's theorem is as follows: 
\begin{theorem}[Sturm's Theorem]
Let $P(x)$ be a polynomial with real coefficients. Let $P_0(x) = P(x)$, $P_1(x) = P'(x)$, and define the Sturm sequence $P_0(x), P_1(x), \dots, P_m(x)$ using the Euclidean algorithm with sign changes in the remainders:
$$P_{i-1}(x) = Q_i(x) P_i(x) - P_{i+1}(x), \quad \text{for } i = 1, 2, \dots, m-1$$
where $P_{i+1}(x)$ is the negative of the remainder upon division of $P_{i-1}(x)$ by $P_i(x)$, and $P_m(x)$ is the last non-zero polynomial in the sequence.

Assume that $P(x)$ has no multiple roots. Let $w(a)$ be the number of sign changes in the sequence $P_0(a), P_1(a), \dots, P_m(a)$. Then the number of distinct real roots of $P(x)$ in the open interval $(a, b)$ is equal to $w(a) - w(b)$.

If $P(a) \neq 0$ and $P(b) \neq 0$, then $w(a) - w(b)$ also gives the number of distinct real roots in the closed interval $[a, b]$. If $a$ or $b$ is a root of $P(x)$, the theorem still holds for the open interval.
\end{theorem}
\end{proof}
\end{proof}

\end{document}